\documentclass[acmsmall,screen,nonacm]{acmart}
\usepackage{mathpartir}
\usepackage{array}
\usepackage{flafter}
\usepackage{placeins}
\usepackage{longtable}
\usepackage{tikz}
\usetikzlibrary{arrows.meta}

\definecolor{constructorink}{HTML}{24557A}
\definecolor{operationink}{HTML}{27645B}
\newif\ifnotationcolor
\notationcolortrue
\newcommand{\notationink}[2]{\ifnotationcolor\textcolor{#1}{#2}\else#2\fi}
\newcommand{\constructor}[1]{\mathord{\notationink{constructorink}{\mathsf{#1}}}}
\newcommand{\namedop}[1]{\operatorname{\notationink{operationink}{#1}}}

\newcommand{\Fsubtop}{\ensuremath{\mathsf{F}_{<:}^{\top}}}
\newcommand{\Ftop}{\Fsubtop}
\newcommand{\tytop}{\ensuremath{\top}}
\newcommand{\tmconst}{\constructor{Top}}

\newcommand{\sub}[3]{#1\vdash #2<:#3}
\newcommand{\asub}[3]{#1\vdash_{\mathrm a} #2<:#3}
\newcommand{\typing}[4]{#1;#2\vdash #3:#4}
\newcommand{\domof}[1]{\operatorname{dom}(#1)}
\newcommand{\forget}[1]{{#1}^{\circ}}
\newcommand{\subin}[1]{\mathrel{<:}_{#1}}
\newcommand{\Up}[2]{U_{#1}(#2)}
\newcommand{\Down}[2]{D_{#1}(#2)}
\newcommand{\Vars}[2]{V_{#1}(#2)}
\newcommand{\Basis}[3]{\mathcal{B}_{#1;#2}(#3)}
\newcommand{\basissynth}[4]{#1;#2\vdash_{\mathrm b}#3\Rightarrow #4}
\newcommand{\basischeck}[4]{#1;#2\vdash_{\mathrm b}#3\Leftarrow #4}
\newcommand{\canrel}[4]{#1;#2\vdash_{\mathrm c}#3\Rightarrow #4}
\newcommand{\genlang}[1]{\mathcal{L}(#1)}
\newcommand{\typepreorder}[1]{P_{#1}}
\newcommand{\upclosure}[2]{\mathord{\uparrow}_{#1}#2}
\newcommand{\downclosure}[2]{\mathord{\downarrow}_{#1}#2}
\newcommand{\weaksubst}{\iota}
\newcommand{\idsubst}{\namedop{id}}

\newcommand{\Can}{\mathsf{C}}
\newcommand{\Expose}[2]{{#2}^{\sharp}_{#1}}
\newcommand{\canon}[3]{\Can_{#1;#2}(#3)}
\newcommand{\checkfam}[4]{#1;#2\vdash_{\mathrm{bw}}#3\Leftarrow #4}
\newcommand{\garr}[2]{\constructor{Arr}(#1,#2)}
\newcommand{\gall}[2]{\constructor{All}(#1,#2)}
\newcommand{\qry}[4]{E^{#1}_{#2}(#3,#4)}
\newcommand{\invfam}[3]{\mathsf{I}^{#1}_{#2}(#3)}
\newcommand{\invlo}[2]{\invfam{-}{#1}{#2}}
\newcommand{\invup}[2]{\invfam{+}{#1}{#2}}
\newcommand{\invshort}[2]{\mathsf{I}^{#1}(#2)}
\newcommand{\invgate}[2]{J^{#1}(#2)}

\newcommand{\tyweight}[2]{w_{#1}(#2)}
\newcommand{\tysubst}[2]{#1[#2]}

\newcommand{\liftsub}[1]{\namedop{up}(#1)}
\newcommand{\bodysel}[4]{\mathsf{P}^{#2}_{#1,#4}(#3)}

\newcommand{\histcore}[1]{\namedop{core}(#1)}
\newcommand{\histbase}[1]{\namedop{base}(#1)}
\newcommand{\rootholes}[1]{\namedop{holes}(#1)}
\newcommand{\ancestors}[2]{\namedop{anc}_{#1}(#2)}
\newcommand{\birthrecord}[2]{\namedop{record}_{#1}(#2)}
\newcommand{\domainfam}[1]{F^{\mathrm{dom}}_{#1}}
\newcommand{\termnodes}[1]{N_{#1}}
\newcommand{\histcap}[1]{N_{#1}}
\newcommand{\spinesquares}[1]{M_2(#1)}
\newcommand{\inputledger}[2]{\namedop{ledger}(#1,#2)}
\newcommand{\runconst}{c_{\mathrm{run}}}

\newcolumntype{P}[1]{>{\raggedright\arraybackslash}p{#1}}

\newcommand{\All}[3]{\forall #1<:#2.\,#3}
\newcommand{\tabs}[3]{\Lambda #1<:#2.\,#3}
\newcommand{\abs}[3]{\lambda #1:#2.\,#3}

\DeclareRobustCommand{\ruledef}[1]{%
  \hypertarget{rule:#1}{\textnormal{\textsc{#1}}}}
\DeclareRobustCommand{\ruleref}[1]{%
  \hyperlink{rule:#1}{\textnormal{\textsc{#1}}}}
\newtheorem{theorem}{Theorem}[section]
\newtheorem{lemma}[theorem]{Lemma}
\newtheorem{definition}[theorem]{Definition}
\newtheorem{corollary}[theorem]{Corollary}

\setcopyright{none}
\makeatletter
\let\@ACM@checkaffil\relax
\let\hyxmp@parse@acmart\relax
\makeatother
\title[Typechecking without Minimum Types]{Typechecking without Minimum Types: Finite Bases and Backward Checking for \texorpdfstring{\Ftop}{F-sub-top}}
\author{Mingkun Xiao}
\affiliation{\institution{Capital Normal University}}
\email{1230604016@cnu.edu.cn}

\author{Yixuan Sun}
\affiliation{\institution{Kean University}}
\email{sunyixua@kean.edu}

\begin{document}

\begin{abstract}
System \Ftop{} has decidable subtyping but lacks minimum types. We prove
that typechecking is decidable for arbitrary explicitly annotated terms
with unrestricted subsumption. The difficulty is that type abstraction
uses a variable's declared bound to type its body, while quantified
subtyping forgets that bound. We construct finite covers that preserve
all required subtype comparisons across this change of context. These
covers yield a compositional finite typing basis and a decision procedure.

We then derive a backward checker that propagates acceptable targets
without constructing the complete typing basis. Rectangularity of typing
families gives a uniform elimination domain while retaining every result type, even when
the canonical type is not least. Finite inverse families transport
targets through type substitution. The same inverse specification
recovers bound forgetting and yields finite type-avoidance families.
Both typechecking procedures are total Rocq functions proved equivalent
to the original typing judgment. For a separate symbolic word-RAM
realization, we prove a quartic worst-case work bound. We measure family
representation sizes and comparison storage in independent OCaml prototypes
on synthetic checking judgments.
\end{abstract}

\ccsdesc[500]{Theory of computation~Type theory}
\keywords{bounded quantification, subtyping, decidability, finite type bases,
  inverse substitution, type avoidance}
\maketitle

\section{Introduction}
\label{sec:introduction}

Decidable subtyping does not by itself provide a typechecking algorithm.
A subtype checker compares two given types. Typechecking must also choose
types for intermediate expressions, even when all parameter types and
type arguments are explicit. Subsumption permits replacing an assigned
type by a supertype before the next typing rule is applied. A computable
\emph{minimum type}, an assignable type below every other assignable type,
controls these choices: compute that type, then compare it with the target
\cite{CurienGhelli1992}.

For System \Ftop{}, Castagna and Pierce establish decidable subtyping,
but their corrigendum exhibits terms without a minimum type and leaves
general typechecking open \cite{CastagnaPierce1994,CastagnaPierce1995}.
Laird later proves decidability for $\beta$-normal terms of the original
system \cite[Corollary~8.6]{Laird2023}. A normal form may have a type that
the original term lacks, so normalizing the input does not extend this
result to arbitrary terms (Section~\ref{sec:original-terms}).

We prove that typechecking is decidable for arbitrary explicitly annotated
terms of the original \Ftop{} calculus. We compute a finite collection of
assignable types such that every assignable target is above at least one member.
We then derive a backward checker that uses the requested target to avoid
constructing the complete collection.

\paragraph{Why one type is not enough.}
In \Ftop{}, a type abstraction's body can use the declared bound of its
type variable. Comparing two quantified types instead treats that variable
as having bound $\tytop$. To see what this changes, take an outer
variable $\beta<:\tytop$ and the term
\[
  \tabs{\alpha}{\beta}{\abs{x}{\alpha}{x}}.
\]
The inner function can return $x$ at type $\alpha$ or, using the bound
$\alpha<:\beta$, at type $\beta$. Thus the term has both types
$\All{\alpha}{\beta}{\alpha\to \alpha}$ and
$\All{\alpha}{\beta}{\alpha\to \beta}$. Comparing these quantified types
forgets the bound that related their bodies. Ghelli's example rules out
a common subtype of the two types in the outer context
\cite{CastagnaPierce1995}, so no minimum type can cover both.
Section~\ref{sec:obstacles} gives the inversion argument.

\paragraph{Decidability from finite representatives.}
Keeping both body types preserves the two typings in this example.
For a general abstraction, we need enough assignable body types that,
after its bound is forgotten for comparison, every type of the abstraction
remains above one of the resulting quantified candidates. The corrigendum
suggests using finite sets of minimal types; what is needed is a
construction that covers all the choices subsumption permits.

We prove that the required finite covers can be computed for every
well-formed type and context. They preserve the necessary comparisons
after bound forgetting. For instance, to cover a variable's supertypes,
we retain the variable and representatives obtained from its bound.
The full construction combines these alternatives according to the type
constructors. Applying it at type abstraction
lets us compute a finite collection for each term from the collections
for its subterms. We call the result a \emph{typing basis}.
Write $\Delta$ for type-variable bounds and $\Gamma$ for term-variable
typings. For a term $s$ in these contexts, the computed basis satisfies
\begin{equation}
  \typing{\Delta}{\Gamma}{s}{T}
  \quad\Longleftrightarrow\quad
  \exists B\in\Basis{\Delta}{\Gamma}{s}.\ \sub{\Delta}{B}{T}.
  \label{eq:typing-basis}
\end{equation}
Every candidate is assignable to $s$, and an empty basis records failure
to type the term. Checking a requested target therefore requires only
finitely many subtype tests. The candidates need not be minimal.

\paragraph{Checking from a target.}
A finite basis establishes decidability, but its construction must retain
enough candidates to cover every assignable type. Constructing and
combining these candidates can require substantial work, even when a
checking request asks about only one target. This motivates a backward
checker that uses the requested target to determine which requirements
to pass to subterms, without first constructing the complete basis.

At an application $f\,u$, the target $T$ specifies the required result
type but leaves the parameter type undetermined. We must choose a type
$A$ such that $f$ can be checked at $A\to T$ and $u$ at $A$.
The challenge is to make this choice without losing any valid typing.

We show that whenever $f$ can be used as a function, one computable
parameter type is sufficient. It is a supertype of every admissible
parameter type. Any result type available at one of those parameter types
remains available at the chosen one.
The checker can therefore fix the argument check while leaving all result
choices available. We call this separation \emph{rectangularity}.
It holds even for functions without a minimum type. At type application,
a corresponding property fixes the bound against which to check the
explicit type argument.

At a type application, the desired result constrains the body of the
function's quantified type: after the explicit type argument is
substituted, that body must be a subtype of the target. We represent
these acceptable body types by a finite cover, allowing the requirement
to be passed back to the function. We call this operation \emph{finite
inverse substitution}. Together, the fixed parameter types and these
finite covers give a complete checker by recursion on the original term.

The inverse construction also recovers the bound-forgetting covers used
in the decidability proof. Applied to a change of variable scope, it
computes finite covers of the supertypes and subtypes expressible without
a local type variable. Section~\ref{sec:method-scope} develops this
type-avoidance result and the requirements for transferring the method
to another calculus.

The contributions are:
\begin{itemize}
  \item \emph{Decidable typechecking without minimum types.}
    We construct effective finite covers under bound forgetting and use them
    to compute a complete typing basis. This decides the original judgment
    for arbitrary explicitly annotated terms, including $\beta$-redexes
    and unrestricted subsumption
    (Sections~\ref{sec:cones} and~\ref{sec:typing-bases}).
  \item \emph{Complete backward checking by inverse substitution.}
    We prove rectangularity of typing families and exact finite inverse
    substitution, yielding a complete checker that propagates targets
    without constructing the full typing basis (Section~\ref{sec:backward}).
    Shared representations support queries without enumeration
    (Section~\ref{sec:symbolic-implementation}); a separate word-RAM realization
    has a quartic charged-work bound (Appendix~\ref{app:cost}).
\end{itemize}
The two semantic checkers are total Rocq functions proved equivalent to the
original declarative judgment. Section~\ref{sec:mechanization} describes
their verification and evaluates handwritten OCaml prototypes on complete
synthetic checking judgments. The experiments examine symbolic family sizes
and comparison storage.

\section{The System \texorpdfstring{\Ftop}{F-sub-top} Calculus}
\label{sec:calculus}

\subsection{Syntax and rules}

We use the subtype rules of Section~2 and the typing rules of Appendix~A.1
of Castagna and Pierce's extended version dated January~10, 1994
\cite{CastagnaPierce1994}, with \ruleref{All-Top} as the quantified subtyping rule.
We use lowercase Greek letters $\alpha,\beta,\gamma$ for type variables,
Latin capitals $A,B,C$ for arbitrary types, and lowercase Latin letters
$x,y,z$ for term variables. The syntax is
\begin{align*}
  A,B &::= \alpha\mid\tytop\mid A\to B\mid\All{\alpha}{A}{B},\\
  s,t &::= x\mid\tmconst\mid\abs{x}{A}{s}\mid s\,t
       \mid\tabs{\alpha}{A}{s}\mid s\{A\}.
\end{align*}
Type and term contexts are
\[
 \Delta::=\varnothing\mid\Delta,\alpha<:A,
 \qquad
 \Gamma::=\varnothing\mid\Gamma,x:A.
\]
Contexts use distinct names; bounds in $\Delta$ refer only to earlier
entries, and types in $\Gamma$ are scoped over $\Delta$. Type formation
is structural and depends only on scope. We use standard lookup notation, 
$\alpha$-equivalence and capture-avoiding substitution $B[R/\alpha]$.
Weakening of named types and term contexts is implicit.
All rules below operate on well-formed contexts and types.
We abbreviate $\sub{\Delta}{A}{B}$ by $A\subin{\Delta} B$ when convenient.

\begin{figure}[htbp]
\begin{mathpar}
  \inferrule*[right={\ruledef{Refl}}]{ }{\sub{\Delta}{A}{A}}
  \and
  \inferrule*[right={\ruledef{Trans}}]{\sub{\Delta}{A}{B}\\\sub{\Delta}{B}{C}}
    {\sub{\Delta}{A}{C}}
  \and
  \inferrule*[right={\ruledef{Bound}}]{\alpha<:A\in\Delta}{\sub{\Delta}{\alpha}{A}}
  \and
  \inferrule*[right={\ruledef{Top}}]{ }{\sub{\Delta}{A}{\tytop}}
  \and
  \inferrule*[right={\ruledef{Arrow}}]{\sub{\Delta}{C}{A}\\\sub{\Delta}{B}{D}}
    {\sub{\Delta}{A\to B}{C\to D}}
  \and
  \inferrule*[right={\ruledef{All-Top}}]
    {\sub{\Delta}{C}{A}\\\sub{\Delta,\alpha<:\tytop}{B}{D}}
    {\sub{\Delta}{\All{\alpha}{A}{B}}{\All{\alpha}{C}{D}}}
\end{mathpar}
\caption{Declarative subtyping. Formation of all displayed judgments is implicit.}
\Description{The six subtyping rules are reflexivity, transitivity, variable
bounds, the top type, contravariant arrow domains, and universal comparison
with a variable bounded by the top type in the body premise.}
\label{fig:subtyping}
\end{figure}

\begin{figure}[htbp]
\begin{mathpar}
  \inferrule*[right={\ruledef{Var}}]{x:A\in\Gamma}{\typing{\Delta}{\Gamma}{x}{A}}
  \and
  \inferrule*[right={\ruledef{Const}}]{ }{\typing{\Delta}{\Gamma}{\tmconst}{\tytop}}
  \and
  \inferrule*[right={\ruledef{Abs}}]{\typing{\Delta}{\Gamma,x:A}{s}{B}}
    {\typing{\Delta}{\Gamma}{\abs{x}{A}{s}}{A\to B}}
  \and
  \inferrule*[right={\ruledef{App}}]{\typing{\Delta}{\Gamma}{f}{A\to B}\\
    \typing{\Delta}{\Gamma}{u}{A}}{\typing{\Delta}{\Gamma}{f\,u}{B}}
  \and
  \inferrule*[right={\ruledef{Type-Abs}}]{\typing{\Delta,\alpha<:A}{\Gamma}{s}{B}}
    {\typing{\Delta}{\Gamma}{\tabs{\alpha}{A}{s}}{\All{\alpha}{A}{B}}}
  \and
  \inferrule*[right={\ruledef{Type-App}}]{\typing{\Delta}{\Gamma}{f}{\All{\alpha}{A}{B}}\\
    \sub{\Delta}{R}{A}}{\typing{\Delta}{\Gamma}{f\{R\}}{B[R/\alpha]}}
  \and
  \inferrule*[right={\ruledef{Sub}}]{\typing{\Delta}{\Gamma}{s}{A}\\\sub{\Delta}{A}{B}}
    {\typing{\Delta}{\Gamma}{s}{B}}
\end{mathpar}
\caption{Typing with unrestricted subsumption. \ruleref{Type-Abs} uses the actual
  annotation $A$, whereas \ruleref{All-Top} compares bodies with bound $\tytop$.}
\Description{The seven typing rules cover variables, the term constant Top,
function and type abstraction, their two application forms, and unrestricted
subsumption. Type abstraction extends the context with the declared bound.}
\label{fig:typing}
\end{figure}

Figures~\ref{fig:subtyping} and~\ref{fig:typing} define the judgment to be
decided. Subsumption may occur at any point in a derivation. The checker
receives $\Delta,\Gamma,s,T$ and decides whether
$\typing{\Delta}{\Gamma}{s}{T}$ holds. Type inference for unannotated terms
is outside this specification. A raw term follows the displayed grammar
but may contain unbound variables or ill-scoped annotations.

\subsection{Failure of minimum typing}
\label{sec:obstacles}

Return to Ghelli's example from Section~\ref{sec:introduction}
\cite{CastagnaPierce1995}. With $\Delta=(\beta<:\tytop)$, write
\[
  e=\tabs{\alpha}{\beta}{\abs{x}{\alpha}{x}}.
\]
Denote the two types from the introduction by
\[
  T_1=\All{\alpha}{\beta}{\alpha\to \alpha},\qquad
  T_2=\All{\alpha}{\beta}{\alpha\to \beta}.
\]
A minimum, or least, type for $e$ would have to be a common subtype of
$T_1$ and $T_2$. Subtype inversion (Section~\ref{sec:subtyping}) rules out
$\tytop$ and arrows as subtypes of a universal. The only free type variable,
$\beta$, cannot be one either, because its bound is $\tytop$.
A common subtype must therefore be universal. Comparing its body with
both target bodies uses $\beta<:\tytop,\alpha<:\tytop$; inversion then
forces an arrow body whose result lies below both $\alpha$ and $\beta$.
But the only subtype of $\alpha$ in that context is $\alpha$, and the
only subtype of $\beta$ is $\beta$. Hence no common subtype exists,
and $e$ has no minimum type.

For checking, the obstruction is the order of \ruleref{Sub} and
\ruleref{Type-Abs}. A derivation of $T_2$ first applies subsumption to
the body $\abs{x}{\alpha}{x}$, using the actual bound:
\[
 \sub{\Delta,\alpha<:\beta}{\alpha\to\alpha}{\alpha\to\beta}.
\]
The body therefore has type $\alpha\to\beta$. Applying \ruleref{Type-Abs} then
gives $\typing{\Delta}{\varnothing}{e}{T_2}$.

Alternatively, apply \ruleref{Type-Abs} directly to the body's type
$\alpha\to\alpha$, obtaining $\typing{\Delta}{\varnothing}{e}{T_1}$.
A final use of \ruleref{Sub} could reach $T_2$ only if $\sub{\Delta}{T_1}{T_2}$.
\ruleref{All-Top} compares the bodies under $\alpha<:\tytop$, where
\[
 \Delta,\alpha<:\tytop\nvdash\alpha\to\alpha<:\alpha\to\beta.
\]
This comparison would require $\alpha<:\beta$, which is unavailable in
that context. Thus the second route fails: subsumption inside the body
cannot always be postponed until after type abstraction.

The abstraction step must therefore account for typings obtained under
the actual bound before using the weaker comparison context. Keeping
both body types $\alpha\to\alpha$ and $\alpha\to\beta$ repairs this example.
For a general abstraction, each retained body type needs finite
alternatives that are assignable under the actual bound and cover its
supertypes under the comparison context with bound $\tytop$.
Section~\ref{sec:cones} formulates these two requirements and constructs
finite families satisfying them.

\subsection{Decidable subtyping and exposure}
\label{sec:subtyping}

We decide subtype comparisons using Castagna and Pierce's algorithmic relation
\cite[Section~4.1]{CastagnaPierce1994}, written
$\asub{\Delta}{A}{B}$. Figure~\ref{fig:algorithmic-subtyping} gives its
five rules in our notation.

\begin{figure}[htbp]
\begin{mathpar}
  \inferrule*[right={\ruledef{A-Top}}]{ }{\asub{\Delta}{A}{\tytop}}
  \and
  \inferrule*[right={\ruledef{A-Refl}}]{ }{\asub{\Delta}{\alpha}{\alpha}}
  \and
  \inferrule*[right={\ruledef{A-Promote}}]
    {\alpha<:A\in\Delta\\\asub{\Delta}{A}{B}}
    {\asub{\Delta}{\alpha}{B}}
  \and
  \inferrule*[right={\ruledef{A-Arrow}}]
    {\asub{\Delta}{C}{A}\\\asub{\Delta}{B}{D}}
    {\asub{\Delta}{A\to B}{C\to D}}
  \and
  \inferrule*[right={\ruledef{A-All-Top}}]
    {\asub{\Delta}{C}{A}\\\asub{\Delta,\alpha<:\tytop}{B}{D}}
    {\asub{\Delta}{\All{\alpha}{A}{B}}{\All{\alpha}{C}{D}}}
\end{mathpar}
\caption{Algorithmic subtyping, following Castagna and Pierce
  \cite[Section~4.1]{CastagnaPierce1994}. All judgments are well formed,
  and the binder in \ruleref{A-All-Top} is fresh.}
\Description{Five rules decide comparison with the top type, variable reflexivity,
left-variable promotion, arrows, and universal types. Promotion uses the
declared bound as its fixed intermediate type. Universal bodies are compared
with a variable bounded by the top type.}
\label{fig:algorithmic-subtyping}
\end{figure}

Compared with Figure~\ref{fig:subtyping}, reflexivity is restricted to
variables and general transitivity is absent. \ruleref{A-Promote} combines \ruleref{Bound}
with a \ruleref{Trans} step whose intermediate type is fixed by the context.
The premises of \ruleref{Arrow} and \ruleref{All-Top} are unchanged.
To execute the rules, first test \ruleref{A-Top} and \ruleref{A-Refl}. Otherwise, promote a
left variable or decompose matching constructors. If no rule applies,
the comparison fails.

The difference from the algorithm for full $\mathsf F_{<:}$ is confined
to the quantified body. For source $\All{\alpha}{A}{B}$ and target
$\All{\alpha}{C}{D}$, both algorithms compare $C$ with $A$. They compare
the bodies $B$ and $D$ in the respective contexts
\[
 \begin{array}{c@{\qquad\qquad}c}
   \text{full }\mathsf F_{<:} & \Ftop\\[1mm]
   \Delta,\alpha<:C & \Delta,\alpha<:\tytop.
 \end{array}
\]
In particular, \ruleref{A-All-Top} uses neither quantifier's declared bound in the
body comparison.

In the cited extended version, Castagna and Pierce prove soundness and
completeness \cite[Theorem~4.4]{CastagnaPierce1994}: for well-formed inputs,
\begin{equation}
 \asub{\Delta}{A}{B}\iff\sub{\Delta}{A}{B}.
 \label{eq:subtyping-equivalence}
\end{equation}
Their termination result \cite[Theorem~4.7 and Corollary~4.8]{CastagnaPierce1994}
makes this a decision procedure.

\paragraph{Exposure.}
Application requires an arrow head, and type application requires a
universal head. A type variable can provide either through its bound.
We use the standard bound-following operation, also described by
Laird \cite[Lemma~6.3]{Laird2023}. For well-formed $\Delta,A$, write
$\Expose{\Delta}{A}$ for the resulting type in the same scope, whose root
is $\tytop$, an arrow, or a universal:
\begin{align*}
 \Expose{\Delta}{\alpha}&=\Expose{\Delta}{\Delta(\alpha)},&
 \Expose{\Delta}{\tytop}&=\tytop,\\
 \Expose{\Delta}{(A\to B)}&=A\to B,&
 \Expose{\Delta}{(\All{\alpha}{A}{B})}&=\All{\alpha}{A}{B}.
\end{align*}
It stops as soon as the outer constructor is known, without traversing its
components. Ordered bounds make the variable case terminate. For example,
if $\Delta=\beta<:\tytop\to\tytop,\alpha<:\beta$, then
$\Expose{\Delta}{\alpha}=\tytop\to\tytop$.

Exposure is deterministic and satisfies
$\sub{\Delta}{M}{\Expose{\Delta}{M}}$. For compound targets, algorithmic
inversion gives the interfaces used at elimination:
\begin{align}
 M\subin{\Delta}C\to D
 &\iff \exists A_0,B_0.\ \Expose{\Delta}{M}=A_0\to B_0
       \notag\\[-1mm]
 &\hspace{20mm}{}\land C\subin{\Delta}A_0
                   \land B_0\subin{\Delta}D,
       \label{eq:expose-arrow}\\
 M\subin{\Delta}\All{\alpha}{C}{D}
 &\iff \exists A_0,B_0.\ \Expose{\Delta}{M}=\All{\alpha}{A_0}{B_0}
       \notag\\[-1mm]
 &\hspace{20mm}{}\land C\subin{\Delta}A_0
                   \land B_0\subin{\Delta,\alpha<:\tytop}D.
       \label{eq:expose-all}
\end{align}
Thus an exposed arrow or universal retains every comparison needed by its
elimination rule, even when the original type is a variable.
Type abstraction still requires a
way to retain all body typings when the comparison context changes from
$\alpha<:A$ to $\alpha<:\tytop$.

\section{Finite Covers under Bound Forgetting}
\label{sec:cones}

Subsumption inside a type abstraction cannot always be postponed until
after the abstraction (Section~\ref{sec:obstacles}). We construct a finite
set of alternatives for one assignable body type that covers its supertypes
after bound forgetting. Section~\ref{sec:typing-bases} uses this construction
to compute complete typing bases by recursion on terms.

\subsection{Coverage at type abstraction}
\label{sec:abstraction-coverage}

For a fresh binder $\alpha$, we use the following abbreviations throughout:
\[
 \Delta_A=\Delta,\alpha<:A,\qquad
 \Delta_\top=\Delta,\alpha<:\tytop.
\]
The surrounding expression determines the binder $\alpha$.
Consider $\tabs{\alpha}{A}{s}$ whose body has a type $B$:
\[
 \typing{\Delta_A}{\Gamma}{s}{B}.
\]
Here $B$ is one assignable type, with no assumption that it is least.
For any $R$ satisfying $B\subin{\Delta_A}R$, \ruleref{Sub} assigns $R$
to the body, and \ruleref{Type-Abs} gives the abstraction type
$\All{\alpha}{A}{R}$. If we apply \ruleref{Type-Abs} at $B$ first,
reaching that same target by subsumption requires
$B\subin{\Delta_\top}R$. As Section~\ref{sec:obstacles} shows, the known
comparison under $\Delta_A$ does not imply this comparison under
$\Delta_\top$.

To account for all such $R$, we seek a finite set $F_B$ of types $C$,
computed from $\Delta_A$ and the single type $B$, independently of $R$.
Each member must be well scoped in
$\Delta_A$, and the set must satisfy two requirements:
\begin{align}
 C\in F_B
   &\Longrightarrow B\subin{\Delta_A}C,
     \label{eq:tabs-cover-sound}\\
 B\subin{\Delta_A}R
   &\Longrightarrow
     \exists C\in F_B.\ C\subin{\Delta_\top}R.
     \label{eq:tabs-cover-complete}
\end{align}
The first requirement makes every $C$ assignable to the body by
\ruleref{Sub}. Applying \ruleref{Type-Abs} to each choice gives finitely
many types $\All{\alpha}{A}{C}$. The second requirement ensures that, for
every $R$ above $B$ in $\Delta_A$, one of these quantified types lies below
$\All{\alpha}{A}{R}$ by \ruleref{All-Top}. Any subsequent subsumption on
the quantified type remains available.
We call such a set an \emph{upper cover} of $B$ across the change from
$\Delta_A$ to $\Delta_\top$.
The requirements involve only types and bounds, so we can construct this
operation independently of the body term.

For Ghelli's example, $\Delta=\beta<:\tytop$,
$A=\beta$, and the body $\abs{x}{\alpha}{x}$ has type $B=\alpha\to\alpha$.
Take
\[
 F_B=\{\alpha\to\alpha,\alpha\to\beta\}.
\]
Both members are assignable under $\alpha<:\beta$. Applying
\ruleref{Type-Abs} to them gives the two quantified types of
Section~\ref{sec:obstacles}.
Each retained body matches its corresponding target by reflexivity
under $\alpha<:\tytop$.
Section~\ref{sec:backward} returns to the same target through an explicit
type application. There the inverse construction uses a finite family to
answer one query, rather than computing the abstraction's whole basis.

\subsection{Principal and finite covers}

Type abstraction changes only the new variable's bound. For a uniform
recursive construction, we first solve the stronger problem of forgetting
all ambient bounds. For any context $\Theta$, let $\forget{\Theta}$
replace its bounds by \tytop, preserving the variable domain and leaving
bounds inside types unchanged.

Restoring forgotten bounds preserves subtype comparisons. A use of
$\alpha<:\tytop$ can always be justified in the restored context, and \ruleref{All-Top}
extends both comparison contexts with the same fresh variable bounded by $\tytop$.
Induction on a subtype derivation therefore gives, in particular,
\begin{equation}
 \sub{\forget{\Delta_A}}{S}{T}
 \ \Longrightarrow\ \sub{\Delta_\top}{S}{T}
 \ \Longrightarrow\ \sub{\Delta_A}{S}{T}.
 \label{eq:relative-contexts}
\end{equation}
A family that provides coverage in $\forget{\Delta_A}$ consequently
provides coverage in $\Delta_\top$ as well. We can construct it by full
forgetting and then restore the outer bounds that type abstraction retains.

For a well-formed context $\Delta$, its well-scoped types form a preorder
$\typepreorder{\Delta}$ under subtyping. The \emph{principal upper and
lower cones} of a type $A$ are
\[
 \upclosure{\Delta}{A}=\{T:A\subin{\Delta}T\},
 \qquad
 \downclosure{\Delta}{A}=\{S:S\subin{\Delta}A\}.
\]
For a set $\mathcal S$ of types, extend these operations by union:
\[
 \upclosure{\Delta}{\mathcal S}
   =\bigcup_{A\in\mathcal S}\upclosure{\Delta}{A},
 \qquad
 \downclosure{\Delta}{\mathcal S}
   =\bigcup_{A\in\mathcal S}\downclosure{\Delta}{A}.
\]
Thus $\upclosure{\Delta}{\{A\}}=\upclosure{\Delta}{A}$, and similarly
for lower cones. When $\mathcal S$ is finite, these sets are \emph{finitely
generated}; the sets themselves may be infinite.
An upper cover expresses an old upper cone as a finite union of upper
cones in the weaker context.

For $\Delta=\beta<:\tytop$ and $A=\beta$ as above, the supertypes of $\alpha\to \alpha$ in $\Delta_A$ are
exactly $\tytop$, $\alpha\to \alpha$, $\alpha\to \beta$, and $\alpha\to\tytop$.
The arrow rules give
\[
 \upclosure{\Delta_A}{(\alpha\to \alpha)}
 =\upclosure{\Delta_\top}{\{\alpha\to \alpha,\alpha\to \beta\}}.
\]
In Figure~\ref{fig:cover-order}, the types form a chain under the actual
bound. In $\Delta_\top$,
$\alpha\to\alpha$ and $\alpha\to\beta$ become incomparable, while both
remain below $\alpha\to\tytop$.

\begin{figure}[t]
\centering
\begin{tikzpicture}[
  >=Latex,
  every node/.style={font=\small},
  generator/.style={draw,rounded corners=2pt,inner sep=3pt,
                    fill=black!4}]
 \node at (-2.6,3.25) {Actual bound $\Delta_A$};
 \node at (2.6,3.25) {Comparison bound $\Delta_\top$};
 \node[generator] (a0) at (-2.6,0) {$\alpha\to\alpha$};
 \node (a1) at (-2.6,.85) {$\alpha\to\beta$};
 \node (a2) at (-2.6,1.7) {$\alpha\to\tytop$};
 \node (a3) at (-2.6,2.55) {$\tytop$};
 \draw[->] (a0)--(a1);
 \draw[->] (a1)--(a2);
 \draw[->] (a2)--(a3);
 \node[generator] (b0) at (1.35,0) {$\alpha\to\alpha$};
 \node[generator] (b1) at (3.85,0) {$\alpha\to\beta$};
 \node (b2) at (2.6,1.7) {$\alpha\to\tytop$};
 \node (b3) at (2.6,2.55) {$\tytop$};
 \draw[->] (b0)--(b2);
 \draw[->] (b1)--(b2);
 \draw[->] (b2)--(b3);
 \draw[densely dotted] (0,-.35)--(0,3.45);
\end{tikzpicture}
\caption{The principal upper cone of $\alpha\to\alpha$ under
$\alpha<:\beta$ and its finite cover under $\alpha<:\tytop$, with
$\Delta=(\beta<:\tytop)$. Arrows point toward immediate supertypes.
The outlined nodes generate the same four-element upper set in their
respective contexts: one generator on the left, two on the right.}
\Description{Two subtype-order diagrams. The actual-bound diagram is a
four-type chain beginning at alpha to alpha. The diagram with the top type
as bound is a diamond with incomparable bottom nodes alpha to alpha and
alpha to beta.
The lower nodes generating each upper set are outlined.}
\label{fig:cover-order}
\end{figure}

The left side is principal in $\Delta_A$. It has no single generator
in $\Delta_\top$, where $\alpha\to \alpha$ and $\alpha\to \beta$ have no common subtype
by the inversion argument of Section~\ref{sec:obstacles}.

Upper covers solve the obligation at type abstraction. Their construction
also needs lower covers: an arrow comparison
$A\to B<:C\to D$ requires $C<:A$ in its contravariant domain.
We therefore establish both directions together.

\begin{theorem}[Effective finite cone covers]
\label{thm:cones}
For each well-formed $\Delta,A$, one can compute finite families
$\Up{\Delta}{A}$ and $\Down{\Delta}{A}$ of types well formed in $\Delta$
such that
\begin{align}
  \upclosure{\Delta}{A}
    &=\upclosure{\forget{\Delta}}{\Up{\Delta}{A}},
      \label{eq:upper}\\
  \downclosure{\Delta}{A}
    &=\downclosure{\forget{\Delta}}{\Down{\Delta}{A}}.
      \label{eq:lower}
\end{align}
\end{theorem}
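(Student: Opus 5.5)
We define $\Up{\Delta}{A}$ and $\Down{\Delta}{A}$ by a single mutual recursion. Each clause reads off how a comparison against $A$ can be derived in the algorithmic relation of Figure~\ref{fig:algorithmic-subtyping}. Because of \eqref{eq:subtyping-equivalence} we may work with $\vdash_{\mathrm a}$ throughout, where there is no transitivity to reason about. One further fact gets used repeatedly. Restoring forgotten bounds is sound, i.e. $\sub{\forget{\Delta}}{S}{T}$ implies $\sub{\Delta}{S}{T}$; this is the first implication of \eqref{eq:relative-contexts}, proved the same way for arbitrary $\Delta$. Also $\forget{(\Delta,\alpha<:\tytop)}=\forget{\Delta},\alpha<:\tytop$. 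In $\forget{\Delta}$ a variable has only itself and $\tytop$ above it, and only itself below it.

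The upper clauses, with $A$ as the source, are:
\begin{itemize}
\item $\Up{\Delta}{\tytop}=\{\tytop\}$.
\item $\Up{\Delta}{\alpha}=\{\alpha\}\cup\Up{\Delta}{\Delta(\alpha)}$. Only \ruleref{A-Refl}, \ruleref{A-Top} and \ruleref{A-Promote} apply to a variable source.
\item $\Up{\Delta}{A_1\to A_2}=\{\tytop\}\cup\{C\to D : C\in\Down{\Delta}{A_1},\ D\in\Up{\Delta}{A_2}\}$.
\item $\Up{\Delta}{\All{\alpha}{A_1}{A_2}}=\{\tytop\}\cup\{\All{\alpha}{C}{D}: C\in\Down{\Delta}{A_1},\ D\in\Up{\Delta,\alpha<:\tytop}{A_2}\}$.
\end{itemize}
The lower clauses, with $A$ as the target, are:
\begin{itemize}
\item $\Down{\Delta}{\tytop}=\{\tytop\}$. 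Its forgotten lower cone is already every type.
\item $\Down{\Delta}{\alpha}$ is the finite set of variables $\beta$ whose bound chain in $\Delta$ passes through $\alpha$, including $\alpha$ itself. A non-variable never lies below a variable.
\item $\Down{\Delta}{A_1\to A_2}=\{C\to D: C\in\Up{\Delta}{A_1},\ D\in\Down{\Delta}{A_2}\}\cup\{\beta\in\domof{\Delta}:\sub{\Delta}{\Delta(\beta)}{A_1\to A_2}\}$, and analogously for a universal target, using $\Up{\Delta}{A_1}$ and $\Down{\Delta,\alpha<:\tytop}{A_2}$.
\end{itemize}
The variable summands are needed because in $\forget{\Delta}$ a variable $\beta$ lies below nothing but itself and $\tytop$. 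They are computable because subtyping is decidable. All members are well formed in $\Delta$ by construction.

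Correctness is proved by induction along the same recursion. As a representative case, take the arrow case of \eqref{eq:upper}. For $\supseteq$, suppose $C\to D\subin{\forget{\Delta}}T$ with $C\in\Down{\Delta}{A_1}$ and $D\in\Up{\Delta}{A_2}$. Restoration gives $C\to D\subin{\Delta}T$. The induction hypotheses, applied with reflexivity, give $C\subin{\Delta}A_1$ and $A_2\subin{\Delta}D$, so $A_1\to A_2\subin{\Delta}C\to D\subin{\Delta}T$. For $\subseteq$, suppose $A_1\to A_2\subin{\Delta}T$. Algorithmic inversion gives either $T=\tytop$, or $T=C'\to D'$ with $C'\subin{\Delta}A_1$ and $A_2\subin{\Delta}D'$. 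By induction, $C'\subin{\forget{\Delta}}C$ for some $C\in\Down{\Delta}{A_1}$, and $D\subin{\forget{\Delta}}D'$ for some $D\in\Up{\Delta}{A_2}$. Then \ruleref{Arrow} in $\forget{\Delta}$ gives the required generator.

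The universal case is identical except that the body is handled under $\Delta,\alpha<:\tytop$. Here it matters that \ruleref{A-All-Top} uses the bound $\tytop$ on both sides, so the forgotten and original body contexts coincide. The variable cases of \eqref{eq:lower} reduce to $\sub{\Delta}{\beta}{A}$ iff $\beta=A$ or $\sub{\Delta}{\Delta(\beta)}{A}$ (\ruleref{A-Promote}), together with $\downclosure{\forget{\Delta}}{\beta}=\{\beta\}$. In the arrow and universal targets this sorts $S$ into its variable and non-variable shapes.

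The step I expect to be the main obstacle is termination, which is what makes the families finite and computable. $\Up{\Delta}{\alpha}$ recurses on the bound $\Delta(\alpha)$, which may be larger than $\alpha$, and the quantified clauses extend the context. I would use the weight $\tyweight{\Delta}{\cdot}$ with $\tyweight{\Delta}{\alpha}=1+\tyweight{\Delta}{\Delta(\alpha)}$ and $\tyweight{\Delta}{\tytop}=1$, additive plus one on constructors. This is well defined because bounds refer only to earlier entries. The extended body context assigns the new variable weight $2$, so every recursive call has strictly smaller weight and the mutual recursion is well founded. A secondary point to check carefully is that the restoration lemma holds for full forgetting inside arbitrary extended contexts, since every $\supseteq$ argument relies on it.
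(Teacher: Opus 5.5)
Your proposal is correct and follows the paper's proof. It uses the same mutual recursion for $\Up{\Delta}{\cdot}$ and $\Down{\Delta}{\cdot}$: your bound-chain set and your summand $\{\beta:\sub{\Delta}{\Delta(\beta)}{A_1\to A_2}\}$ coincide with the paper's $\Vars{\Delta}{\cdot}$ by inversion of \ruleref{A-Promote}. It also uses the same weight, with fresh body variables of weight $2$, and the same induction: bound restoration plus transitivity for one inclusion, and algorithmic inversion for the other.

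One small imprecision: the restoration you need, from $\forget{\Delta}$ to $\Delta$, is the composite of both implications in \eqref{eq:relative-contexts} (generalized to arbitrary $\Delta$), not just the first.
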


These equalities imply representative soundness. For
$u\in\Up{\Delta}{A}$, reflexivity gives
$u\in\upclosure{\forget{\Delta}}{\Up{\Delta}{A}}=\upclosure{\Delta}{A}$,
so $A\subin{\Delta}u$. Dually, every $d\in\Down{\Delta}{A}$ satisfies
$d\subin{\Delta}A$.

The singleton $\{A\}$ would suffice if both sides used $\Delta$.
The finite families need not be minimal. Restoring the outer bounds gives
coverage in the comparison context used by type abstraction.

\begin{corollary}[Restoring only the outer bounds]
\label{cor:relative-cover}
For well-formed $\Delta,A$ and types $B,T$ in the scope of $\Delta_A$,
\[
  \sub{\Delta_A}{B}{T}
  \iff \exists u\in\Up{\Delta_A}{B}.\ \sub{\Delta_\top}{u}{T}.
\]
\end{corollary}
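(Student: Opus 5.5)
The corollary follows from Theorem~\ref{thm:cones} applied to the context $\Delta_A$, together with the chain of implications \eqref{eq:relative-contexts}. The plan is to prove the two directions separately. Both use only the upper equality \eqref{eq:upper} for $\Delta_A$ and the type $B$. Note that $\forget{(\Delta_A)}$ coincides with $\forget{\Delta_A}$ as used in \eqref{eq:relative-contexts}. Both contexts have the domain of $\Delta$ followed by $\alpha$, with every bound replaced by $\tytop$.

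For the forward direction, assume $\sub{\Delta_A}{B}{T}$, so $T\in\upclosure{\Delta_A}{B}$. By \eqref{eq:upper} this set equals $\upclosure{\forget{\Delta_A}}{\Up{\Delta_A}{B}}$. Hence there is some $u\in\Up{\Delta_A}{B}$ with $\sub{\forget{\Delta_A}}{u}{T}$. The first implication of \eqref{eq:relative-contexts} then gives $\sub{\Delta_\top}{u}{T}$, which is the required witness.

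For the backward direction, take $u\in\Up{\Delta_A}{B}$ with $\sub{\Delta_\top}{u}{T}$. The second implication of \eqref{eq:relative-contexts} upgrades this to $\sub{\Delta_A}{u}{T}$. Representative soundness, stated right after Theorem~\ref{thm:cones}, gives $\sub{\Delta_A}{B}{u}$. \ruleref{Trans} in $\Delta_A$ then yields $\sub{\Delta_A}{B}{T}$.

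The only point needing care is the justification of \eqref{eq:relative-contexts} itself, which the paper sketches by induction on subtype derivations. I would make it explicit as a general lemma. If $\Theta'$ and $\Theta$ have the same domain, and each bound in $\Theta'$ is either $\tytop$ or identical to the corresponding bound in $\Theta$, then $\sub{\Theta'}{S}{T}$ implies $\sub{\Theta}{S}{T}$. The proof is by induction on the derivation, generalizing over extensions of both contexts by the same fresh entries, so that \ruleref{All-Top} is handled. The \ruleref{Bound} case is where the hypothesis is used. A bound $\tytop$ is recovered in $\Theta$ by \ruleref{Top}; an unchanged bound is recovered by \ruleref{Bound} itself. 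All other rules are context-independent or follow from the inductive hypotheses.

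Instantiating the lemma twice gives the two implications of \eqref{eq:relative-contexts}: once with $\Theta'=\forget{\Delta_A}$ and $\Theta=\Delta_\top$, and once with $\Theta'=\Delta_\top$ and $\Theta=\Delta_A$. I also need that $\Up{\Delta_A}{B}$ is well scoped in $\Delta_A$ and hence in $\Delta_\top$, which Theorem~\ref{thm:cones} provides. Beyond this, I expect no real obstacle.
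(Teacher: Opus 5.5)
Your proposal is correct and follows the paper's proof: forward coverage comes from \eqref{eq:upper} in $\forget{\Delta_A}$ and is restored to $\Delta_\top$ by \eqref{eq:relative-contexts}; conversely, you restore $\alpha$'s bound to $A$ and compose with representative soundness. Your explicit bound-restoration lemma is the same derivation induction the paper sketches for \eqref{eq:relative-contexts}, and since its hypothesis already holds after adding a common binder bounded by $\tytop$, the \ruleref{All-Top} case goes through directly.
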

\begin{proof}
Forward coverage holds first in $\forget{\Delta_A}$. By
\eqref{eq:relative-contexts}, restore the outer bounds to obtain the
comparison in $\Delta_\top$. Conversely, restore $\alpha$'s bound to $A$
and compose with representative soundness.
\end{proof}
Taking $F_B=\Up{\Delta_A}{B}$ now satisfies
\eqref{eq:tabs-cover-sound} by representative soundness and
\eqref{eq:tabs-cover-complete} by this corollary.

\subsection{Constructing upper and lower covers}

Given a well-formed $\Delta$ and type $A$ in its scope, the variable-source
operation returns the finite set of declared variables below $A$:
\[
  \Vars{\Delta}{A}=\{\alpha\in\domof{\Delta}:\sub{\Delta}{\alpha}{A}\}.
\]
Subtyping decides membership. With $C=\Delta(\alpha)$, the base cases are
\begin{align*}
  \Up{\Delta}{\tytop}&=\{\tytop\}, &
  \Down{\Delta}{\tytop}&=\{\tytop\},\\
  \Up{\Delta}{\alpha}&=\{\alpha\}\cup\Up{\Delta}{C}, &
  \Down{\Delta}{\alpha}&=\Vars{\Delta}{\alpha}.
\end{align*}
A variable can either remain itself or use its upper bound.
There are no compound subtypes of a variable, which accounts for the lower
variable case.

For arrows, contravariance exchanges the two recursive constructions:
\begin{align*}
  \Up{\Delta}{A\to B}
    &=\{\tytop\}\cup
      \{a\to b:a\in\Down{\Delta}{A},\ b\in\Up{\Delta}{B}\},\\
  \Down{\Delta}{A\to B}
    &=\Vars{\Delta}{A\to B}\cup
      \{a\to b:a\in\Up{\Delta}{A},\ b\in\Down{\Delta}{B}\}.
\end{align*}
The lower variable-source component cannot be omitted: a variable with a
function bound may be below an arrow without having arrow syntax.

Universal types follow the same variance:
\begin{align*}
  \Up{\Delta}{\All{\alpha}{A}{B}}
    &=\{\tytop\}\cup
      \{\All{\alpha}{a}{b}:a\in\Down{\Delta}{A},\ b\in\Up{\Delta_\top}{B}\},\\
  \Down{\Delta}{\All{\alpha}{A}{B}}
    &=\Vars{\Delta}{\All{\alpha}{A}{B}}\cup
      \{\All{\alpha}{a}{b}:a\in\Up{\Delta}{A},\ b\in\Down{\Delta_\top}{B}\}.
\end{align*}
\ruleref{All-Top} makes $\Delta_\top$ independent of the bound $a$ of a generated type,
so one recursive body cover suffices for every candidate bound.

\subsection{Termination}
\label{sec:termination}

A variable bound may be larger than the variable's syntax, so plain type
size does not justify recursion. As in Castagna and Pierce's termination
argument \cite[Section~4.1.2]{CastagnaPierce1994}, we measure types relative
to their bounds. Our universal case gives the body variable bound $\tytop$,
matching the cone recursion. Build a weight environment $\rho$ in the
order of $\Delta$. On declaring $\alpha<:A$, set
$\rho(\alpha)=1+w_\rho(A)$, where
\begin{align*}
  w_\rho(\alpha)&=\rho(\alpha),& w_\rho(\tytop)&=1,\\
  w_\rho(A\to B)&=1+w_\rho(A)+w_\rho(B),&
  w_\rho(\All{\alpha}{A}{B})&=1+w_\rho(A)+w_{\rho[\alpha\mapsto2]}(B).
\end{align*}
The weight $2$ is the weight of a variable with bound $\tytop$. Earlier weights
are available when a bound is measured, because contexts are ordered.
$\rho[\alpha\mapsto2]$ extends the weight assignment with the fresh variable $\alpha$.
Write $\tyweight{\Delta}{A}$ for the resulting type weight.

Extending the context by a fresh variable preserves the weights of old
variables. Consequently, for a type $T$ well scoped in $\Delta$,
\[
  \tyweight{\Delta,\alpha<:A}{T}=\tyweight{\Delta}{T}.
\]
Thus a variable $\alpha$ with declared bound $C$ satisfies
$\tyweight{\Delta}{C}<\tyweight{\Delta}{\alpha}$. Every cone call therefore
decreases weight: it either follows a bound, selects an arrow component,
or selects a universal component with the stipulated extension by $\alpha<:\tytop$.
Variable-source tests use the terminating subtype checker, whose recursive
calls decrease the sum of the two endpoint weights. Exposure decreases the
source weight. These measures establish termination without bounding the
number or size of generated representatives by a polynomial.

\subsection{Soundness and completeness}

\begin{proof}[Proof sketch for Theorem~\ref{thm:cones}]
Induct on the weight, proving well-formedness and representative soundness
together with both coverage equalities. A variable either remains itself
or follows its bound; lower covers also retain all qualifying variable
sources. Arrow inversion exchanges upper and lower coverage in the domain.

For a universal comparison
$\All{\alpha}{A}{B}\subin{\Delta}\All{\alpha}{C}{D}$, inversion gives
$C\subin{\Delta} A$ and $B\subin{\Delta_\top}D$.
Choose $a\in\Down{\Delta}{A}$ and $b\in\Up{\Delta_\top}{B}$ with
$C\subin{\forget{\Delta}}a$ and $b\subin{\forget{\Delta_\top}}D$.
\ruleref{All-Top} then gives
\[
  \sub{\forget{\Delta}}{\All{\alpha}{a}{b}}{\All{\alpha}{C}{D}}.
\]
The representative is sound because $a\subin{\Delta} A$ and
$B\subin{\Delta_\top}b$. The same body context $\Delta_\top$ works for every
generated bound. Lower coverage reverses the recursive choices and
retains variable sources. Appendix~\ref{app:cone-proof} gives the full induction.
\end{proof}

\section{Decidable Typechecking from Finite Bases}
\label{sec:typing-bases}

We compute typing bases by recursion on the original term.
The cone cover of Corollary~\ref{cor:relative-cover} handles bound forgetting
at type abstraction; the remaining constructors use exposure and subtype tests.

\subsection{Constructing typing bases}

\begin{theorem}[Effective finite typing bases]
\label{thm:typing-basis}
For every well-formed $\Delta;\Gamma$ and raw term $s$, a finite family
$\Basis{\Delta}{\Gamma}{s}$ satisfying \eqref{eq:typing-basis} is computable.
Every family member is an assignable type of $s$.
\end{theorem}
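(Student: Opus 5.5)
We define $\Basis{\Delta}{\Gamma}{s}$ by structural recursion on $s$ and prove, by the same induction, that every member $B$ satisfies $\typing{\Delta}{\Gamma}{s}{B}$ (soundness). We also prove that every derivation $\typing{\Delta}{\Gamma}{s}{T}$ yields some $B$ in the basis with $\sub{\Delta}{B}{T}$ (completeness). Ill-scoped annotations or unbound variables give the empty family, since no derivation exists for them. Computability follows from finiteness of the recursive families, decidability of subtyping via \eqref{eq:subtyping-equivalence}, computability of exposure, and Theorem~\ref{thm:cones}. The clauses are as follows.
\begin{itemize}
\item Variables and constants: $\Basis{\Delta}{\Gamma}{x}=\{\Gamma(x)\}$ and $\Basis{\Delta}{\Gamma}{\tmconst}=\{\tytop\}$.
\item Term abstraction: $\Basis{\Delta}{\Gamma}{\abs{x}{A}{s}}=\{A\to B : B\in\Basis{\Delta}{\Gamma,x:A}{s}\}$.
\item Type abstraction: $\Basis{\Delta}{\Gamma}{\tabs{\alpha}{A}{s}}=\{\All{\alpha}{A}{u} : B\in\Basis{\Delta_A}{\Gamma}{s},\ u\in\Up{\Delta_A}{B}\}$.
\item Application: $\Basis{\Delta}{\Gamma}{f\,u}=\{B_0 : M\in\Basis{\Delta}{\Gamma}{f},\ \Expose{\Delta}{M}=A_0\to B_0,\ \exists N\in\Basis{\Delta}{\Gamma}{u}.\ N\subin{\Delta}A_0\}$.
\item Type application: $\Basis{\Delta}{\Gamma}{f\{R\}}=\{B_0[R/\alpha] : M\in\Basis{\Delta}{\Gamma}{f},\ \Expose{\Delta}{M}=\All{\alpha}{A_0}{B_0},\ R\subin{\Delta}A_0\}$.
\end{itemize}

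Soundness is routine in each case. For type abstraction we use representative soundness, $B\subin{\Delta_A}u$, then \ruleref{Sub} and \ruleref{Type-Abs}. For the eliminations we use $M\subin{\Delta}\Expose{\Delta}{M}$ and \ruleref{Sub} before \ruleref{App} or \ruleref{Type-App}.

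For completeness we first prove a generation lemma for each term constructor. It says that a derivation of $\typing{\Delta}{\Gamma}{s}{T}$ can be normalized to a single syntax-directed rule followed by one \ruleref{Sub}, by collapsing consecutive subsumptions with \ruleref{Trans}. Then we argue by induction on $s$.
\begin{itemize}
\item Term abstraction: the body has some type $B'$ with $A\to B'\subin{\Delta}T$. The induction hypothesis gives $B\subin{\Delta}B'$ for some basis member $B$, and \ruleref{Arrow} closes the case.
\item Type abstraction: the body has type $R$ with $\All{\alpha}{A}{R}\subin{\Delta}T$. The induction hypothesis gives $B$ with $B\subin{\Delta_A}R$. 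Corollary~\ref{cor:relative-cover} then gives $u\in\Up{\Delta_A}{B}$ with $u\subin{\Delta_\top}R$. Hence $\All{\alpha}{A}{u}\subin{\Delta}\All{\alpha}{A}{R}$ by \ruleref{All-Top}, followed by \ruleref{Trans}. This is exactly the point where bound forgetting is absorbed.
\item Application: we have $f:C\to D$, $u:C$, and $D\subin{\Delta}T$. The induction hypothesis gives $M\subin{\Delta}C\to D$, and \eqref{eq:expose-arrow} gives $\Expose{\Delta}{M}=A_0\to B_0$ with $C\subin{\Delta}A_0$ and $B_0\subin{\Delta}D$. The induction hypothesis for $u$ gives $N\subin{\Delta}C\subin{\Delta}A_0$. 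So $B_0$ is in the basis and lies below $T$.
\end{itemize}

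The main obstacle is type application. There we have $f:\All{\alpha}{C}{D}$, $R\subin{\Delta}C$, and $D[R/\alpha]\subin{\Delta}T$. By \eqref{eq:expose-all} we obtain $\Expose{\Delta}{M}=\All{\alpha}{A_0}{B_0}$ with $C\subin{\Delta}A_0$, hence $R\subin{\Delta}A_0$, and $B_0\subin{\Delta_\top}D$. We still need $B_0[R/\alpha]\subin{\Delta}D[R/\alpha]$. This requires a substitution lemma: if $\sub{\Delta,\alpha<:\tytop}{X}{Y}$ and $R$ is well formed in $\Delta$, then $\sub{\Delta}{X[R/\alpha]}{Y[R/\alpha]}$. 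We will prove it by induction on a declarative derivation. The only nontrivial case is \ruleref{Bound} for $\alpha$ itself, where the target $\tytop$ is above anything. The remaining cases commute with substitution, and \ruleref{All-Top} body contexts stay of the form $\Delta',\gamma<:\tytop$, with bounds of other variables unaffected by freshness. The lemma works precisely because the comparison uses bound $\tytop$. With it, $B_0[R/\alpha]\subin{\Delta}D[R/\alpha]\subin{\Delta}T$, which completes the induction.
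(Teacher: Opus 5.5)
Your proposal is correct and follows essentially the paper's own proof. Both use the basis clauses of Figure~\ref{fig:basis-checker} and induction on $s$ with inversion to one syntax-directed rule plus a final \ruleref{Sub}. Both apply Corollary~\ref{cor:relative-cover} at type abstraction, \eqref{eq:expose-arrow} and \eqref{eq:expose-all} at eliminations, and preservation of subtyping under $[R/\alpha]$ from $\Delta,\alpha<:\tytop$ at type application. The paper obtains that last fact from its typed-substitution lemma, proved by the same derivation induction you sketch. One small point: the empty-family check must cover ill-formed explicit type arguments $R$, not only abstraction annotations. Otherwise \ruleref{A-Top} could accept an ill-scoped $R$ and admit an unassignable candidate $B_0[R/\alpha]$.
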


Figure~\ref{fig:basis-checker} collects the finite-basis checker in rule
form. The judgment $\basissynth{\Delta}{\Gamma}{s}{F}$ computes the finite
set $F=\Basis{\Delta}{\Gamma}{s}$, which may be empty;
$\basischeck{\Delta}{\Gamma}{s}{T}$ accepts the requested target exactly
when some $B\in F$ satisfies $\asub{\Delta}{B}{T}$.
The rules display the well-scoped cases with well-formed annotations.

\begin{figure}[t]
\small
\begin{mathpar}
 \inferrule*[right={\ruledef{B-Var}}]
   {\Gamma(x)=A}
   {\basissynth{\Delta}{\Gamma}{x}{\{A\}}}
 \and
 \inferrule*[right={\ruledef{B-Const}}]
   { }
   {\basissynth{\Delta}{\Gamma}{\tmconst}{\{\tytop\}}}
 \and
 \inferrule*[right={\ruledef{B-Abs}}]
   {\basissynth{\Delta}{\Gamma,x:A}{s}{F}}
   {\basissynth{\Delta}{\Gamma}{\abs{x}{A}{s}}{\{A\to B:B\in F\}}}
 \and
 \inferrule*[right={\ruledef{B-Type-Abs}}]
   {\basissynth{\Delta_A}{\Gamma}{s}{F}}
   {\basissynth{\Delta}{\Gamma}{\tabs{\alpha}{A}{s}}
      {\displaystyle\bigcup_{B\in F}
        \{\All{\alpha}{A}{C}:C\in\Up{\Delta_A}{B}\}}}
 \and
 \inferrule*[right={\ruledef{B-App}}]
   {\basissynth{\Delta}{\Gamma}{f}{F}\\
    \basissynth{\Delta}{\Gamma}{u}{G}}
   {\basissynth{\Delta}{\Gamma}{f\,u}
      {\{B:C\in F,\ \Expose{\Delta}{C}=A\to B,\
           D\in G,\ \asub{\Delta}{D}{A}\}}}
 \and
 \inferrule*[right={\ruledef{B-Type-App}}]
   {\basissynth{\Delta}{\Gamma}{f}{F}}
   {\basissynth{\Delta}{\Gamma}{f\{R\}}
      {\{B[R/\alpha]:C\in F,\
         \Expose{\Delta}{C}=\All{\alpha}{A}{B},\ \asub{\Delta}{R}{A}\}}}
 \and
 \inferrule*[right={\ruledef{B-Check}}]
   {\basissynth{\Delta}{\Gamma}{s}{F}\\
    \exists B\in F.\ \asub{\Delta}{B}{T}}
   {\basischeck{\Delta}{\Gamma}{s}{T}}
\end{mathpar}
\caption{Finite-basis synthesis and checking.}
\Description{Six synthesis rules cover variables, the term constant Top, term
and type abstraction, application and type application. Type abstraction
takes upper covers of its body candidates. Elimination filters finite
candidate families using head exposure and subtype tests. The final
checking rule accepts when a synthesized candidate is below the target.}
\label{fig:basis-checker}
\end{figure}

Each recursive basis supplies assignable candidates for the enclosing
typing rule. Ordinary abstraction preserves coverage through covariance
in the result type. At eliminations, exposure allows a candidate with a
variable head to supply an arrow or universal. The tests in
\ruleref{B-App} and \ruleref{B-Type-App} establish the premises needed to use
that exposed type. For ordinary application, one argument candidate below
the exposed domain suffices by subsumption. Type application checks the
explicit type argument against the exposed bound.

\ruleref{B-Type-Abs} is where the cone cover is needed. A body representative
covers its supertypes under the actual bound in $\Delta_A$.
Directly quantifying that representative can lose coverage, since
\ruleref{All-Top} compares bodies in $\Delta_\top$.
The rule first takes its upper cover and then quantifies each retained
body under the written bound $A$. Representative soundness makes each
candidate assignable; coverage after forgetting preserves every typing
of the abstraction. This discharges the obligation from
Section~\ref{sec:abstraction-coverage}.

A missing variable, malformed annotation or malformed type argument
yields the empty basis, as does an elimination with no surviving candidate.

\subsection{Soundness and completeness}

\begin{proof}[Proof sketch for Theorem~\ref{thm:typing-basis}]
Induct on $s$. Every recursive call is on a proper subterm, and every
auxiliary construction terminates by Section~\ref{sec:termination}.
Typing inversion exposes the rule for that term constructor followed by a
possible final subsumption.

At type abstraction, let inversion give a body type $R$ and a comparison
$\All{\alpha}{A}{R}\subin{\Delta} T$. Choose a body representative $B$ with
$B\subin{\Delta_A}R$. Corollary~\ref{cor:relative-cover} gives
$C\in\Up{\Delta_A}{B}$ with
\[
  \sub{\Delta,\alpha<:\tytop}{C}{R}.
\]
Hence \ruleref{All-Top} gives $\All{\alpha}{A}{C}<:_\Delta\All{\alpha}{A}{R}$ and then
$\All{\alpha}{A}{C}\subin{\Delta} T$. Soundness uses $B\subin{\Delta_A}C$ to type
the body before applying \ruleref{Type-Abs}. Cone coverage thus accounts for
the subsumption before type abstraction in Section~\ref{sec:obstacles}.

At a type application, inversion gives $f:\All{\alpha}{U}{V}$, $R\subin{\Delta} U$,
and $V[R/\alpha]\subin{\Delta} T$. By \eqref{eq:expose-all}, a function representative exposes as
$\All{\alpha}{A}{B}$ with $U\subin{\Delta} A$ and
$B\subin{\Delta,\alpha<:\tytop}V$. The bound test $R\subin{\Delta} A$ succeeds.
Preservation of subtyping by substitution from $\Delta_\top$ gives
$B[R/\alpha]\subin{\Delta} V[R/\alpha]$, so the retained type covers $T$.
Soundness follows from exposure, \ruleref{Type-App}, and the successful bound test.
All these arguments retain the final subsumption supplied by inversion.
Appendix~\ref{app:typing-basis-proof} gives the full induction over term constructors.
\end{proof}

\begin{corollary}[Decidability of the original judgment]
\label{cor:decidable}
Typechecking for arbitrary explicitly annotated \Ftop{} terms is decidable.
There is a total procedure that accepts a raw input $(\Delta,\Gamma,s,T)$
exactly when $\typing{\Delta}{\Gamma}{s}{T}$ holds.
\end{corollary}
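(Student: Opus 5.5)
The corollary follows by combining Theorem~\ref{thm:typing-basis} with the decidability of subtyping, \eqref{eq:subtyping-equivalence}. My plan is to define the procedure in three stages and then check that it is total and correct.

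\emph{First stage: well-formedness.} On a raw input $(\Delta,\Gamma,s,T)$, the procedure checks that $\Delta$ is a well-formed context, that $\Gamma$ is scoped over $\Delta$, and that $T$ is well scoped in $\Delta$. These are structural scope checks on finite syntax, so they terminate. If any check fails, the procedure rejects. This is correct because the typing judgment is only derivable over well-formed contexts and types.

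\emph{Second stage: basis synthesis.} Compute $F=\Basis{\Delta}{\Gamma}{s}$ using the rules of Figure~\ref{fig:basis-checker}. Theorem~\ref{thm:typing-basis} says this family is computable and finite for every raw term $s$. Ill-scoped subterms simply give the empty basis, so no separate term-level well-formedness check is needed. Termination comes from structural recursion on $s$, together with the weight argument of Section~\ref{sec:termination}. That argument covers the cone covers $\Up{\Delta_A}{B}$, the variable-source sets, exposure, and the subtype tests.

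\emph{Third stage: target test.} For each $B\in F$, decide $\asub{\Delta}{B}{T}$ with the terminating algorithmic checker of Castagna and Pierce. Accept exactly when some test succeeds. This is the rule \ruleref{B-Check}. Since $F$ is finite, only finitely many terminating tests are run, so the procedure is total.

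\emph{Correctness.} By \eqref{eq:subtyping-equivalence}, each algorithmic test agrees with declarative subtyping $\sub{\Delta}{B}{T}$. Here $B$ is well formed, because basis members are well formed in $\Delta$ (they are assignable types), and $T$ was checked in the first stage. The procedure therefore accepts iff $\exists B\in\Basis{\Delta}{\Gamma}{s}.\ \sub{\Delta}{B}{T}$. By \eqref{eq:typing-basis} this holds iff $\typing{\Delta}{\Gamma}{s}{T}$.

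\emph{Main obstacle.} The only delicate point is the interface between raw inputs and the well-formedness side conditions of the cited results. Both \eqref{eq:subtyping-equivalence} and Theorem~\ref{thm:typing-basis} assume well-formed contexts. So I would state explicitly that the first stage guarantees these hypotheses, and that a malformed context or target can never be typed. Once that is settled, the argument is a direct composition of the earlier results.
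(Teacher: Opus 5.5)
Your proposal is correct and follows essentially the same route as the paper: reject malformed environments and targets, which admit no derivation. Otherwise compute the finite basis of Theorem~\ref{thm:typing-basis}, test each member against $T$ with the terminating subtype decider, and conclude by representative soundness and coverage as in \eqref{eq:typing-basis}; your explicit discharge of the well-formedness hypotheses for \eqref{eq:subtyping-equivalence} matches the paper's remark that the wrapper needs no well-formedness premise from its caller.
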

\begin{proof}
Reject malformed environments and targets. Otherwise compute the finite
basis and ask whether a member is below $T$. Representative soundness and
complete coverage prove the equivalence. Malformed inputs admit no typing
derivation, so the wrapper needs no well-formedness premise on its caller.
\end{proof}

\subsection{The cost of finite bases}
\label{sec:basis-cost}

The finite-basis theorem establishes a terminating construction but gives
no polynomial bound on its cost. Shared representations can compress some
very large finite families (Appendix~\ref{sec:large-families}). Such
compression alone does not bound the cost of general construction or
queries.

\section{Backward Checking by Inverse Substitution}
\label{sec:backward}

The finite-basis procedure summarizes all assignable types before comparing
with a target. To decide a single judgment, we instead propagate acceptable
targets through the term. Application requires a domain for the argument
check; type application requires body targets before substitution.
We introduce rectangularity to address the first problem, 
while finite inverse coverage addresses the second.

\subsection{A canonical interface for elimination}
\label{sec:canonical-domains}

Exposing a \emph{canonical type} must recognize every available elimination
shape and supply a domain or bound that preserves all result choices,
including after substitution.

The rules below define $\canrel{\Delta}{\Gamma}{s}{C}$ on well-formed
contexts and annotations, using the original typing judgment for argument
checks. Section~\ref{sec:backward-equations} implements this relation.
\begin{mathparpagebreakable}[\clubpenalty=0\widowpenalty=0\interlinepenalty=0]
 \inferrule{\Gamma(x)=A}
   {\canrel{\Delta}{\Gamma}{x}{A}}
 \and
 \inferrule{ }
   {\canrel{\Delta}{\Gamma}{\tmconst}{\tytop}}
 \\
 \inferrule{\canrel{\Delta}{\Gamma,x:A}{s}{B}}
   {\canrel{\Delta}{\Gamma}{\abs{x}{A}{s}}{A\to B}}
 \and
 \inferrule{\canrel{\Delta,\alpha<:A}{\Gamma}{s}{B}}
   {\canrel{\Delta}{\Gamma}{\tabs{\alpha}{A}{s}}{\All{\alpha}{A}{B}}}
 \\
 \inferrule{\canrel{\Delta}{\Gamma}{f}{C}\\\\
   \Expose{\Delta}{C}=A\to B\\\typing{\Delta}{\Gamma}{u}{A}}
   {\canrel{\Delta}{\Gamma}{f\,u}{B}}
 \and
 \inferrule{\canrel{\Delta}{\Gamma}{f}{C}\\\\
   \Expose{\Delta}{C}=\All{\alpha}{A}{B}\\R\subin{\Delta}A}
   {\canrel{\Delta}{\Gamma}{f\{R\}}{B[R/\alpha]}}
\end{mathparpagebreakable}

The proof tracks the pair formed by a term's complete typing family and
its canonical type. These semantic families may be infinite and are used
only in the proof.

A substitution $\sigma:\Delta\to\Theta$ is \emph{typed} when its images
are well formed and each source variable $\alpha$ satisfies
$\sigma(\alpha)\subin{\Theta}\Delta(\alpha)[\sigma]$. Such substitutions
preserve subtyping.
For a type set $\mathcal S$, write
$\mathcal S[\sigma]=\{B[\sigma]:B\in\mathcal S\}$.

\begin{definition}[Structured families]
\label{def:structured-families}
A pair $(\mathcal S,C)$ is \emph{structured} in $\Delta$ if it is
generated by the following clauses. All types are well formed in their
indicated contexts.
\begin{enumerate}
 \item \emph{Singleton.} $(\{C\},C)$ is structured in $\Delta$.
 \item \emph{Constructors.} From $(\mathcal S,B)$ structured in $\Delta$,
   form $(\{A\to V:V\in\mathcal S\},A\to B)$ in $\Delta$.
   From $(\mathcal S,B)$ structured in $\Delta,\alpha<:A$, form
   $(\{\All{\alpha}{A}{V}:V\in\mathcal S\},\All{\alpha}{A}{B})$ in $\Delta$.
 \item \emph{Closure replacement.} Replace $\mathcal S$ by any family
   containing $C$ with the same upward closure in $\Delta$.
 \item \emph{Typed image.} Along a typed $\sigma:\Delta\to\Theta$, form
   $(\mathcal S[\sigma],C[\sigma])$ in $\Theta$.
\end{enumerate}
\end{definition}

Singletons account for variables and the constant; constructors account
for introductions. Closure replacement accounts for subsumption, and
typed images account for type application. Every structured family
contains its distinguished member, so it may be replaced by its upward
closure. The next lemma shows that elimination also returns a structured
family, allowing the invariant to be used again.

\begin{lemma}[Rectangularity under typed substitution]
\label{lem:structured-rectangularity}
Let $(\mathcal S,C)$ be structured in $\Delta$ and
$\sigma:\Delta\to\Theta$ be typed. If
$\Expose{\Theta}{C[\sigma]}=A_0\to B_0$, there is a structured pair
$(\mathcal R,B_0)$ in $\Theta$ such that, for all well-scoped $U,V$,
\[
 U\to V\in\upclosure{\Theta}{\mathcal S[\sigma]}
 \iff U\subin{\Theta}A_0\ \land\ V\in\mathcal R.
\]
If $\Expose{\Theta}{C[\sigma]}=\All{\alpha}{A_0}{B_0}$, there is a
structured pair $(\mathcal R,B_0)$ in $\Theta,\alpha<:A_0$ such that
\[
 \All{\alpha}{U}{V}\in\upclosure{\Theta}{\mathcal S[\sigma]}
 \iff U\subin{\Theta}A_0\ \land\ V\in\mathcal R,
\]
where $V$ has the extended variable scope. Conversely, an arrow in
$\upclosure{\Theta}{\mathcal S[\sigma]}$ forces $C[\sigma]$ to expose an
arrow, and a universal forces it to expose a universal.
\end{lemma}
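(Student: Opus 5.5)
The plan is to prove the statement by induction on the derivation that $(\mathcal S,C)$ is structured in $\Delta$. The induction hypothesis will be generalized over \emph{all} typed substitutions $\sigma:\Delta\to\Theta$ and both head shapes. Two routine facts come first. Typed substitutions compose: if $\tau:\Delta'\to\Delta$ and $\sigma:\Delta\to\Theta$ are typed, then so is $\tau;\sigma$, since $\sigma$ preserves subtyping and commutes with substitution. A typed $\sigma$ also extends under a binder, to $\sigma':\Delta,\alpha<:A\to\Theta,\alpha<:A[\sigma]$ with $\alpha\mapsto\alpha$. The composition fact makes the \emph{typed image} clause immediate. If $(\mathcal S,C)=(\mathcal S'[\tau],C'[\tau])$, then $\mathcal S[\sigma]=\mathcal S'[\tau;\sigma]$ and $C[\sigma]=C'[\tau;\sigma]$, so the hypothesis for $(\mathcal S',C')$ at $\tau;\sigma$ gives the result. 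Because the hypothesis holds for every $\sigma$, there is no separate case for a variable head produced by substitution.

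In the \emph{singleton} case, $\upclosure{\Theta}{\{C[\sigma]\}}$ is a principal cone. Then \eqref{eq:expose-arrow} and \eqref{eq:expose-all} give the characterization directly: $U\to V$ lies above $C[\sigma]$ iff $U\subin{\Theta}A_0$ and $B_0\subin{\Theta}V$. The same inversions give the converse statement, that an arrow above $C[\sigma]$ forces an exposed arrow, and likewise for universals. For the arrow I take $\mathcal R=\upclosure{\Theta}{B_0}$, which is structured by closure replacement applied to $(\{B_0\},B_0)$. For the universal the body comparison takes place in $\Theta,\alpha<:\tytop$, so I take $\mathcal R=\upclosure{\Theta,\alpha<:\tytop}{B_0}$. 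This family contains $B_0$. Its upward closure in $\Theta,\alpha<:A_0$ equals that of $\{B_0\}$: every member is above $B_0$ in $\Theta,\alpha<:A_0$ by \eqref{eq:relative-contexts}, and conversely $B_0\in\mathcal R$. So $(\mathcal R,B_0)$ is structured in $\Theta,\alpha<:A_0$ by closure replacement.

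The \emph{arrow constructor} case has $C=A\to B$, so $C[\sigma]$ exposes to itself with $A_0=A[\sigma]$ and $B_0=B[\sigma]$. Arrow inversion shows that $U\to V$ lies above some $A[\sigma]\to W[\sigma]$, with $W\in\mathcal S'$, iff $U\subin{\Theta}A_0$ and $V\in\upclosure{\Theta}{\mathcal S'[\sigma]}$. I therefore take $\mathcal R=\upclosure{\Theta}{\mathcal S'[\sigma]}$. This is structured by applying the typed image clause to $(\mathcal S',B)$ and then closure replacement, which is allowed because $\mathcal R$ contains $B_0$. All members of $\mathcal S[\sigma]$ are arrows, so by inversion no universal lies above them, which gives the converse statement.

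The \emph{universal constructor} case I expect to be the main obstacle. Here $\mathcal S=\{\All{\alpha}{A}{W}:W\in\mathcal S'\}$ with $(\mathcal S',B)$ structured in $\Delta,\alpha<:A$. The natural body family $\upclosure{\Theta,\alpha<:\tytop}{\mathcal S'[\sigma']}$ is closed in the forgetful context, but it must be certified as structured in $\Theta,\alpha<:A_0$, where $A_0=A[\sigma]$. The remedy is to repeat the singleton argument. First, $\mathcal S'[\sigma']$ is a structured pair with distinguished member $B_0$ in $\Theta,\alpha<:A_0$, by the typed image clause along $\sigma'$. Next, $\mathcal R=\upclosure{\Theta,\alpha<:\tytop}{\mathcal S'[\sigma']}$ contains $\mathcal S'[\sigma']$, and by \eqref{eq:relative-contexts} it lies inside $\upclosure{\Theta,\alpha<:A_0}{\mathcal S'[\sigma']}$, so the two have the same closure in $\Theta,\alpha<:A_0$ and closure replacement applies. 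Finally, inversion through \ruleref{All-Top} gives the required equivalence with exactly this $\mathcal R$.

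The remaining \emph{closure replacement} case has $\mathcal S''$ with $\upclosure{\Delta}{\mathcal S''}=\upclosure{\Delta}{\mathcal S}$ and the same $C$. Since typed substitutions preserve subtyping, each member of $\mathcal S''[\sigma]$ lies above some member of $\mathcal S[\sigma]$ and vice versa. Hence $\upclosure{\Theta}{\mathcal S''[\sigma]}=\upclosure{\Theta}{\mathcal S[\sigma]}$, and the $\mathcal R$ supplied by the hypothesis for $\mathcal S$ works unchanged, as does the converse statement.
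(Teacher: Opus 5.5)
Your proposal is correct and follows essentially the same route as the paper's proof. Both argue by induction on the structured-family clauses, generalized over $\Theta$ and typed $\sigma$. The singleton and constructor cases are settled by exposure, inversion and closure replacement, with bound restoration, i.e.\ \eqref{eq:structured-top-closure}, handling the universal bodies. Closure replacement is transported by subtype preservation, and typed images are handled by composing substitutions. The one implicit step worth stating is that every structured family contains its distinguished member; you rely on it when you say $\upclosure{\Theta}{\mathcal S'[\sigma]}$ contains $B_0$.
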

\begin{proof}[Proof sketch]
Induct on the structured-family clauses, generalizing over $\Theta$ and
$\sigma$. Singleton families use exposure and subtype inversion.
Constructor families keep the domain independent of the body choices.
In the universal case, \ruleref{All-Top} takes the body's upward closure
under bound $\tytop$, but the body family must remain structured under
the actual bound $A$. For a family $\mathcal S$ in $\Delta_A$, bound
restoration and transitivity give
\begin{equation}
 \upclosure{\Delta_A}{(\upclosure{\Delta_\top}{\mathcal S})}
 =\upclosure{\Delta_A}{\mathcal S}.
 \label{eq:structured-top-closure}
\end{equation}
Thus closure replacement preserves the structured pair under the actual
bound, also after substitution. Subtype preservation transports closure
replacement through $\sigma$; the typed-image case uses composition of
typed substitutions. Appendix~\ref{app:rectangularity-proof} gives the
full induction.
\end{proof}

\begin{corollary}[Canonical elimination interface]
\label{cor:canonical-rectangularity}
In well-formed contexts $\Delta;\Gamma$, the canonical relation satisfies:
\begin{enumerate}
 \item \emph{Typability and assignability.} A term $s$ is typable exactly
   when there is a $C$ with $\canrel{\Delta}{\Gamma}{s}{C}$. This $C$ is
   unique and $\typing{\Delta}{\Gamma}{s}{C}$.
 \item \emph{Shape completeness.} For such a $C$, $s$ has an assignable
   arrow type exactly when $\Expose{\Delta}{C}$ is an arrow, and an
   assignable universal type exactly when it is a universal.
 \item \emph{Rectangularity.} If $\Expose{\Delta}{C}=A\to B$ or
   $\Expose{\Delta}{C}=\All{\alpha}{A}{B}$, respectively, then
   \begin{align}
    \typing{\Delta}{\Gamma}{s}{U\to V}
    &\iff U\subin{\Delta}A\ \land\
          \typing{\Delta}{\Gamma}{s}{A\to V},
    \label{eq:domain-dominance}\\
    \typing{\Delta}{\Gamma}{s}{\All{\alpha}{U}{V}}
    &\iff U\subin{\Delta}A\ \land\
          \typing{\Delta}{\Gamma}{s}{\All{\alpha}{A}{V}}.
    \label{eq:bound-dominance}
   \end{align}
   Components range over their respective variable scopes, with universal
   bodies in the extended scope.
 \item \emph{Substitution stability.} The pair
   $(\{T:\typing{\Delta}{\Gamma}{s}{T}\},C)$ is structured. Its image
   under every typed substitution remains structured and satisfies
   Lemma~\ref{lem:structured-rectangularity}, including the structured
   result and body families needed for subsequent eliminations.
\end{enumerate}
\end{corollary}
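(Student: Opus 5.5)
The plan is to prove one invariant by induction on $s$ and read off all four items. For every well-formed $\Delta;\Gamma$ and typable $s$, write $\mathcal T_{\Delta;\Gamma}(s)=\{T:\typing{\Delta}{\Gamma}{s}{T}\}$. The invariant is: $s$ is typable iff $\canrel{\Delta}{\Gamma}{s}{C}$ for some $C$, and then $(\mathcal T_{\Delta;\Gamma}(s),C)$ is structured in $\Delta$. Uniqueness of $C$ is immediate, since the canonical rules are syntax-directed and exposure is deterministic. Assignability $\typing{\Delta}{\Gamma}{s}{C}$ follows because every structured family contains its distinguished member. Once the invariant holds, item~4 is exactly the typed-image clause of Definition~\ref{def:structured-families} combined with Lemma~\ref{lem:structured-rectangularity}. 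Items~2 and~3 are that lemma instantiated at the identity substitution, which is typed by \ruleref{Bound}. In particular, equation~\eqref{eq:domain-dominance} follows by applying the lemma's equivalence twice: once at $(U,V)$ and once at $(A,V)$, using \ruleref{Refl} for $A\subin{\Delta}A$. The universal case \eqref{eq:bound-dominance} is identical.

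The variable and constant cases proceed by typing inversion. Every typing of $x$ is a supertype of $\Gamma(x)$, so $\mathcal T(x)=\upclosure{\Delta}{\Gamma(x)}$. This is obtained from the singleton clause by closure replacement.

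For $\abs{x}{A}{s}$, inversion and arrow inversion \eqref{eq:expose-arrow} give $\mathcal T=\upclosure{\Delta}{\{A\to V:V\in\mathcal T(s)\}}$. That set is the constructor clause followed by closure replacement. For $\tabs{\alpha}{A}{s}$, the same argument applies using \eqref{eq:expose-all}. The body set here is closed under $\Delta_A$, while \ruleref{All-Top} closes under $\Delta_\top$. Equation~\eqref{eq:structured-top-closure} shows the two agree, so no cone covers are needed at this step.

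The elimination cases are the core. For $f\,u$, the induction hypothesis gives a structured $(\mathcal T(f),C_f)$. Inversion of \ruleref{App} and \ruleref{Sub} gives
\[
 T\in\mathcal T(f\,u)\iff\exists U,V.\ U\to V\in\mathcal T(f)\land u:U\land V\subin{\Delta}T .
\]
Apply Lemma~\ref{lem:structured-rectangularity} with $\sigma=\idsubst$. If $\Expose{\Delta}{C_f}$ is not an arrow, the converse clause says $f$ has no arrow type, so $f\,u$ is untypable and no canonical rule applies. Otherwise $\Expose{\Delta}{C_f}=A_0\to B_0$, and the condition becomes $U\subin{\Delta}A_0$, $V\in\mathcal R$, and $u:U$. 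By subsumption this is equivalent to $u:A_0$ together with $V\in\mathcal R$. Thus $f\,u$ is typable iff $u:A_0$, which is exactly the canonical premise. In that case $\mathcal T(f\,u)=\upclosure{\Delta}{\mathcal R}$, and this is structured with distinguished $B_0$ by closure replacement applied to $(\mathcal R,B_0)$.

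For $f\{R\}$, the lemma gives the body family $(\mathcal R,B_0)$ structured in $\Delta,\alpha<:A_0$, together with $T\in\mathcal T(f\{R\})\iff\exists V\in\mathcal R.\ V[R/\alpha]\subin{\Delta}T$, given the test $R\subin{\Delta}A_0$. The substitution $[R/\alpha]$ is typed from $\Delta,\alpha<:A_0$ to $\Delta$ precisely because of this test. The typed-image clause then makes $(\mathcal R[R/\alpha],B_0[R/\alpha])$ structured, and closure replacement gives $\mathcal T(f\{R\})$.

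The main obstacle I expect is this type-application case. It requires checking that inversion of \ruleref{Type-App} quantifies over all universal types $\All{\alpha}{U}{V}$ of $f$ with $R\subin{\Delta}U$. Rectangularity must then show that restricting to $U$ such that $U\subin{\Delta}A_0$ and $V\in\mathcal R$ loses nothing. The direction from the lemma's equivalence needs $R\subin{\Delta}U\subin{\Delta}A_0$, which holds by transitivity. For the converse direction, I take $U=A_0$. Substitution preservation of subtyping is then what lets the closure of $\mathcal R[R/\alpha]$ match every result type.
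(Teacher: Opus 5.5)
Your proposal is correct and follows essentially the same route as the paper. Both proofs build the canonical derivation and the structured complete typing family together. Both use Lemma~\ref{lem:structured-rectangularity} at the identity substitution for application and type application, with the typed image along $[R/\alpha]:(\Delta,\alpha<:A_0)\to\Delta$ plus closure replacement, and then read off shape completeness and rectangularity from the same lemma with reflexivity at $A$. The differences are only organizational: you induct on $s$ with typing inversion rather than on typing derivations, and you get assignability from the distinguished-member property rather than from a separate induction on canonical derivations. Your appeals to \eqref{eq:structured-top-closure} in the type-abstraction case and to substitution preservation at the end are harmless but not needed.
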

\begin{proof}[Proof sketch]
Induction on typing constructs a canonical derivation and its structured
typing family together. Variables and introductions use the corresponding
structured-family clauses and upward closure. At application,
Lemma~\ref{lem:structured-rectangularity} supplies an exposed arrow,
permits subsuming the argument to its domain, and returns a structured
result family.

At type application, it supplies a structured body pair $(\mathcal R,B)$
under the actual bound $A$. The bound test $R\subin{\Delta}A$ makes
$[R/\alpha]:(\Delta,\alpha<:A)\to\Delta$ typed. The complete result family is
\[
 \{T:\typing{\Delta}{\Gamma}{f\{R\}}{T}\}
   =\upclosure{\Delta}{\mathcal R[R/\alpha]},
\]
structured with member $B[R/\alpha]$ by typed image and closure replacement.
This preserves the invariant when the result is eliminated again.
Induction on canonical derivations gives assignability and uniqueness;
Lemma~\ref{lem:canonical} gives the detailed typing induction.
Finally, apply Lemma~\ref{lem:structured-rectangularity} with identity
substitution to the upward-closed typing family. Its exposed heads give
shape completeness, and reflexivity at $A$ gives the two rectangularity
equations with the same $V$.
\end{proof}

The same $V$ occurs on both sides of each equation: possible domains are
exactly the types below $A$, independently of the possible results. This
is the \emph{rectangularity} of the complete typing family. An arbitrary
finite family of arrows can still correlate domains and results.

For an example of successive eliminations, let
$i=\tabs{\alpha}{\tytop}{\abs{x}{\alpha}{x}}$,
$R=\tytop\to\tytop$, and $u=\abs{z}{\tytop}{z}$. The canonical rules give
\[
 \begin{aligned}
 &\canrel{\varnothing}{\varnothing}{i\{R\}}{R\to R},\\
 &\canrel{\varnothing}{\varnothing}{i\{R\}\,u}{R}.
 \end{aligned}
\]
The result can now be applied to $\tmconst$. Before substitution, the
corresponding result type was the variable $\alpha$; its arrow head comes
from $R$. Typed images and structured result families allow the same
shape and domain properties to justify this next elimination.

\begin{corollary}[Canonical elimination equations]
\label{cor:backward-elimination}
With $A$ the exposed canonical domain or bound, respectively,
\begin{align}
 \typing{\Delta}{\Gamma}{f\,u}{T}
 &\iff \typing{\Delta}{\Gamma}{u}{A}\ \land\
         \typing{\Delta}{\Gamma}{f}{A\to T},\label{eq:back-app}\\
 \typing{\Delta}{\Gamma}{f\{R\}}{T}
 &\iff R\subin{\Delta}A\ \land\
   \exists V.\ \typing{\Delta}{\Gamma}{f}{\All{\alpha}{A}{V}}
                    \land V[R/\alpha]\subin{\Delta}T.\label{eq:back-tapp}
\end{align}
\end{corollary}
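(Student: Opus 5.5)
The plan is to derive both equations from Corollary~\ref{cor:canonical-rectangularity}, applied to the function subterm $f$, together with inversion of the declarative typing judgment. Inversion for an application or type application yields the corresponding elimination rule (\ruleref{App} or \ruleref{Type-App}) followed by at most one final \ruleref{Sub}. Any chain of subsumptions collapses to one by \ruleref{Trans}. Throughout, $A$ denotes the domain or bound of $\Expose{\Delta}{C}$, where $C$ is the canonical type of $f$, i.e.\ $\canrel{\Delta}{\Gamma}{f}{C}$. This $C$ exists and is unique whenever $f$ is typable, by part~(1). If $f$ is untypable, neither side of either equation holds, so we may assume $C$ exists. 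The equations are read under the hypothesis that $\Expose{\Delta}{C}$ has the relevant shape. If it does not, part~(2) shows that $f$ has no assignable arrow or universal type, so both sides fail.

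For \eqref{eq:back-app}, the right-to-left direction is immediate from \ruleref{App}. For left-to-right, inversion gives a type $U\to V$ with $\typing{\Delta}{\Gamma}{f}{U\to V}$, $\typing{\Delta}{\Gamma}{u}{U}$ and $V\subin{\Delta}T$. Because $f$ has an arrow type, part~(2) shows that $\Expose{\Delta}{C}$ is an arrow $A\to B$. The rectangularity equation \eqref{eq:domain-dominance} then gives $U\subin{\Delta}A$ and $\typing{\Delta}{\Gamma}{f}{A\to V}$. Next, \ruleref{Sub} lifts the argument to $\typing{\Delta}{\Gamma}{u}{A}$. Finally, \ruleref{Arrow} with reflexivity on the domain gives $A\to V\subin{\Delta}A\to T$, so \ruleref{Sub} yields $\typing{\Delta}{\Gamma}{f}{A\to T}$.

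For \eqref{eq:back-tapp}, the right-to-left direction applies \ruleref{Type-App} to $\typing{\Delta}{\Gamma}{f}{\All{\alpha}{A}{V}}$ using $R\subin{\Delta}A$, and then applies \ruleref{Sub} with $V[R/\alpha]\subin{\Delta}T$. For left-to-right, inversion gives $\typing{\Delta}{\Gamma}{f}{\All{\alpha}{U}{V}}$, $R\subin{\Delta}U$ and $V[R/\alpha]\subin{\Delta}T$. Part~(2) forces $\Expose{\Delta}{C}=\All{\alpha}{A}{B}$, after $\alpha$-renaming the binder to agree. The rectangularity equation \eqref{eq:bound-dominance} then gives $U\subin{\Delta}A$ and $\typing{\Delta}{\Gamma}{f}{\All{\alpha}{A}{V}}$ with the same body $V$. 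Transitivity yields $R\subin{\Delta}A$. The witness $V$ is kept unchanged, so the substituted comparison $V[R/\alpha]\subin{\Delta}T$ carries over directly.

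The only delicate step is the universal case. Without rectangularity, the bound $U$ found by inversion cannot simply be widened to $A$ while keeping the same body $V$. \ruleref{All-Top} would allow the body to vary only under $\alpha<:\tytop$, and so would not preserve $V[R/\alpha]$. This is exactly what \eqref{eq:bound-dominance} supplies, with $V$ identical on both sides. I also expect some care with binder freshness and scoping: $V$ lives in the scope extended by $\alpha$, and $R$ and $T$ are well formed in $\Delta$, so that substitution and inversion interact correctly.
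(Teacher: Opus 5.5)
Your proposal is correct and follows the paper's proof. Inversion supplies the elimination premises and a final subsumption; \eqref{eq:domain-dominance} and \eqref{eq:bound-dominance} widen the domain or bound to $A$ while keeping the same $V$; and the converses follow by \ruleref{App}/\ruleref{Type-App} and \ruleref{Sub}. Your handling of an untypable $f$ and of a mismatched exposed head via part~(2) only makes explicit what the paper leaves implicit. One remark is slightly off: the arrow case depends on rectangularity just as much as the universal case, because subsumption cannot widen a contravariant domain from $U$ to $A$ either.
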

\begin{proof}
Inversion supplies $f:U\to V$, $u:U$, and $V<:T$.
By \eqref{eq:domain-dominance}, $U<:A$ and $f:A\to V$, hence $u:A$ and
$f:A\to T$. Type application uses \eqref{eq:bound-dominance} with the
same body $V$. Both converses follow by elimination and subsumption.
\end{proof}

\subsection{Finite inverse substitution}
\label{sec:inverse}
\label{sec:generators}

Equation~\eqref{eq:back-app} determines the next target for $f$.
Equation~\eqref{eq:back-tapp} still quantifies over an unknown $V$, constrained
by its substituted form $V[R/\alpha]$. We need a finite description of acceptable
bodies before that substitution. Subsequent checking steps must be able to
transform this whole description again.

Consider a type application of Ghelli's example. Fix
$\Delta=(\beta<:\tytop)$ and an empty term context, and write
\begin{equation}
 \begin{aligned}
 f&=\tabs{\gamma}{\beta}{\tabs{\alpha}{\gamma}{\abs{x}{\alpha}{x}}},
 &s&=f\{\beta\},
 &T&=\All{\alpha}{\beta}{\alpha\to\beta}.
 \end{aligned}
 \label{eq:running-backward-query}
\end{equation}
The reduct of $s$ is the term from Section~\ref{sec:obstacles}.
Its canonical type is $\All{\alpha}{\beta}{\alpha\to\alpha}$, which is not
below $T$. One body target before substitution is
\begin{equation}
 W=\All{\alpha}{\gamma}{\alpha\to \gamma},\qquad W[\beta/\gamma]=T.
 \label{eq:running-backward-witness}
\end{equation}
Assigning $\All{\gamma}{\beta}{W}$ to $f$ suffices to check $s$ at $T$.
The inverse construction computes a finite family covering all acceptable
bodies, including this choice. Section~\ref{sec:backward-example} follows
it through the complete check.

For the general construction, fix well-formed contexts $\Delta,\Theta$,
a typed substitution $\sigma:\Delta\to\Theta$, and a target $T$ well formed
in $\Theta$. We seek finite families $\invlo{\sigma}{T}$ and
$\invup{\sigma}{T}$ in the source context $\Delta$. Writing
$\genlang{F}$ for the set of types represented by a family $F$, their
specifications are, for every $B$ well formed in $\Delta$,
\begin{align}
 B[\sigma]\subin{\Theta}T
 &\iff\exists V\in\genlang{\invlo{\sigma}{T}}.\ B\subin{\Delta}V,
       \label{eq:lower-inverse}\\
 T\subin{\Theta}B[\sigma]
 &\iff\exists V\in\genlang{\invup{\sigma}{T}}.\ V\subin{\Delta}B.
       \label{eq:upper-inverse}
\end{align}
The lower family generates exactly the source types whose substituted
forms lie below $T$. An acceptable body may be any subtype of a retained
representative. The upper family has the dual specification and is needed
when recursion enters a contravariant domain.

To propagate several acceptable targets and share their constructor
structure, we represent finite families by the grammar
\[
 F,G::=\varnothing\mid\{A\}\mid F\cup G
       \mid\garr{F}{G}\mid\gall{F}{G}.
\]
The denotation $\mathcal L$ respects empty families, singletons and choice:
$\genlang{\varnothing}=\varnothing$, $\genlang{\{A\}}=\{A\}$, and
$\genlang{F\cup G}=\genlang{F}\cup\genlang{G}$.
The arrow constructor takes two families in $\Delta$ and returns a family
in $\Delta$. The universal constructor takes a domain family in $\Delta$
and a body family in the extended variable scope, returning a family in
$\Delta$. Choose a fresh $\alpha$ bound in the body family $G$ but not in $F$.
These scope requirements concern variable declarations only. The product
denotations are
\begin{align*}
 \genlang{\garr{F}{G}}
   &=\{A\to B:A\in\genlang{F},\ B\in\genlang{G}\},\\
 \genlang{\gall{F}{G}}
   &=\{\All{\alpha}{A}{B}:A\in\genlang{F},\ B\in\genlang{G}\}.
\end{align*}
A finite set written as a family abbreviates a choice of singletons.
In a family argument position, a type $T$ stands for $\{T\}$.
Membership in a family means membership in its denotation.
Choices retain their own children. Replacing $\{A\to B,C\to D\}$ by
$\garr{\{A,C\}}{\{B,D\}}$ can introduce the additional types $A\to D$
and $C\to B$. A product permits independent choices on its two sides,
including when the same family occurs in both positions.

For fixed well-formed $\Delta;\Gamma$, the backward judgment checks a
term $s$ against a finite target representation $F$ in $\Delta$. It holds when
some represented target is assignable to the term:
\begin{equation}
 \checkfam{\Delta}{\Gamma}{s}{F}
 \iff \exists T\in\genlang{F}.\ \typing{\Delta}{\Gamma}{s}{T}.
 \label{eq:target-family-check}
\end{equation}
Here $F$ describes acceptable targets;
$\Basis{\Delta}{\Gamma}{s}$ summarizes all assignable types.
The initial query has $F=\{T\}$. Later queries may contain several
alternatives even though the original request has only one target.

The signed subtype predicates take a finite family $F$ and a type $Q$,
both in the displayed comparison context $\Theta$, and express existence
of a subtype or supertype witness:
\[
 \qry{+}{\Theta}{F}{Q}\iff\exists A\in\genlang{F}.\ A\subin{\Theta}Q,
 \qquad
 \qry{-}{\Theta}{F}{Q}\iff\exists A\in\genlang{F}.\ Q\subin{\Theta}A.
\]
The sign records which side of subtyping contains the family: the subtype
for $+$ and the supertype for $-$.
The single-target specifications extend to finite target families as follows.

\begin{theorem}[Finite inverse families]
\label{thm:inverse}
For well-formed $\Delta,\Theta$, a typed $\sigma:\Delta\to\Theta$, and a
finite family $F$ well scoped in $\Theta$, finite families
$\invfam{p}{\sigma}{F}$ are computable in $\Delta$ for $p\in\{-,+\}$.
For every $Q$ well formed in the source context $\Delta$,
\begin{equation}
 \qry{p}{\Delta}{\invfam{p}{\sigma}{F}}{Q}
 \iff \qry{p}{\Theta}{F}{\tysubst{Q}{\sigma}}.
 \label{eq:inverse-family}
\end{equation}
Every generated representative is well scoped and substitutes below
(minus) or above (plus) some member of $F$.
\end{theorem}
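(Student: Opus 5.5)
We reduce the family statement to single queries, and then build the single-target inverse families by recursion on $Q$'s shape, the way the cone covers of Theorem~\ref{thm:cones} were built.

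\textbf{Reduction to singletons.} Both sides of \eqref{eq:inverse-family} commute with finite unions and products. For choices we simply set $\invfam{p}{\sigma}{F\cup G}=\invfam{p}{\sigma}{F}\cup\invfam{p}{\sigma}{G}$. For $\garr{F}{G}$ and $\gall{F}{G}$ we do not expand the product. Instead we recurse structurally on the family, exactly as we do below for a syntactic arrow or universal, inverting $F$ with the opposite sign and $G$ with the same sign. So the real content is the case of a singleton target $\{T\}$, together with a structural case for every constructor node.

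\textbf{Single-target construction.} We proceed by induction on the weight of Section~\ref{sec:termination}, applied to $T$ and to the bounds of $\Delta$. The target can be $\tytop$, an arrow, a universal, or a variable of $\Theta$; the source $Q$ can be $\tytop$, an arrow, a universal, or a variable $\beta$ of $\Delta$, with $\beta[\sigma]=\sigma(\beta)$.

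For $p=-$ we need every $Q$ with $Q[\sigma]\subin{\Theta}T$, up to lower closure in $\Delta$. We take the union of three parts.
\begin{enumerate}
\item A structural part. If $T=\tytop$, take $\{\tytop\}$. If $T=T_1\to T_2$, take $\garr{\invup{\sigma}{T_1}}{\invlo{\sigma}{T_2}}$. If $T$ is universal, take $\gall{\invup{\sigma}{T_1}}{\invlo{\sigma_\top}{T_2}}$, where $\sigma_\top$ extends $\sigma$ by $\alpha\mapsto\alpha$ and is typed from $\Delta_\top$ to $\Theta,\alpha<:\tytop$.
\item A variable-source part: all $\beta\in\domof{\Delta}$ with $\sigma(\beta)\subin{\Theta}T$, which is decidable by \eqref{eq:subtyping-equivalence}.
\item Nothing further is needed. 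By algorithmic inversion, a non-variable $Q$ with a non-variable head can lie below $T$ only when $T$ is $\tytop$ or has the matching head. The case $T$ a variable of $\Theta$ is the delicate one: then $Q[\sigma]\subin{\Theta}T$ forces $Q$ to be a source variable, or to reduce by \ruleref{A-Promote} to one, so part 2 suffices.
\end{enumerate}

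For $p=+$ we need the types $Q$ with $T\subin{\Theta}Q[\sigma]$. Here $T$ sits on the left, and promotion of a $\Theta$-variable in $T$ is the source of difficulty. We first handle the targets by cases.
\begin{enumerate}
\item If $T$ is a variable $\gamma$, we add $\invup{\sigma}{\Theta(\gamma)}$, following its bound as in the case for $\Up{\Delta}{\alpha}$.
\item For every source variable $\beta$ with $T\subin{\Theta}\sigma(\beta)$, we add $\beta$.
\item For compound $T$, we take the dual products $\garr{\invlo{\sigma}{T_1}}{\invup{\sigma}{T_2}}$ and the corresponding universal product.
\end{enumerate}
Every source $Q$ with $T\subin{\Theta}Q[\sigma]$ must then be above some representative in $\Delta$. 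When $Q=\beta$ this is direct. Otherwise we invert the algorithmic derivation of $T\subin{\Theta}Q[\sigma]$, reading promotion steps on $T$ as the bound-following clause above.

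\textbf{Correctness and the main obstacle.} Soundness of each representative follows because typed substitutions preserve subtyping: if $Q\subin{\Delta}V$ then $Q[\sigma]\subin{\Theta}V[\sigma]\subin{\Theta}T$. For completeness we induct on the algorithmic derivation of the query, using \eqref{eq:expose-arrow} and \eqref{eq:expose-all}.

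The hard step is a source variable $\beta$ whose bound is used during the comparison. Suppose the source type $Q$ contains $\beta$ nested under constructors and $\sigma(\beta)$ is compound. Then the comparison $Q[\sigma]\subin{\Theta}T$ may decompose inside $\sigma(\beta)$, and $Q$ itself need not be below any syntactic representative, since $\beta$ is not an arrow in $\Delta$. We plan to handle this through the variable-source parts at every recursive level, which are called on subtargets and not only at the root. We must then show that $\beta\subin{\Delta}V$ is never required beyond $\beta$ itself, or its bound chain via \eqref{eq:relative-contexts}.

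The second obstacle is termination. The recursion enters universals with $\sigma_\top$ and follows bounds of $\Theta$. We would bound it by the pair of weights of $T$ in $\Theta$ and of the bounds of $\Delta$, mirroring the subtype checker's measure. If the plain weight fails, we would fall back on a lexicographic measure.
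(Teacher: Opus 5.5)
You have the paper's architecture: source-variable guards at every level, products with the sign reversed on domains, the lift to $\Delta,\alpha<:\tytop$ at universal bodies, bound-following for a positive variable target, union for choice, and recursion on the target. There is, however, a genuine hole in the positive construction: none of your three cases ever produces $\tytop$. Every target satisfies $T\subin{\Theta}\tytop=\tytop[\sigma]$, so coverage at $Q=\tytop$ forces $\invup{\sigma}{F}$ to be inhabited whenever $F$ is. Your cases do not guarantee this. With $\Delta=\Theta=\varnothing$ and $F=\{\tytop\}$, only the empty variable guard applies, so $\invup{\sigma}{\tytop}=\varnothing$ although $\qry{+}{\Theta}{\{\tytop\}}{\tytop}$ holds. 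With $\Theta=(\gamma<:\tytop)$ and $T=\gamma\to\tytop$, the product $\garr{\invlo{\sigma}{\gamma}}{\invup{\sigma}{\tytop}}$ is empty, because no source variable maps below $\gamma$. A variable target inherits the defect through $\invup{\sigma}{\Theta(\gamma)}$. The paper adds $\{\tytop\}$ to the positive inverse of $\tytop$, arrow and universal targets; variables acquire it at the end of their bound chain. For product families this branch is sound only if both children are inhabited. The construction must therefore test nonemptiness first and return the empty family otherwise. Your product case simply mirrors the singleton case and lacks this test. That is harmless in your version, but the test becomes necessary as soon as the $\tytop$ branch is added.

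Second, the step you single out as the main obstacle is not one, and the extra obligation you attach to it is unnecessary. Your own remedy, guards at every recursive level, already settles it. Prove coverage by induction on the target, splitting on the head of $Q$ at each level. If $Q$ is a source variable $\beta$ meeting a subtarget $T'$, then $\sigma(\beta)\subin{\Theta}T'$ (or $T'\subin{\Theta}\sigma(\beta)$ in the positive case) is exactly the guard condition at that level. So $\beta$ itself is a representative by reflexivity; no bound chain in $\Delta$ and no appeal to \eqref{eq:relative-contexts} is involved. If $Q$ is not a variable, substitution preserves its root, so algorithmic inversion matches it with the target's head. In particular a nonvariable $Q$ can never lie below a target variable, which is cleaner than your ``reduce by \ruleref{A-Promote}'' remark. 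Accordingly the bounds of $\Delta$ play no role in termination. The weight $\tyweight{\Theta}{T}$ alone decreases on constructor children, on following a bound of $\Theta$, and on universal bodies under $\alpha<:\tytop$, so no lexicographic fallback is needed.
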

\begin{proof}[Construction and proof]
Construct both signs simultaneously by induction on the singleton target's
weight from Section~\ref{sec:termination}, ranging over source contexts
and typed substitutions. A source variable $\alpha$ qualifies for a variable
guard precisely when
$\sigma(\alpha)\subin{\Theta}T$ for the lower inverse, or
$T\subin{\Theta}\sigma(\alpha)$ for the upper inverse. These finitely many
decidable tests cover all variable candidates.

For an arrow candidate and target, inversion gives
\[
 (B_1\to B_2)[\sigma]\subin{\Theta}C\to D
 \iff C\subin{\Theta}B_1[\sigma]\ \land\
       B_2[\sigma]\subin{\Theta}D.
\]
Upper coverage of $C$ supplies $a$ with $a\subin{\Delta}B_1$.
Lower coverage of $D$ supplies $b$ with $B_2\subin{\Delta}b$.
Their arrow $a\to b$ covers the candidate $B_1\to B_2$ from above.
The corresponding equation for the reference construction is
\begin{equation}
 \begin{split}
 \invlo{\sigma}{C\to D}
 ={}&\{\alpha\in\domof{\Delta}:
             \sigma(\alpha)\subin{\Theta}C\to D\}\\
 &{}\cup\garr{\invup{\sigma}{C}}{\invlo{\sigma}{D}}.
 \end{split}
 \label{eq:inverse-arrow-construction}
\end{equation}
Representative soundness gives $(a\to b)[\sigma]\subin{\Theta}C\to D$.
The reversed comparison exchanges the two recursive signs. Universal
bodies use the lifted substitution under a binder with bound $\tytop$ on
both sides, independently of the chosen bound. A positive target variable
follows its bound, decreasing the weight. The cases for $\tytop$ follow by
subtype inversion.

Soundness follows from the guards and constructor rules. Inversion gives
recursive witnesses for completeness. Extend the construction to choice by
union and to independent products by combining their child witnesses with
the appropriate signs. An empty target family has empty inverses in both
directions. In particular, a product with an empty child contributes no
representatives. Subtype
preservation by typed substitution and representative soundness give the
reverse implications.
Appendix~\ref{app:inverse-construction} gives all construction equations
and the simultaneous soundness and coverage induction.
\end{proof}

Taking $\sigma=\idsubst:\forget{\Delta}\to\Delta$ recovers the cone
specifications \eqref{eq:lower} and~\eqref{eq:upper}, together with member
soundness. This identity is typed because its source bounds are all $\tytop$.
The syntax of a type is unchanged, but its two comparison contexts differ.
The specifications coincide, without requiring identical representative lists.

\paragraph{The comparison bound \texorpdfstring{$\tytop$}{top}.}
\label{sec:single-hole}

The checker needs $\sigma=[R/\alpha]:(\Delta,\alpha<:\tytop)\to\Delta$, lifted under
local binders. Combining its lower inverse with \eqref{eq:back-tapp} gives
\begin{equation}
 \typing{\Delta}{\Gamma}{f\{R\}}{T}
 \iff R\subin{\Delta}A\ \land\
 \exists W\in\genlang{\invlo{[R/\alpha]}{T}}.\
       \typing{\Delta}{\Gamma}{f}{\All{\alpha}{A}{W}}.
 \label{eq:back-inverse}
\end{equation}
Inverse completeness gives $V\subin{\Delta,\alpha<:\tytop}W$.
\ruleref{All-Top} lifts this comparison to quantified types, assigning $f$ the type
$\All{\alpha}{A}{W}$. Conversely, inverse soundness gives $W[R/\alpha]<:T$, so
\ruleref{Type-App} followed by subsumption proves the required typing.
The source bound $\alpha<:\tytop$ is necessary for this argument: a comparison
derived only under the actual bound $A$ would not justify the \ruleref{All-Top} step.

\subsection{Checking equations}
\label{sec:backward-equations}

The partial synthesis function $\canon{\Delta}{\Gamma}{s}$ implements the
canonical rules of Section~\ref{sec:canonical-domains}. It evaluates
canonical premises recursively and subtype premises by the subtyping
decider. At application, it implements the declarative guard
$\typing{\Delta}{\Gamma}{u}{A}$ by the family check
$\checkfam{\Delta}{\Gamma}{u}{\{A\}}$ defined below. Failed lookup,
ill-formed annotations or type arguments, wrong exposed heads, and failed
guards reject. Theorem~\ref{thm:backward} proves synthesis and checking
correct together.

Checking $\abs{x}{A}{s}$ against $C\to D$ requires
$C\subin{\Delta}A$ and a body check at $D$. For a finite target family
$F$, filtered projections $\mathsf P^\to$ and $\mathsf P^\forall$ retain
the complete body families of qualifying branches. These families are the
targets passed to the abstraction's body.
With $A$ well formed in $\Delta$, the arrow projection stays in $\Delta$;
the universal projection returns a family in $\Delta,\alpha<:A$, renaming
target binders to the abstraction's fresh $\alpha$:
\begin{align*}
 \genlang{\bodysel{\Delta}{\to}{F}{A}}
   &=\{D:\exists C.\ C\to D\in\genlang{F}
                        \land C\subin{\Delta}A\},\\
 \genlang{\bodysel{\Delta}{\forall}{F}{A}}
   &=\{D:\exists C.\ \All{\alpha}{C}{D}\in\genlang{F}
                        \land C\subin{\Delta}A\}.
\end{align*}
Both domain tests use the displayed outer context $\Delta$.
The projections distribute over choice and inspect singleton types. For a
matching product, retain its body child exactly when the positive query
on its domain child and $A$ succeeds. Domain and result choices therefore
remain within the same branch.

Each checking call first runs canonical synthesis
and rejects on failure or an empty target family. Success establishes typability, supplies elimination domains,
and discharges the guards $u:A$ and $R\subin{\Delta}A$.
If $\tytop\in\genlang{F}$, accept. Otherwise use the equations below.
In the final two lines, $A$ is the domain or bound exposed from
$\canon{\Delta}{\Gamma}{f}$. The last line uses the inverse
of \eqref{eq:back-inverse}, with source context $\Delta_\top$:
\begin{align*}
 \checkfam{\Delta}{\Gamma}{x}{F}
   &\iff\qry{-}{\Delta}{F}{\Gamma(x)},\\
 \checkfam{\Delta}{\Gamma}{\abs{x}{A}{s}}{F}
   &\iff\checkfam{\Delta}{\Gamma,x:A}{s}{\bodysel{\Delta}{\to}{F}{A}},\\
 \checkfam{\Delta}{\Gamma}{\tabs{\alpha}{A}{s}}{F}
   &\iff\checkfam{\Delta_A}{\Gamma}{s}{\bodysel{\Delta}{\forall}{F}{A}},\\
 \checkfam{\Delta}{\Gamma}{f\,u}{F}
   &\iff\checkfam{\Delta}{\Gamma}{f}{\garr{\{A\}}{F}},\\
 \checkfam{\Delta}{\Gamma}{f\{R\}}{F}
   &\iff\checkfam{\Delta}{\Gamma}{f}{\gall{\{A\}}{\invlo{[R/\alpha]}{F}}}.
\end{align*}
The term $\tmconst$ is rejected when $\tytop\notin\genlang{F}$.
The last two lines follow from the elimination and inverse equations.
Appendix~\ref{app:cost} implements these equations by producing candidates
along a whole checking spine and saving its elimination domains and scoped
obligations. Discharging these obligations once supplies the canonical
premises for subsequent checks on that spine, whose changing target families
do not require synthesis to be repeated.

\begin{theorem}[Backward checking correctness]
\label{thm:backward}
Canonical synthesis and family checking terminate. For well-formed contexts,
synthesis returns $C$ exactly when $\canrel{\Delta}{\Gamma}{s}{C}$ holds,
and checking any well-scoped finite family satisfies
\eqref{eq:target-family-check}.
With malformed inputs rejected, the procedure accepts a raw input
$(\Delta,\Gamma,s,T)$ exactly when $\typing{\Delta}{\Gamma}{s}{T}$ holds.
It consequently agrees with the finite-basis checker.
\end{theorem}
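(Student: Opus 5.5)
We prove termination and both correctness claims by a single induction on the raw term $s$, generalizing over the contexts $\Delta;\Gamma$ and over the finite target family $F$. Synthesis and checking are mutually recursive. Synthesis at $f\,u$ calls checking on $u$ with $\{A\}$. Checking on $s$ calls synthesis on $s$ itself, and otherwise calls checking only on proper subterms. So we take a lexicographic measure: first the term size, then a flag ordering synthesis before checking on the same term. Each call to checking on $s$ first runs $\canon{\Delta}{\Gamma}{s}$, and otherwise recurses only into proper subterms. The auxiliary operations are total:
\begin{itemize}
\item exposure and subtyping terminate by Section~\ref{sec:termination};
\item the inverse families are computable by Theorem~\ref{thm:inverse};
\item the projections $\mathsf P^\to,\mathsf P^\forall$ are structural on family syntax and use only positive queries, which are finitely many decidable subtype tests.
\end{itemize}
This gives termination.

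For synthesis, at each constructor we show that $\canon{\Delta}{\Gamma}{s}$ returns $C$ iff $\canrel{\Delta}{\Gamma}{s}{C}$. The function mirrors the rules, and the canonical relation is deterministic by Corollary~\ref{cor:canonical-rectangularity}(1). The only nonsyntactic premise is the guard $\typing{\Delta}{\Gamma}{u}{A}$ at application. The function realizes it as $\checkfam{\Delta}{\Gamma}{u}{\{A\}}$, which is correct by the induction hypothesis on the proper subterm $u$, via \eqref{eq:target-family-check}.

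For checking, we prove \eqref{eq:target-family-check} case by case. If synthesis fails, $s$ is untypable by Corollary~\ref{cor:canonical-rectangularity}(1), so rejection is correct; an empty family is rejected trivially. If $\tytop\in\genlang{F}$, acceptance is correct because typability plus \ruleref{Sub} gives $\tytop$. Otherwise, $\tytop\notin\genlang{F}$, and typing inversion (the constructor's rule followed by one subsumption) handles the cases:
\begin{itemize}
\item \emph{Variables.} This is exactly the negative query on $\Gamma(x)$.
\item \emph{$\tmconst$.} The only supertype of $\tytop$ is $\tytop$, so rejection is correct.
\item \emph{Abstraction.} A non-$\tytop$ target above $A\to B$ must, by algorithmic inversion, be an arrow $C\to D$ with $C\subin{\Delta}A$ and $B\subin{\Delta}D$. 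So $\abs{x}{A}{s}$ has some target in $F$ iff the body has some $D\in\genlang{\bodysel{\Delta}{\to}{F}{A}}$. We check that the product clause of the projection computes exactly this denotation: distribution over choice, plus a positive query on the domain child.
\item \emph{Type abstraction.} This is the same argument with \ruleref{All-Top}. The subtlety is that inversion gives a body comparison only under $\Delta_\top$, whereas the body is checked under $\Delta_A$. Bound restoration \eqref{eq:relative-contexts} together with \ruleref{Sub} in $\Delta_A$ closes the gap in one direction. Conversely, a body typing at $D$ under $\Delta_A$ gives $\All{\alpha}{A}{D}$ directly, after renaming target binders.
\end{itemize}

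The eliminations are the main content, and they reduce to earlier results. For $f\,u$, \eqref{eq:back-app} holds with the canonical domain $A$, and the guard $u:A$ is already discharged by successful synthesis. So $\exists T\in\genlang{F}.\,f\,u:T$ iff $\exists T\in\genlang{F}.\,f:A\to T$, which is exactly membership in $\genlang{\garr{\{A\}}{F}}$. For $f\{R\}$, we need a family version of \eqref{eq:back-inverse}. From \eqref{eq:back-tapp}, some $T\in\genlang{F}$ satisfies $f\{R\}:T$ iff $R\subin{\Delta}A$ holds and there is a $V$ with $f:\All{\alpha}{A}{V}$ and $\qry{-}{\Delta}{F}{V[R/\alpha]}$. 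Applying Theorem~\ref{thm:inverse} with the typed substitution $[R/\alpha]:\Delta_\top\to\Delta$, the last condition becomes $V\subin{\Delta_\top}W$ for some $W$ in the lower inverse. \ruleref{All-Top} and \ruleref{Sub} then transfer the typing of $f$ from $V$ to $W$; the converse is immediate by taking $V=W$. We expect this step to be the main obstacle, for two reasons. First, we must verify that the inverse is taken with source bound $\tytop$ and not $A$, since only then does \ruleref{All-Top} apply. Second, $\gall{\{A\}}{\cdot}$ must denote exactly the universals with bound $A$ over that inverse, with consistent binder renaming. The recursive call on $f$ then uses the induction hypothesis.

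Finally, on raw inputs we reject malformed environments and targets, which admit no derivations. For well-formed inputs we run the checker on $\{T\}$, and \eqref{eq:target-family-check} gives exactly $\typing{\Delta}{\Gamma}{s}{T}$. Agreement with the finite-basis checker then follows from Corollary~\ref{cor:decidable}, since both procedures decide the same judgment.
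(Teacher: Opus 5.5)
Your proposal is correct and follows essentially the same route as the paper: strong induction on term size with synthesis established before checking at each size, the checking induction hypothesis on $u$ discharging the application guard, and bound restoration from $\Delta_\top$ to $\Delta_A$ at type abstraction. The eliminations likewise use Equations~\eqref{eq:back-app} and~\eqref{eq:back-inverse}, which you correctly lift to target families via Theorem~\ref{thm:inverse} with source bound $\tytop$. The one step you compress is the converse at type abstraction: \ruleref{Type-Abs} yields $\All{\alpha}{A}{D}$, and the final subsumption to the selected $\All{\alpha}{C}{D}$ needs both the successful domain test $C\subin{\Delta}A$ and reflexivity of $D$ in $\Delta_\top$.
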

\begin{proof}
Use strong induction on term size, generalizing over well-formed contexts
and finite target families. At each size, first prove that synthesis
terminates and returns exactly the types related by $\vdash_{\mathrm c}$;
then prove termination and correctness of family checking at that size.

Synthesis recurses only on proper subterms. By the synthesis induction
hypothesis, its recursive results coincide with the canonical premises.
For an application $f\,u$, the checking induction hypothesis for the smaller
term $u$ makes the test at $\{A\}$ equivalent to the declarative guard $u:A$.
Subtype decidability supplies the type-application guard; lookup, formation
and exposure tests implement the remaining premises. Thus synthesis
succeeds exactly when the canonical relation holds. Lemma~\ref{lem:canonical}
then gives an assignable result on success and excludes every typing on failure.

Family checking at the current size uses this already established synthesis
result. An empty family rejects. When $\tytop\in\genlang{F}$, checking succeeds
exactly when synthesis does. All remaining recursive checks are on a proper
body or function child, so the checking induction hypothesis applies to
their transformed target families. Projections and signed queries terminate,
and finite inverse operations terminate independently by Theorem~\ref{thm:inverse}.

Variables and introductions follow inversion and branch selection.
For type abstraction, restore body comparisons from $\Delta_\top$ to $\Delta_A$.
Conversely, a selected body type can be used in \ruleref{Type-Abs}. Reflexivity
in $\Delta_\top$ and the successful domain test justify final subsumption.
Equations~\eqref{eq:back-app} and~\eqref{eq:back-inverse} prove elimination.
The wrapper rejects malformed environments and targets. Synthesis rejects
invalid annotations and type arguments; typing formation excludes a
derivation in each case.
\end{proof}

Only inverse exactness enters the type-application argument. Any total
inverse construction satisfying that interface gives the same checking
answers: each computes the target-family judgment
\eqref{eq:target-family-check}. The general construction and the single-hole
specialization of Section~\ref{app:single-hole-refinement} are instances of
this result.

\subsection{Beyond the canonical upper cone}
\label{sec:backward-example}

For the query in \eqref{eq:running-backward-query}, the canonical subtype
test fails, but the target family has an accepting branch. The canonical
bound of $f$ is $\beta$, so its type-argument guard succeeds. The lower
inverse contains the body $W$ of \eqref{eq:running-backward-witness}.
The positive guard on the quantified domain and the negative guard on
the arrow result both test $\beta<:\beta$ and permit the fresh $\gamma$;
the arrow domain retains $\alpha$. Equations~\eqref{eq:back-inverse} and the
body projections then give:
\begin{center}
\begin{tabular}{lll}
\toprule
Current term & Selected target & Successful test or choice\\
\midrule
$f\{\beta\}$ & $T$ & $\beta<:\beta$; inverse member $W$\\
$f$ & $\All{\gamma}{\beta}{W}$ & $\beta<:\beta$; retain body $W$\\
$\tabs{\alpha}{\gamma}{\abs{x}{\alpha}{x}}$ & $W$ & $\gamma<:\gamma$; retain $\alpha\to\gamma$\\
$\abs{x}{\alpha}{x}$ & $\alpha\to\gamma$ & $\alpha<:\alpha$; retain $\gamma$\\
$x$ & $\gamma$ & $\Gamma(x)=\alpha$, $\alpha<:\gamma$\\
\bottomrule
\end{tabular}
\end{center}
The recursive checks enter the actual contexts $\gamma<:\beta$ and then
$\alpha<:\gamma$, where the last variable test succeeds. The complete
inverse and body families retain their other branches.

\section{Symbolic Implementation}
\label{sec:symbolic-implementation}

The backward checker consumes finite families through signed queries,
body selection and inverse substitution. The specification does not
require enumerating their denotations. We specialize the inverse to the
substitutions generated by type application, then share the resulting
family structure and its comparison queries.

\subsection{Single-hole inverse specialization}
\label{app:single-hole-refinement}

For a type-application substitution, old source variables need no separate
guards: retaining the original targets already covers them.

Fix a comparison context $\Theta$, a replacement $R$ well formed there,
and a fresh hole $\alpha$. The typed substitution
$[R/\alpha]:(\Theta,\alpha<:\tytop)\to\Theta$ maps the new source scope back to the
target scope. Input families are in $\Theta$, and inverse families are
in $\Theta,\alpha<:\tytop$.
Let $\weaksubst:\Theta\hookrightarrow\Theta,\alpha<:\tytop$ be the scope
inclusion, fixing each old variable.
For every type $T$ in $\Theta$,
\[
 (T[\weaksubst])[R/\alpha]=T.
\]
Substitution therefore retracts this embedding. The recursive rules below
retain every old target member in the extended scope. These copies cover
the old-variable alternatives tested by the general inverse.

For example, let $\beta$ be an old variable and $T$ a member of the target
family. In the lower inverse, the general construction adds $\beta$ when
$\beta\subin{\Theta}T$. Weakening preserves that comparison:
\[
 \beta\subin{\Theta}T
 \quad\Longrightarrow\quad
 \beta\subin{\Theta,\alpha<:\tytop}T.
\]
The retained $T$ covers $\beta$ and, by transitivity, every
source type below $\beta$. Omitting the separate $\beta$ alternative
therefore preserves all lower-inverse queries. Dually, if
$T\subin{\Theta}\beta$, the retained $T$ lies below $\beta$
and covers its upper-inverse alternative. For the fresh hole $\alpha$,
substitution instead produces $R$, so its eligibility is decided by
comparing $R$ with the target family.

Write $\invshort{p}{F}$ for the specialized inverse below. It has the
coverage specification of $\invfam{p}{[R/\alpha]}{F}$, but need not return the
same representatives. For a family $F$ in the fixed target context $\Theta$, the
fresh-hole guard returns either the singleton $\{\alpha\}$ or the empty family
in the extended source scope:
\[
 \invgate{p}{F}=\begin{cases}
   \{\alpha\},&\qry{p}{\Theta}{F}{R},\\
   \varnothing,&\text{otherwise}.
 \end{cases}
\]
With this scope convention, the pruned rules are
\begin{align*}
 \invshort{-}{\varnothing}&=\invshort{+}{\varnothing}=\varnothing,\\
 \invshort{-}{\tytop}&=\{\tytop\},&
 \invshort{+}{\tytop}&=\{\tytop\}\cup\invgate{+}{\{\tytop\}},\\
 \invshort{-}{\beta}&=\{\beta\}\cup\invgate{-}{\{\beta\}},&
 \invshort{+}{\beta}&=\{\beta\}\cup\invgate{+}{\{\beta\}}\cup\invshort{+}{\Theta(\beta)},\\
 \invshort{-}{\garr{F}{G}}
   &=\garr{\invshort{+}{F}}{\invshort{-}{G}}\cup\invgate{-}{\garr{F}{G}},\\
 \invshort{+}{\garr{F}{G}}
   &=\garr{\invshort{-}{F}}{\invshort{+}{G}}\cup\invgate{+}{\garr{F}{G}}.
\end{align*}
Both operations distribute over choice. Universals have the same variance,
with lifted substitution and a body context with bound $\tytop$. Under local binders
chosen fresh for $R$, the lifted substitution fixes their variables and
all old variables, and maps the hole $\alpha$ to $R$ viewed in the extended
scope. Empty products are removed. An inhabited
positive structural branch already covers $\tytop$.

Induction over the construction shows that every old target member is
retained in the extended scope. Retraction makes each such member sound
after substitution.
The old-variable argument above, the fresh-hole tests and constructor
inversion establish the inverse specifications.

\subsection{Sharing family structure}

A family expression can be stored as an acyclic graph, with repeated
child expressions sharing a node. For the specialized inverse above,
take $R=\tytop\to\tytop$:
\[
 \invshort{-}{R\to R}=\garr{\{R,\alpha\}}{\{R,\alpha\}}.
\]
The two occurrences of $\{R,\alpha\}$ share one child node while allowing
independent choices, so the graph represents all four arrows.
Choice nodes retain their own children, preserving the branch correlations
of Section~\ref{sec:generators}.

Signed queries can use that structure directly. For an arrow target,
subtype inversion separates the two component tests:
\begin{align*}
 \qry{+}{\Theta}{\garr{F}{G}}{C\to D}
   &\iff\qry{-}{\Theta}{F}{C}\land\qry{+}{\Theta}{G}{D},\\
 \qry{-}{\Theta}{\garr{F}{G}}{C\to D}
   &\iff\qry{+}{\Theta}{F}{C}\land\qry{-}{\Theta}{G}{D}.
\end{align*}
Each conjunction chooses witnesses from the two children of the same
product. A choice node uses disjunction over its branches. Nonemptiness
is also structural: a product needs both children, a choice needs one,
and a singleton is inhabited. The test $\tytop\in\genlang{F}$ inspects
singleton and choice roots; constructor products contribute no $\tytop$ member.
These operations supply the empty-product and $\tytop$-membership tests
without materializing the represented types.

The single-hole inverse creates transformed constructor nodes and guarded
hole alternatives. Reusing a transformed child avoids repeating the
same inverse calculation. Appendix~\ref{sec:large-families} exhibits a
generator family with double-exponential cardinality and a linear shared
representation. Its members form an antichain, so dominance pruning
retains them all while sharing avoids their enumeration.

\subsection{Queries depend on the comparison context}

A query $\qry{p}{\Theta}{F}{Q}$ depends on the live family graph and
ordered bound assignments, together called a \emph{snapshot}. Replacing
a variable's bound $\tytop$ by its actual bound may change the answer.

Inverse substitution imports copies of variable bounds and their constructor
components at successive checking steps. A group fixes the scoped source
type, captured variable identities, scope cursor and query direction.
The cursor identifies a scope prefix; birth slots remain fixed on retirement.
Within one fixed snapshot and at a fixed endpoint, successes on live slots
form a prefix: older matching copies retain the alternatives of later ones.
A \emph{cutoff} records the first failure or an all-success marker.

Inverse guards are evaluated in the old snapshot before insertion mutates
the graph. Insertion and bound activation discard the prototypes' snapshot-local
comparison tables. Ordinary comparisons between variable-free types remain
valid across snapshots. Body selection retains its child family and retires
unneeded material without renumbering births.

For example, two inverse steps with replacement $R=\tytop\to\tytop$ and
successive fresh holes $\alpha,\gamma$ with bound $\tytop$ can give an earlier copy
$F_0=\{R,\alpha,\gamma\}$ and a later copy $F_1=\{R,\gamma\}$.
The negative query at endpoint $\alpha$ succeeds for $F_0$ and fails for
$F_1$, so the cutoff is one. Theorem~\ref{thm:backward} uses
the exact signed-query interface independently of this cache representation.

The evaluated variants differ in storage allocation. Dense initializes
scope-local comparison tables; shared allocates entries on demand and
also shares ordinary comparisons between variable-free types across
scopes. Both store the canonical domains attached to application targets
as family nodes.

\section{Mechanization and Evaluation}
\label{sec:mechanization}
\newcommand{\EvalPrograms}{130}
\newcommand{\EvalQueries}{582}
\newcommand{\EvalAccepted}{294}
\newcommand{\EvalRejected}{288}
\newcommand{\EvalBeyondCanonical}{16}
\newcommand{\EvalSameShapeRejected}{129}
\newcommand{\EvalTimedChecks}{12222}
\newcommand{\EvalCounterRuns}{1746}
\newcommand{\EvalReferenceDenseWins}{531}
\newcommand{\EvalReferenceSharedWins}{530}
\newcommand{\EvalMaxMedianMs}{131.2}
\newcommand{\EvalRepeats}{7}
\newcommand{\EvalBitsMin}{40}
\newcommand{\EvalBitsMax}{1805}

\newcommand{\EvalFamiliesMax}{1250}
\newcommand{\EvalEqualFamilies}{582}
\newcommand{\EvalSmallerComparisons}{260}
\newcommand{\EvalZeroComparisons}{322}

\subsection{Mechanization}
\label{sec:proof-correspondence}

The finite-family and backward checkers are terminating Rocq functions,
proved equivalent to the original typing judgments, including rejection
of malformed inputs. The backward checker is verified with both the general
inverse of Section~\ref{sec:inverse} and the single-hole specialization of
Section~\ref{app:single-hole-refinement}. Both satisfy the exact-inverse
interface, although they may produce different representatives.
We reuse the type syntax and substitution lemmas from the
POPLmark~\cite{AydemirEtAl2005} example distributed with
Autosubst~\cite{SchaeferTebbiSmolka2015}.
Autosubst uses functional extensionality.

An independent, axiom-free Rocq development verifies the execution model
in Appendix~\ref{app:cost} on parsed inputs: totality, unique answers and
charged work, equivalence to the original typing judgment, and a quartic
work bound. Decoding and the concrete representation of numeric fields
are analyzed on paper. The measured OCaml prototypes are independent
implementations; the cost certificate applies to the specified model.

\paragraph{LLM Assistance.}
All mathematical arguments and proof strategies were developed by the
authors. In the Rocq formalization, we used OpenAI Codex solely to assist with
completing proof scripts for selected lemmas. Rocq mechanically checked
the resulting proofs, and the authors reviewed the formalized statements.

\subsection{Prototype evaluation}
\label{sec:implementation-study}

We examine the distribution of symbolic family sizes and comparison storage
across complete checking judgments. Studies of Boolean-unification-based
inference likewise examine representation growth and query structure
\cite{MadsenVanDePol2020,MadsenVanDePolHenriksen2023}.

Three native OCaml checkers use machine integers and shared type nodes.
The \emph{finite-family reference} synthesizes a correlated symbolic family
and compares it with the target. The \emph{dense} and \emph{shared} backward
prototypes use the same scoped-cutoff procedure from
Section~\ref{sec:symbolic-implementation}, with the storage policies
described there.
The separate costed realization also uses producer-lineage transport and
persistent storage, which these prototypes do not implement.

\paragraph{Workloads and protocol.}
The fixed synthetic corpus combines call, composition and callback templates,
witness-carrying typed random generation, and stress families varying spine,
bound-chain and type depth. A nonminimum typing family exercises introductions,
dependent instantiations and mixed redexes. Scales are $k=2,4,8,16,32$.
After target deduplication, \EvalPrograms{} programs give \EvalQueries{}
distinct judgments: \EvalAccepted{} accepted and \EvalRejected{} rejected.
These include \EvalBeyondCanonical{} accepted targets outside the canonical
upper cone and \EvalSameShapeRejected{} rejections preserving the exposed
constructor. All generated programs are typable; rejected judgments test
target incompatibility. The reference assigns labels; an extracted
certified Rocq checker agrees on 231 judgments at scales two and four.

Input size $n$ counts the complete tuple $(\Delta,\Gamma,s,T)$ in a tree
encoding, including repeated occurrences despite implementation sharing.
The corpus spans \EvalBitsMin{}--\EvalBitsMax{} bits.
Each complete query starts with empty caches. Family-node counts cover all
symbolic family nodes created during the check, including those later
retired. All \EvalTimedChecks{}
uninstrumented checks and \EvalCounterRuns{} separate counter runs complete
and agree with the labels. Appendix~\ref{app:evaluation-details} gives the
encoding, generators, separate regression checks, protocol and full results.

\begin{figure}[htbp]
  \centering
  \includegraphics[width=\linewidth]{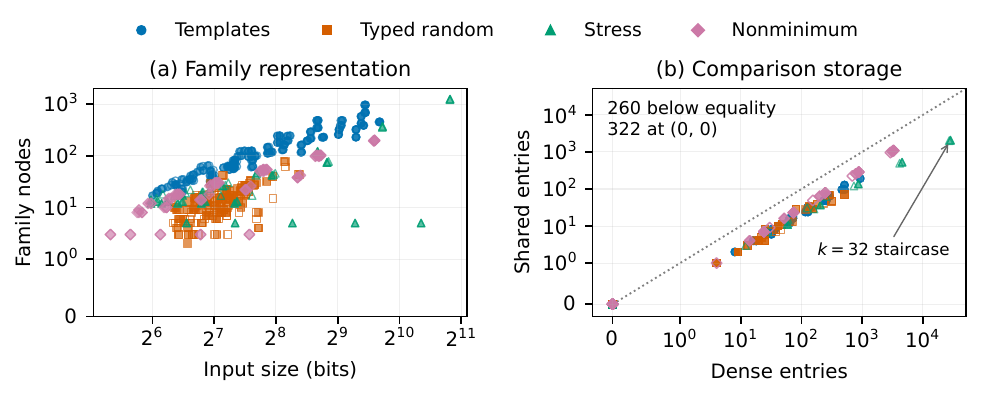}
  \caption{Structural cost across all \EvalQueries{} distinct judgments.
  (a)~Shared backward's interned family nodes against complete encoded input size;
  dense has identical family counts on all judgments.
  (b)~Paired dense/shared comparison entries; the dotted line denotes equality.
  Each point is one judgment, colored by workload family.
  Count axes retain zeros. Filled/hollow markers denote accepted/rejected targets.}
  \Description{Two scatter plots covering all 582 judgments, colored by four
  workload families. The first shows shared backward family nodes against
  complete input bits. Templates form an increasing band, while other families
  show different representation sizes at comparable input lengths. Stress
  includes both small representations and the maximum of 1250 nodes. The second
  pairs dense and shared comparison entries for each judgment: 260 pairs lie
  below equality and 322 are at the origin. The depth-32 staircase is annotated.}
  \label{fig:implementation-structure}
\end{figure}

\paragraph{Structural cost distribution.}
In Figure~\ref{fig:implementation-structure}(a), templates occupy an increasing
band, typed-random queries mostly use small families, and stress includes
both small representations and the largest families. At 149 input bits,
family counts range from 3 to 64. Both backward variants intern identical
counts on all \EvalEqualFamilies{} judgments, with a maximum of \EvalFamiliesMax{}.

\paragraph{Storage and runtime.}
Shared uses fewer entries on \EvalSmallerComparisons{}
judgments; the remaining \EvalZeroComparisons{} allocate none in either
variant (Figure~\ref{fig:implementation-structure}(b)). Dense counts include unused initialized cells; shared counts
include scoped and variable-free entries.

On the depth-32 staircase witness, both variants intern 1,250 family nodes,
while sharing reduces comparison entries from 28,569 to 2,081. Median CPU
time rises from 10.5 to 11.9\,ms and allocation volume from 10.2 to 11.4\,MB.
The reference has a lower CPU median than shared on \EvalReferenceSharedWins{} of
\EvalQueries{} queries. For the depth-32 dependent-instantiation program,
the median of its five query medians is 119.2\,ms for reference and
1.5\,ms for shared.

Appendix~\ref{app:evaluation-details} reports comparison entries by input
size and further measurements for the same prototypes.

\section{Consequences and Scope}
\label{sec:method-scope}

The inverse theorem applies to any fixed typed substitution. Identity
between different comparison contexts gives bound forgetting
(Section~\ref{sec:inverse}); weakening gives a second application, finite
type avoidance. Its use in another calculus requires that calculus's
own coverage and typing arguments.

\subsection{Finite type avoidance}
\label{sec:type-avoidance}

The inverse construction also computes types expressible after a local
name leaves scope, an avoidance problem studied for bounded existentials
\cite{GhelliPierce1998} and module signatures \cite{Crary2020}.
Let $B$ be well scoped in $\Delta_A$, and write
$\weaksubst:\Delta\hookrightarrow\Delta_A$ for the scope inclusion.

\begin{corollary}[Effective finite type avoidance]
\label{cor:type-avoidance}
For well-formed $\Delta,A$ and $B$ well scoped in $\Delta_A$, one can
compute finite families $F,G$ in $\Delta$ such that every well-scoped
$T$ in $\Delta$ satisfies
\begin{align}
 B\subin{\Delta_A}T
   &\iff \exists V\in\genlang{F}.\ V\subin{\Delta}T,
       \label{eq:upper-avoidance}\\
 T\subin{\Delta_A}B
   &\iff \exists V\in\genlang{G}.\ T\subin{\Delta}V.
       \label{eq:lower-avoidance}
\end{align}
\end{corollary}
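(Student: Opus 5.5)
We apply Theorem~\ref{thm:inverse} to the scope inclusion, regarded as a typed substitution $\weaksubst:\Delta\to\Delta_A$. It is typed because each old variable $\beta$ is sent to itself, and $\beta\subin{\Delta_A}\Delta(\beta)$ holds by \ruleref{Bound} after weakening. Images are well formed in $\Delta_A$, and weakening preserves the bound. For the singleton target family $\{B\}$, which is well scoped in $\Delta_A$, the theorem then supplies computable families
\[
 G=\invlo{\weaksubst}{\{B\}},\qquad F=\invup{\weaksubst}{\{B\}}
\]
in the source context $\Delta$. They satisfy \eqref{eq:inverse-family} for every $Q$ well formed in $\Delta$.

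Next we unfold the signed queries, and the only bookkeeping is to identify $Q[\weaksubst]$ with $Q$. Every $T$ well scoped in $\Delta$ is also well scoped in $\Delta_A$, and the inclusion does not change its syntax up to the implicit weakening convention. For the plus sign, \eqref{eq:inverse-family} reads
\[
 \exists V\in\genlang{F}.\ V\subin{\Delta}T \iff \exists A'\in\genlang{\{B\}}.\ A'\subin{\Delta_A}T[\weaksubst],
\]
and the right side is just $B\subin{\Delta_A}T$. This is \eqref{eq:upper-avoidance}. For the minus sign, the query $\qry{-}{\Delta}{G}{T}$ asks for some $V\in\genlang{G}$ with $T\subin{\Delta}V$. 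The other side, $\qry{-}{\Delta_A}{\{B\}}{T}$, asks for $T\subin{\Delta_A}B$. This is \eqref{eq:lower-avoidance}. Note that the upper specification \eqref{eq:upper-inverse} is stated with the target below $B[\sigma]$. With the sign convention of the signed predicates, though, the plus family is the one whose members lie below the query type, so $F$ pairs with the upper cone of $B$ and $G$ with the lower cone. I would double-check this pairing against the definitions of $E^{\pm}$ rather than against the single-target equations.

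The main point needing care, rather than a real obstacle, is the identification of well-scoped types and their comparisons under the inclusion. The theorem quantifies over $Q$ well formed in $\Delta$, and we need $Q[\weaksubst]=Q$ as a type in $\Delta_A$. Under Autosubst-style de~Bruijn syntax, this is a shift that commutes with the subtyping weakening lemma. Beyond that, computability and finiteness come directly from Theorem~\ref{thm:inverse}. The theorem also states that every generated member of $F$ lies above $B$ after weakening and every member of $G$ lies below it. This gives the useful side fact that the families consist of genuine $\alpha$-free supertypes and subtypes of $B$.
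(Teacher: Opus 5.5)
Your proposal is correct and is essentially the paper's proof: you check that the inclusion $\weaksubst:\Delta\to\Delta_A$ is typed because old bounds survive weakening, then take $F=\invup{\weaksubst}{B}$ and $G=\invlo{\weaksubst}{B}$ in Theorem~\ref{thm:inverse}. Your sign pairing is right, and it also agrees with \eqref{eq:upper-inverse}: there the target and query are named $T$ and $B$ the other way round from the corollary, so no double-check against the single-target equations is needed.
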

\begin{proof}
Weakening preserves the old variable-bound comparisons, so $\weaksubst$
is typed. Take $F=\invup{\weaksubst}{B}$ and
$G=\invlo{\weaksubst}{B}$ in Theorem~\ref{thm:inverse}.
\end{proof}

All representatives avoid the local variable because they are well
scoped in $\Delta$. Equation~\eqref{eq:upper-avoidance} covers every
outer type that can serve as a supertype of $B$, without assuming a
single best choice. The lower family may be empty.

For example, take $\Delta=\beta<:\tytop$, $A=\beta$, and $B=\alpha$.
An exact upper cover is $F=\{\beta\}$: every outer supertype of $\alpha$ is
reachable through its bound $\beta$. An exact lower cover is empty, since
no type in the outer scope is below the fresh variable $\alpha$.

This corollary concerns subtype comparisons in \Ftop{}. A language with
local abstract types would need a typing rule that uses these
families.

\subsection{Generalization requirements}
\label{sec:extensions}

To transfer the decidability method to another calculus, one must establish
decidable subtyping on well-formed inputs, effective finite covers for its context changes,
and complete compositional operations for its term rules. Transferring the
backward algorithm requires a complete elimination interface
and finite inverse families for its substitutions. Runtime bounds require
a separate analysis of the chosen realization. Finite coverage alone
imposes no polynomial size bound.

\ruleref{All-Top} makes the body comparison context independent of candidate bounds.
We use this independence in both cone construction and the inverse
substitution step at type application.
These properties must be reconsidered when the quantified subtype rule changes.

Implicit instantiation introduces a further quantifier: the substitution
itself must be chosen. Greedy implicit bounded quantification provides a
decidable approach for a variant of kernel $\mathsf F_{<:}$
\cite{CuiJiangOliveira2023}. Our inverse theorem concerns each fixed typed
substitution. Taking the union of its solution sets over infinitely many
possible substitutions need not preserve finite generation. Extending
the method to implicit instantiation therefore requires a finite
representation of those choices and a complete procedure for querying it.

\section{Related Work}
\label{sec:related-work}

\subsection{Bounded quantification}
\label{sec:original-terms}

Bounded quantification combines parametric polymorphism and subtyping in
Fun \cite{CardelliWegner1985} and System $\mathsf{F}_{<:}$
\cite{CardelliEtAl1994}. Pierce proves undecidability of full subtyping and
typechecking \cite{Pierce1994}. Curien and Ghelli
\cite{CurienGhelli1992} establish minimum typing and a deterministic checking
procedure through normalization of typing derivations. Deterministic rule
selection does not imply that its subtyping calls terminate.
In \Ftop{}, subtype checking terminates,
but minimum typing fails \cite{CastagnaPierce1994,CastagnaPierce1995}.

The corrigendum proposes finite minimal-type families as a possible
route to checking. Our finite construction maintains coverage through
bound forgetting.

Laird's extended system recovers minimum typing using two
quantifiers \cite{Laird2023}. Typechecking of \Ftop{} inputs in that extension
is decidable (Proposition~8.5). Agreement on normal terms (Proposition~7.3)
gives decidability for the original normal fragment (Corollary~8.6).
Proposition~7.1 separates the original and extended judgments on arbitrary
terms.

Checking a normal form can accept a target that the original term lacks,
as Laird's counterexample shows. Take
$\Delta=\beta<:\tytop$, an empty term context, and
$T=\All{\alpha}{\beta}{\alpha\to \beta}$. The type-$\beta$ redex
\begin{align*}
  r &= (\tabs{\gamma}{\tytop}{\tabs{\alpha}{\beta}{\abs{x}{\gamma}{x}}})\{\beta\}
       \longrightarrow_\beta r',\\
  r' &= \tabs{\alpha}{\beta}{\abs{x}{\beta}{x}}
\end{align*}
satisfies
\[
 \typing{\Delta}{\varnothing}{r'}{T},
 \qquad
 \Delta;\varnothing\nvdash r:T
 \qquad\text{in \Ftop{}.}
\]
The positive judgment uses $\alpha<:\beta$ inside $r'$ to assign $\alpha\to \beta$ to
$\abs{x}{\beta}{x}$ before type abstraction. The negative judgment is
Laird's separating example. This failure of subject expansion prevents
direct reuse of the normal-form checker.
Our algorithms decide the original judgment on the unreduced input.

Barthe and Frade use finite minimal-type families for constructor subtyping,
where finiteness comes from a constructor's finite declarations
\cite[Section~4, Proposition~4]{BartheFrade1999}. In our construction,
representatives must cover the effects of bounds that body typing can use
but quantified subtyping discards. The generic finite-family checking
pattern is prior work. We establish effective completeness across the
change of comparison context.

Kernel $\mathsf{F}_{<:}$ compares universal types with equal bounds
\cite{CardelliWegner1985,Laird2023}. Strong Kernel $\mathsf{F}_{<:}$ uses two
contexts that retain the compared types' respective bounds. Hu and
Lhot\'{a}k establish decidable subtyping for it
\cite[Section~7]{HuLhotak2020}. Zhou et al.\ combine bounded quantification
with iso-recursive types, with decidable subtyping in their kernel variant
and undecidable subtyping in their full variant \cite{ZhouEtAl2025}.
These choices change the subtype relation. Our theorems concern the
single-context \ruleref{All-Top} rule, which our checking procedures retain.

\subsection{Target propagation and compact representations}

Bidirectional typing organizes synthesis and checking around the available
type information \cite{DunfieldKrishnaswami2021}. The backward equations
use this organization. Completeness in our calculus needs
the uniform elimination domains of Corollary~\ref{cor:backward-elimination}
and finite inverse substitution. In particular,
checking an argument against a canonical domain cannot be replaced by
comparing its canonical type with that domain.

MLsub combines ML-style inference and subtyping using a polar separation
and compact principal types, with simplification related to regular-language
algebra \cite{DolanMycroft2017}. Our signed families also preserve variance
and shared structure. Their meaning is a finite collection of ordinary
\Ftop{} types, and checking queries that collection under changing scoped
bounds. Their use requires the exact inverse-coverage theorem of
Section~\ref{sec:inverse}. A small representation alone does not establish
that property for a different type algebra.

Ghelli and Pierce \cite{GhelliPierce1998} address avoidance in Kernel Fun
with bounded existentials by computing a least variable-free supertype
and, when one exists, a greatest variable-free subtype.
Module calculi also address avoidance through existential signatures
\cite{Crary2020} or ZipML's floating fields, represented by zippers
\cite{BlaudeauEtAl2025}. Our
Corollary~\ref{cor:type-avoidance} instead gives finite covers of the outer-scope
supertypes and subtypes already expressible in \Ftop{}.

\section{Conclusion}
\label{sec:conclusion}

Finite covers under bound forgetting make typing bases compositional for
\Ftop{} and establish decidability for arbitrary original terms. Backward
checking uses rectangularity to fix elimination domains while preserving
every result type. Finite inverse substitution then transports acceptable
targets through type application. Both procedures are proved equivalent
to the original typing judgment in Rocq.

The inverse specification unifies bound forgetting and finite type
avoidance. Its exactness also permits different symbolic implementations.
Our experiments measure symbolic family sizes across synthetic checking
judgments and show how sharing reduces comparison entries.

\subsection{Future work}
\label{sec:future-work}

\paragraph{Local abstract types.}
The closest extension is a calculus with explicitly annotated bounded
packages, where unpacking introduces a hidden type name that must not
escape in the result type. Bounded existentials already expose this
avoidance problem~\cite{GhelliPierce1998}.
Corollary~\ref{cor:type-avoidance} suggests a compositional rule: replace
each member of the body's typing basis by its finite family of
outer-scope supertypes, then take their union. This would cover every
exportable result type without requiring a least avoiding type.
A conservative first language would keep package signatures separate
from the current grammar of ordinary types. Its new proof obligations
are package typing, substitution, and completeness of the scope-exit
rule. First-class existential types would additionally require new
subtyping and inverse-coverage cases. Extending the approach to ML
signatures~\cite{Crary2020,BlaudeauEtAl2025} would also require coverage
results for paths and sharing constraints.

\paragraph{Type schemes and correlated alternatives.}
Union and intersection types test both our finite-basis representation
and our rectangularity invariant. In a system with intersection
introduction, a nonempty finite sound complete basis would itself yield
a minimum type: the intersection of its members is assignable and lies
below every assignable type. Systems that still lack minima therefore
need a richer representation. Semantic subtyping provides such a
precedent: type schemes support decidable checking despite the failure
of minimum typing~\cite[Sections~5.4 and~6.12]{FrischCastagnaBenzaken2008}.
For backward checking, an overloaded type such as
$(A\to B)\land(C\to D)$ also ties different results to different domains;
one uniform domain need not preserve all results.
The finite arrow decomposition of
\citet[Lemma~6.37]{FrischCastagnaBenzaken2008} can be distributed into a
finite union of rectangles, providing a candidate replacement invariant.
The next question is whether symbolic schemes with these correlations
admit exact inverse queries under scoped substitutions. An explicitly
annotated, nonrecursive fragment would let us study closure and
termination before adding bounded quantification.

\paragraph{Implicit bounded instantiation.}
Greedy implicit bounded quantification~\cite{CuiJiangOliveira2023} and
higher-rank checking with unions and intersections~\cite{JiangCuiOliveira2025}
offer concrete calculi with carefully restricted instantiation rules.
Our inverse theorem handles a fixed substitution. Even for one candidate
quantified type $\All{\alpha}{A}{B}$, implicit instantiation must find an
$R$ satisfying both $R\subin{\Delta}A$ and
$B[R/\alpha]\subin{\Delta}T$. A complete solver must preserve both the
alternatives for $B$ and shared choices for $R$.

\bibliographystyle{ACM-Reference-Format}
\bibliography{references}

\appendix
\section{Supporting Proofs for Typechecking}
\label{app:cover-proofs}

\subsection{Proofs of the canonical elimination interface}
\label{app:rectangularity-proof}

Definition~\ref{def:structured-families} and
Lemma~\ref{lem:structured-rectangularity} state the structured-family
invariant used by Corollary~\ref{cor:canonical-rectangularity}.
We give its full proof and the typing induction establishing the invariant
for canonical types.

\paragraph{Typed substitutions.}
For $\sigma:\Delta\to\Theta$, the lift under a fresh binder $\alpha$ has
source $\Delta,\alpha<:A$ and target $\Theta,\alpha<:A[\sigma]$.
Subtype preservation follows by induction on the derivation: the bound
condition handles variables, and lifted substitutions handle the fresh
variable with bound $\tytop$ in \ruleref{All-Top}.

\paragraph{Universal body closure.}
For $(\mathcal S,B)$ structured in $\Delta_A$, reflexivity and bound restoration give
$\mathcal S\subseteq\upclosure{\Delta_\top}{\mathcal S}
\subseteq\upclosure{\Delta_A}{\mathcal S}$; transitivity then proves
\eqref{eq:structured-top-closure}. Since $B\in\mathcal S$, closure
replacement makes $(\upclosure{\Delta_\top}{\mathcal S},B)$ structured
under the actual bound. The same argument applies after substitution,
with actual bound $A[\sigma]$.

\begin{proof}[Proof of Lemma~\ref{lem:structured-rectangularity}]
Induct on the structured-family clauses, quantifying over every target
context $\Theta$ and typed $\sigma$.
For a singleton, exposure and subtype inversion give the equivalences.
The result family is the upper cone of the exposed body, using $\Theta$
for arrows and $\Theta,\alpha<:\tytop$ for universals. These families are
structured by the singleton and closure-replacement clauses, with
\eqref{eq:structured-top-closure} handling the universal case.

For an arrow family, substitution preserves its fixed domain and maps
the body family independently. Arrow inversion gives the result family
as the upward closure of that image, which is structured by the typed-image
and closure-replacement clauses.
For a universal family, lift $\sigma$ under its actual bound $A$.
The image body family is structured in $\Theta,\alpha<:A[\sigma]$.
\ruleref{All-Top} separates the candidate bound from the upward closure
of this family in $\Theta,\alpha<:\tytop$.
Equation~\eqref{eq:structured-top-closure} makes that closure structured
under the actual bound, with the substituted body as distinguished member.
Constructor inversion also rules out the other exposed head in each case.

For closure replacement, every member of either source family lies above
a member of the other. Subtype preservation transports these witnesses
through $\sigma$, so the images have equal upward closures as well.
The induction hypothesis transfers the factorization and exposed heads.
For a typed image, apply it to the composite typed substitution.
\end{proof}

\begin{lemma}[Canonical typing families]
\label{lem:canonical}
A term $s$ in well-formed contexts is typable exactly when there is a
$C$ such that $\canrel{\Delta}{\Gamma}{s}{C}$. This $C$ is unique and
assignable, and the pair
\[
 \bigl(\{T:\typing{\Delta}{\Gamma}{s}{T}\},
       C\bigr)
\]
is structured in $\Delta$.
\end{lemma}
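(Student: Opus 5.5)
We prove the lemma by induction on the term $s$, generalizing over well-formed $\Delta;\Gamma$. The induction hypothesis is the full conjunction: typability iff a canonical $C$ exists, uniqueness, assignability, and structuredness of the complete typing family $\mathcal T_{\Delta;\Gamma}(s)=\{T:\typing{\Delta}{\Gamma}{s}{T}\}$ with member $C$. Uniqueness is immediate from the canonical rules, since they are syntax-directed and exposure is deterministic. Assignability follows from the typing rules. Application uses exposure $\sub{\Delta}{C}{\Expose{\Delta}{C}}$, then \ruleref{Sub} and \ruleref{App} with the guard $u:A$. Type application uses \ruleref{Type-App} with the guard $R\subin{\Delta}A$. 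The remaining work is to identify $\mathcal T(s)$ as a structured family with member $C$, and to show that typability forces the canonical premises to hold.

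\emph{Base and introduction cases.} Inversion of typing, with all subsumptions collapsed by \ruleref{Trans}, gives the following.
\begin{itemize}
  \item For a variable, $\mathcal T(x)=\upclosure{\Delta}{\Gamma(x)}$. This is structured from the singleton $\{\Gamma(x)\}$ by closure replacement; the same holds for $\tmconst$ with $\{\tytop\}$.
  \item For $\abs{x}{A}{s}$, $\mathcal T=\upclosure{\Delta}{\{A\to V:V\in\mathcal T_{\Gamma,x:A}(s)\}}$. The inner family is structured by the constructor clause, and closure replacement gives the outer one; the new family contains $A\to C$.
  \item For $\tabs{\alpha}{A}{s}$, inversion gives $\mathcal T=\upclosure{\Delta}{\{\All{\alpha}{A}{V}:V\in\mathcal T_{\Delta_A}(s)\}}$, which is structured by the universal constructor clause and closure replacement.
\end{itemize}
Typability of each introduction is equivalent to typability of its body, which matches the canonical rules.

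\emph{Elimination cases.} For $f\,u$, suppose $s$ is typable. Inversion gives $f:U\to V$, $u:U$ and $V\subin{\Delta}T$. By the induction hypothesis, $f$ has a canonical $C_f$ with $(\mathcal T(f),C_f)$ structured. Apply Lemma~\ref{lem:structured-rectangularity} with the identity substitution; this is typed by \ruleref{Bound}. Its converse clause makes $\Expose{\Delta}{C_f}$ an arrow $A\to B$. The factorization gives $U\subin{\Delta}A$, so $u:A$ holds by \ruleref{Sub} and the canonical premise is met. It also gives a structured $(\mathcal R,B)$ with $U\to V\in\mathcal T(f)$ iff $U\subin{\Delta}A\land V\in\mathcal R$, since $\mathcal T(f)$ is upward closed. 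We then check that $\mathcal T(f\,u)=\upclosure{\Delta}{\mathcal R}$:
\begin{itemize}
  \item ($\subseteq$) Any derivation supplies $U\to V$ with $u:U$, so $V\in\mathcal R$.
  \item ($\supseteq$) Given $V\in\mathcal R$, we have $f:A\to V$ and $u:A$.
\end{itemize}
Closure replacement then makes $(\mathcal T(f\,u),B)$ structured. For $f\{R\}$ the argument is analogous: the universal clause gives $\Expose{\Delta}{C_f}=\All{\alpha}{A}{B}$ and structured $(\mathcal R,B)$ in $\Delta_A$. Inversion yields $f:\All{\alpha}{U}{V}$ with $R\subin{\Delta}U\subin{\Delta}A$, so the guard holds, and $V\in\mathcal R$. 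The substitution $[R/\alpha]:\Delta_A\to\Delta$ is typed, so the typed-image clause gives $(\mathcal R[R/\alpha],B[R/\alpha])$ structured in $\Delta$. We show $\mathcal T(f\{R\})=\upclosure{\Delta}{\mathcal R[R/\alpha]}$ by the same two inclusions, and finish with closure replacement.

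\emph{Main obstacle.} The delicate point is the type-application inclusion $\subseteq$. A derivation of $f\{R\}$ may use a universal type of $f$ with an arbitrary bound $U$, and must be shown to produce exactly a body in $\mathcal R$. Here the factorization must hold for $\mathcal R$ under the \emph{actual} bound $A$, not under $\tytop$, so that substitution by $R\subin{\Delta}A$ is typed. This is precisely what Lemma~\ref{lem:structured-rectangularity}'s universal clause provides, via \eqref{eq:structured-top-closure}. The only other care needed is renaming the binder of $V$ to the exposed $\alpha$ up to $\alpha$-equivalence.
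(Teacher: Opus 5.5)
Your proposal is correct and follows essentially the paper's own argument. The key steps coincide: the identity-substitution use of Lemma~\ref{lem:structured-rectangularity}, the typed substitution $[R/\alpha]:(\Delta,\alpha<:A)\to\Delta$ with typed image and closure replacement, and the identification of the complete result families as $\upclosure{\Delta}{\mathcal R}$ and $\upclosure{\Delta}{\mathcal R[R/\alpha]}$. The only difference is bookkeeping. You run a single induction on the term, collapsing subsumptions by inversion. The paper instead inducts on canonical derivations for assignability and uniqueness, and on typing derivations for the converse, treating the final subsumption as a separate case.
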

\begin{proof}
First, induction on a canonical derivation proves assignability. Exposure
soundness types the function at its exposed type, and the declarative
argument guard justifies application. The other cases follow their typing
rules, with subsumption after exposure at type application.

For the converse, induct on a typing derivation, constructing both a
canonical derivation and the structured pair. Variables and the constant use singletons
followed by upward closure; introductions wrap the body family with the
annotated constructor and take upward closure for subsumption.

At application, the derivation supplies $f:U\to V$ and $u:U$.
Apply Lemma~\ref{lem:structured-rectangularity} with identity substitution
to the function family supplied by induction.
It gives an exposed arrow $A_0\to B_0$, a structured result family
$(\mathcal R,B_0)$, and $U<:A_0$, so subsumption gives $u:A_0$.
Together with the canonical derivation for $f$, this is precisely the
application rule's declarative guard and gives
$\canrel{\Delta}{\Gamma}{f\,u}{B_0}$.
Each $V\in\mathcal R$ is a result of $f$ at domain $A_0$; inversion places
every application typing above such a $V$. Hence the complete typing family
is $\upclosure{\Delta}{\mathcal R}$, structured with member $B_0$ by
closure replacement.

At type application, the same lemma gives an exposed universal
$\All{\alpha}{A_0}{B_0}$. Factorization and the derivation's bound test give
$R<:A_0$, satisfying the canonical type-argument guard and deriving
$\canrel{\Delta}{\Gamma}{f\{R\}}{B_0[R/\alpha]}$.
The body family $(\mathcal R,B_0)$ is structured in $\Delta,\alpha<:A_0$,
so $[R/\alpha]:(\Delta,\alpha<:A_0)\to\Delta$ is typed.
Each $V\in\mathcal R$ yields $V[R/\alpha]$ by \ruleref{Type-App} at $A_0$;
factorization retains the same $V$ from any type-application derivation.
Thus inversion and subsumption give the complete result family
$\upclosure{\Delta}{\mathcal R[R/\alpha]}$, structured with member
$B_0[R/\alpha]$ by typed image and closure replacement.
A final subsumption preserves the canonical derivation and leaves the
complete typing family unchanged.

Induction on canonical derivations gives uniqueness: lookup
fixes variable types, introductions wrap the unique canonical body type,
and eliminations use the unique canonical function type and deterministic
exposure. Consequently the structured pair constructed from any typing
derivation has the same distinguished $C$ as every canonical derivation.
\end{proof}

\subsection{Construction of general inverse families}
\label{app:inverse-construction}

Fix a typed $\sigma:\Delta\to\Theta$ between well-formed contexts. For a
family $F$ well scoped in $\Theta$, define the source-variable guards
\[
 \namedop{vars}^p_\sigma(F)
 =\{\alpha\in\domof{\Delta}:
                  \qry{p}{\Theta}{F}{\sigma(\alpha)}\}.
\]
The guards are decidable by finite-family queries and subtype decidability.

\paragraph{Singleton targets.}
Construct both signs together. The cases for $\tytop$ and variables are
\begin{align*}
 \invlo{\sigma}{\tytop}&=\{\tytop\},\\
 \invup{\sigma}{\tytop}&=\{\tytop\}\cup\namedop{vars}^+_\sigma(\tytop),\\
 \invlo{\sigma}{\beta}&=\namedop{vars}^-_\sigma(\beta),\\
 \invup{\sigma}{\beta}&=\namedop{vars}^+_\sigma(\beta)
                         \cup\invup{\sigma}{\Theta(\beta)}.
\end{align*}
For arrows, reverse the sign at the domain:
\begin{align*}
 \invlo{\sigma}{A\to B}
   &=\namedop{vars}^-_\sigma(A\to B)
       \cup\garr{\invup{\sigma}{A}}{\invlo{\sigma}{B}},\\
 \invup{\sigma}{A\to B}
   &=\{\tytop\}\cup\namedop{vars}^+_\sigma(A\to B)
       \cup\garr{\invlo{\sigma}{A}}{\invup{\sigma}{B}}.
\end{align*}
For a universal target, choose its binder $\alpha$ fresh for both contexts
and the images of $\sigma$. Write $\liftsub{\sigma}$ for the typed lift
\[
 \liftsub{\sigma}:
   (\Delta,\alpha<:\tytop)\longrightarrow(\Theta,\alpha<:\tytop),
\]
which fixes $\alpha$ and applies $\sigma$ to old variables. The equations are
\begin{align*}
 \invlo{\sigma}{\All{\alpha}{A}{B}}
   &=\namedop{vars}^-_\sigma(\All{\alpha}{A}{B})\\
   &\quad{}\cup\gall{\invup{\sigma}{A}}{\invlo{\liftsub{\sigma}}{B}},\\
 \invup{\sigma}{\All{\alpha}{A}{B}}
   &=\{\tytop\}\cup\namedop{vars}^+_\sigma(\All{\alpha}{A}{B})\\
   &\quad{}\cup\gall{\invlo{\sigma}{A}}{\invup{\liftsub{\sigma}}{B}}.
\end{align*}
The universal products bind the same fresh $\alpha$ in their body child.
Formation depends on variable scope, so a body formed with bound $\tytop$
is also formed under any generated bound.

Singleton recursion decreases $\tyweight{\Theta}{T}$
(Section~\ref{sec:termination}): following a variable bound or a
constructor child decreases weight, with an extension by $\alpha<:\tytop$ for universal
bodies. Each guard makes finitely many terminating subtype tests.

\paragraph{Target-family expressions.}
The empty and choice cases are
\[
 \invfam{p}{\sigma}{\varnothing}=\varnothing,\qquad
 \invfam{p}{\sigma}{F\cup G}
     =\invfam{p}{\sigma}{F}\cup\invfam{p}{\sigma}{G}.
\]
A singleton uses the preceding construction. For an arrow or universal
product, first test whether both child languages are nonempty. If either
is empty, return the empty family for both signs. Otherwise use
\begin{align*}
 \invlo{\sigma}{\garr{F}{G}}
   &=\namedop{vars}^-_\sigma(\garr{F}{G})
       \cup\garr{\invup{\sigma}{F}}{\invlo{\sigma}{G}},\\
 \invup{\sigma}{\garr{F}{G}}
   &=\{\tytop\}\cup\namedop{vars}^+_\sigma(\garr{F}{G})
       \cup\garr{\invlo{\sigma}{F}}{\invup{\sigma}{G}},\\
 \invlo{\sigma}{\gall{F}{G}}
   &=\namedop{vars}^-_\sigma(\gall{F}{G})\\
   &\quad{}\cup\gall{\invup{\sigma}{F}}{\invlo{\liftsub{\sigma}}{G}},\\
 \invup{\sigma}{\gall{F}{G}}
   &=\{\tytop\}\cup\namedop{vars}^+_\sigma(\gall{F}{G})\\
   &\quad{}\cup\gall{\invlo{\sigma}{F}}{\invup{\liftsub{\sigma}}{G}}.
\end{align*}
In the universal equations, $F$ has scope $\Theta$ and $G$ has scope
$\Theta,\alpha<:\tytop$; the result body has scope
$\Delta,\alpha<:\tytop$. Nonemptiness is structural: choice uses
disjunction, products use conjunction, and a singleton is nonempty.
Recursion follows the finite expression, using the singleton construction
at its leaves.

\begin{proof}[Proof of Theorem~\ref{thm:inverse}]
Prove formation, soundness and coverage simultaneously.
For every $V$ in the constructed lower family, there is
$T\in\genlang{F}$ with $V[\sigma]\subin{\Theta}T$; for every such $T$
and well-formed $Q$ with $Q[\sigma]\subin{\Theta}T$, there is a lower
representative $V$ with $Q\subin{\Delta}V$.
For the upper family, reverse both comparisons: members satisfy
$T\subin{\Theta}V[\sigma]$ for some target $T$, and
$T\subin{\Theta}Q[\sigma]$ supplies a representative
$V\subin{\Delta}Q$.

For singleton targets, induct on target weight over both contexts and every
typed substitution. Guards are sound by their tests and cover qualifying
source variables by reflexivity. Substitution preserves compound roots.

At target $\tytop$, every source type is below the lower representative $\tytop$.
For the upper direction, subtype inversion forces a nonvariable $Q$ whose
image is above $\tytop$ to be $\tytop$. The $\tytop$ and guard branches cover these cases.

At a target variable $\beta$, only source variables can have an image below
$\beta$, giving lower coverage. For upper coverage of a nonvariable $Q$,
algorithmic inversion of
$\beta\subin{\Theta}Q[\sigma]$ gives
$\Theta(\beta)\subin{\Theta}Q[\sigma]$, and the induction hypothesis for
the bound supplies a representative below $Q$. Its soundness follows from
$\beta\subin{\Theta}\Theta(\beta)$ and transitivity.

For target $A\to B$, a nonvariable source below the target must be an arrow
$Q=C\to D$. Inversion gives
$A\subin{\Theta}C[\sigma]$ and $D[\sigma]\subin{\Theta}B$.
Upper coverage of $A$ supplies $a\subin{\Delta}C$; lower coverage of $B$
supplies $D\subin{\Delta}b$. Hence
$C\to D\subin{\Delta}a\to b$, with $a\to b$ in the lower construction.
For the upper direction, an arrow source satisfies
$C[\sigma]\subin{\Theta}A$ and $B\subin{\Theta}D[\sigma]$.
The opposite child inverses supply $C\subin{\Delta}a$ and
$b\subin{\Delta}D$, so $a\to b\subin{\Delta}C\to D$.
The arrow rule combines child soundness in both directions.

For target $\All{\alpha}{A}{B}$, the compound source must be
$\All{\alpha}{C}{D}$. Inversion separates the bound comparison in $\Theta$
from the body comparison in $\Theta,\alpha<:\tytop$.
In the lower direction, child coverage gives
$a\subin{\Delta}C$ and
$D\subin{\Delta,\alpha<:\tytop}b$; in the upper direction it gives
$C\subin{\Delta}a$ and
$b\subin{\Delta,\alpha<:\tytop}D$.
The typed lift permits the body induction hypothesis in each case.
Applying \ruleref{All-Top} yields the required covering universal.
The same rule combines child soundness. In both constructor cases, guards
cover source variables, the upper $\tytop$ branch covers $\tytop$, and mismatched
compound roots are impossible. Child formation gives formation of the
products; replacing the bound $\tytop$ preserves body scope.

Induct next on the family expression. Empty families and empty products
have no witnesses; singleton leaves use the result above. Choices take
soundness and coverage from the branch containing the witness.
For inhabited products, child coverage supplies representatives as in the
constructor cases. Child soundness supplies independent target witnesses
whose combination belongs to the original product. Universal bodies use
the typed lift and \ruleref{All-Top}. Inhabitation justifies the positive
$\tytop$ branch. Formation of an inhabited product ensures formation of each
child's members; guards test source variables against the whole product.

Coverage gives one direction of \eqref{eq:inverse-family}; soundness,
subtype preservation under $\sigma$, and transitivity give the other.
\end{proof}

\subsection{Soundness and completeness of cone covers}
\label{app:cone-proof}

\begin{proof}[Proof of Theorem~\ref{thm:cones}]
Induct on the weight, proving well-formedness and representative soundness
together with both equalities. Soundness and restoration place each
right-hand set in \eqref{eq:upper} and \eqref{eq:lower} inside its left-hand
set by transitivity. For the reverse inclusions, expand membership in the
left-hand cone and cover the resulting subtype comparison.

For $\tytop$, the upper comparison forces the target to be $\tytop$. The lower
comparison is covered by $\tytop$ itself. For a variable $\alpha$, algorithmic
inversion of $\alpha<:T$ either gives $T=\alpha$ or a comparison from its bound
$C$ to $T$. The retained $\alpha$ covers the first case and the induction
hypothesis for $C$ covers the second. A subtype of $\alpha$ must be a variable,
so the lower variable-source family contains the source itself.

For $A\to B<:T$, either $T=\tytop$ or $T=C\to D$ with $C<:A$ and
$B<:D$. In the latter case lower coverage gives $a\in\Down{\Delta}{A}$
with $C<:a$ in $\forget{\Delta}$, and upper coverage gives
$b\in\Up{\Delta}{B}$ with $b<:D$ there. Hence the generated $a\to b$
is below $C\to D$ in the forgotten context. For $S<:A\to B$, a variable
$S$ is retained explicitly. Otherwise $S=C\to D$, with $A<:C$ and $D<:B$.
Upper coverage of $A$ and lower coverage of $B$ provide $a<:C$ and $D<:b$
in the forgotten context, so $C\to D<:a\to b$ as required.
The component soundness statements also give soundness of each generated
arrow in the original context.

For $\All{\alpha}{A}{B}<:T$, the case $T=\tytop$ is explicit. Otherwise inversion gives
$T=\All{\alpha}{C}{D}$ with $C\subin{\Delta} A$ and $B\subin{\Delta_\top}D$.
Choose $a\in\Down{\Delta}{A}$ and $b\in\Up{\Delta_\top}{B}$ with
$C\subin{\forget{\Delta}}a$ and $b\subin{\forget{\Delta_\top}}D$.
\ruleref{All-Top} then gives
\[
  \sub{\forget{\Delta}}{\All{\alpha}{a}{b}}{\All{\alpha}{C}{D}}.
\]
The representative is sound because $a\subin{\Delta} A$ and
$B\subin{\Delta_\top}b$. Lower coverage treats variable sources explicitly.
For a universal source, the induction hypotheses give an upper bound
representative and a lower body representative. Applying \ruleref{All-Top} again
gives the required comparison. All representatives are well formed by
component formation. Changing a bound does not change its variable scope.
\end{proof}

\subsection{Soundness and completeness of typing bases}
\label{app:typing-basis-proof}

\begin{proof}[Proof of Theorem~\ref{thm:typing-basis}]
Induct on $s$, generalizing over well-formed contexts $\Delta;\Gamma$.
If variable lookup fails, an abstraction annotation is ill formed, or a
type argument is ill formed, the construction returns the empty family.
Typing inversion excludes every assignable type in these cases, so both
soundness and coverage hold. Otherwise all context extensions used for
recursive calls are well formed. If a recursive basis is empty, the
induction hypothesis excludes a typing for that subterm; inversion excludes
one for the enclosing term, whose basis is also empty.
Every recursive call is on a proper subterm, and every auxiliary
construction terminates by Section~\ref{sec:termination}.
Typing inversion exposes the rule for that term constructor followed by a
possible final subsumption.

Variables and the constant use their declared types. At an arrow
introduction, the induction hypothesis gives a body representative below
the body type from inversion. Arrow covariance lifts that comparison,
and final subsumption reaches the requested target. Each generated arrow
is sound by the introduction rule.

At type abstraction, let inversion give a body type $R$ and a comparison
$\All{\alpha}{A}{R}\subin{\Delta} T$. Choose a body representative $B$ with
$B\subin{\Delta_A}R$. Corollary~\ref{cor:relative-cover} gives
$C\in\Up{\Delta_A}{B}$ with
\[
  \sub{\Delta,\alpha<:\tytop}{C}{R}.
\]
Hence \ruleref{All-Top} gives $\All{\alpha}{A}{C}<:_\Delta\All{\alpha}{A}{R}$ and then
$\All{\alpha}{A}{C}\subin{\Delta} T$. Soundness uses $B\subin{\Delta_A}C$ to type
the body before applying \ruleref{Type-Abs}. Cone coverage thus accounts for
the subsumption before type abstraction in Section~\ref{sec:obstacles}.

At an application, inversion gives $f:U\to V$, $u:U$, and $V\subin{\Delta} T$.
Choose a function representative $C\subin{\Delta} U\to V$. Equation~\eqref{eq:expose-arrow}
gives $\Expose{\Delta}{C}=A\to B$, $U\subin{\Delta} A$, and $B\subin{\Delta} V$.
An argument representative $D\subin{\Delta} U$ therefore passes the test
$D\subin{\Delta} A$, and the retained $B$ is below $T$.
Conversely, exposure is sound, and an argument representative passing
that test makes the application rule applicable with result $B$.

At a type application, inversion gives $f:\All{\alpha}{U}{V}$, $R\subin{\Delta} U$,
and $V[R/\alpha]\subin{\Delta} T$. By \eqref{eq:expose-all}, a function representative exposes as
$\All{\alpha}{A}{B}$ with $U\subin{\Delta} A$ and
$B\subin{\Delta,\alpha<:\tytop}V$. The bound test $R\subin{\Delta} A$ succeeds.
Preservation of subtyping by substitution from $\Delta_\top$ gives
$B[R/\alpha]\subin{\Delta} V[R/\alpha]$, so the retained type covers $T$.
Soundness follows from exposure, \ruleref{Type-App}, and the successful bound test.
All these arguments retain the final subsumption supplied by inversion.
\end{proof}

\section{Refinements of Finite Covers}
\label{sec:refinements}

\subsection{Minimal typing classes}

Fix a term $s$ and well-formed contexts $\Delta;\Gamma$ in this subsection.
An assignable type $M$ is minimal when every assignable $T<:M$ also satisfies
$M<:T$. Incremental insertion into a finite basis rejects a candidate
already covered by an existing member; otherwise it removes members covered
by the candidate and inserts it. Validity and coverage are preserved, with
one representative per mutual-subtyping class in the resulting antichain.

Upper cone pruning uses subtyping in $\forget{\Delta}$ and retains minimal
representatives. Lower cone pruning reverses that order and retains maximal
representatives. Term-basis pruning uses subtyping in $\Delta$.
Pruning an upper cone in the original context could delete a representative
needed after forgetting, so these orders are part of the specification.

\begin{theorem}[Complete minimal classes]
Every member of the computed term antichain is minimal, and every minimal
assignable type is mutually a subtype of some member of that antichain.
\end{theorem}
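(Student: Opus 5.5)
The proof rests on two invariants of the pruned basis, both maintained by incremental insertion. Let $\mathcal A$ be the computed antichain obtained by inserting the members of $\Basis{\Delta}{\Gamma}{s}$ one at a time. We show that insertion preserves two properties. The first is \emph{validity}: every member is assignable. The second is \emph{coverage}: every type assignable to $s$ lies above some member in $\Delta$.

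Validity is immediate, since insertion only removes candidates or adds a basis member. Theorem~\ref{thm:typing-basis} makes every basis member assignable. Coverage is checked in both insertion cases. If a candidate $B$ is rejected because some existing $M\subin{\Delta}B$, then anything above $B$ is above $M$ by \ruleref{Trans}. If $B$ is inserted and members $M'$ with $B\subin{\Delta}M'$ are removed, then anything above a removed $M'$ is above $B$. By induction on the insertion sequence, the final $\mathcal A$ satisfies \eqref{eq:typing-basis} with $\mathcal A$ in place of the basis, and it is an antichain up to mutual subtyping.

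\emph{Members are minimal.} Let $M\in\mathcal A$ and let $T$ be assignable with $T\subin{\Delta}M$. By coverage, some $M'\in\mathcal A$ satisfies $M'\subin{\Delta}T$, hence $M'\subin{\Delta}M$ by transitivity. The antichain property gives $M'=M$, since distinct members are incomparable. Therefore $M\subin{\Delta}T$, as required.

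This step needs the antichain invariant stated precisely: no two distinct members are comparable in either direction. We establish it during insertion. When $B$ is inserted, no existing member lies below $B$, since otherwise $B$ would have been rejected. Every existing member lying above $B$ is removed. So after insertion $B$ is incomparable with every survivor, and the survivors were already pairwise incomparable. This invariant is the only delicate point: the rejection test must use non-strict $\subin{\Delta}$, so that mutually-subtyped duplicates are discarded rather than kept.

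\emph{Every minimal type is represented.} Let $N$ be a minimal assignable type. Coverage gives some $M\in\mathcal A$ with $M\subin{\Delta}N$. Since $M$ is assignable and lies below $N$, minimality of $N$ gives $N\subin{\Delta}M$, so $M$ and $N$ are mutually subtypes. Termination of the pruning loop needs no further argument: the basis is finite and each comparison is decided by \eqref{eq:subtyping-equivalence}.
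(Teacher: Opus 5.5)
Your proof is correct and follows the paper's own argument. Coverage supplies a member below the given assignable type, and the antichain property identifies that member with the original one, or with its class. For a minimal type, the assignability of the covering member gives the reverse comparison.

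You also verify the insertion invariants, which the paper only asserts in the paragraph before the theorem: validity, coverage, and strict pairwise incomparability via the non-strict rejection test. One small point: state coverage relative to the candidates processed so far, since the initially empty $\mathcal A$ covers nothing; your two insertion cases already establish exactly that relative invariant.
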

\begin{proof}
Let $B$ be a member and let an assignable $T$ satisfy $T<:B$. Coverage gives
a member $A<:T$, hence $A<:B$. The antichain property gives $B<:A$ (the two
members represent the same class), so $B<:T$. Conversely, a minimal $T$
has a member $B<:T$ by coverage. Assignability of $B$ gives $T<:B$.
\end{proof}

A typable term has a minimum type exactly when its antichain has one
equivalence class. Antichains can still be large
(Section~\ref{sec:large-families}).

\subsection{Large antichains and shared representations}
\label{sec:large-families}

In successive contexts $\alpha_0<:\tytop,\ldots,\alpha_i<:\tytop$, define
\[
 \mathcal A_0=\{\alpha_0\},\qquad
 \mathcal A_{i+1}=\{\alpha_{i+1}\}\cup
       \{(A\to\tytop)\to B:A,B\in \mathcal A_i\}.
\]
These families are antichains. For the compound members, two reversals of
domain variance give
\[
 (A\to\tytop)\to B<:(A'\to\tytop)\to B'
 \iff A<:A'\ \land\ B<:B'.
\]
A fresh variable bounded by $\tytop$ is incomparable with every compound member.
Induction therefore gives no duplicates and
\[
 c_0=1,\qquad c_{i+1}=1+c_i^2,\qquad c_{k+1}\ge 2^{2^k},
 \quad\text{where }c_i=|\mathcal A_i|.
\]
The family $\mathcal A_k$ has a direct shared representation with $4k+2$
nodes, counting the initial $\alpha_0$ and one shared node for $\tytop$. Each step adds
a variable, a choice, and two arrows, reusing the previous family.

This bound concerns the generator language; obtaining the same bound for
a small term's typing basis would require a realization proof.

\section{Evaluation Details}
\label{app:evaluation-details}

\subsection{Input encoding}
\label{sec:evaluation-input}

The input is the complete tuple $(\Delta,\Gamma,s,T)$. Its size $n$ is the
length of a tree encoding with separate de Bruijn indices for type and
term variables, two-bit type tags, three-bit term tags, and Elias-gamma
codes for $j+1$ for each natural field $j$~\cite{Elias1975}. Context lengths
precede their entries, followed by the term and target. Contexts use
newest-first order, with each type-variable bound scoped over older entries.
Repeated tree occurrences count separately even when their implementation
uses shared nodes. Thus $n$ includes contexts, annotations, explicit type
arguments and the target. The corpus spans \EvalBitsMin{} to \EvalBitsMax{}
bits; the generator parameter $k$ is not this encoding length.

\subsection{Workload constructions}

All families use scales $k=2,4,8,16,32$.

\paragraph{Templates.}
Templates start from $s_0=x$, with $x:A$, and repeat a type-preserving step
$k$ times. Call templates use $s_{i+1}=f\{A\}\,s_i$, with
$f:\All{\beta}{B}{\beta\to\beta}$ and $A\subin{\Delta}B$.
Composition uses $s_{i+1}=g\,(f\,s_i)$, with
$f:A\to(A\to A)$ and $g:(A\to A)\to A$.
Callbacks use $s_{i+1}=h\,f\,s_i$, with
$f:A\to A$ and $h:(A\to A)\to(A\to A)$.
In polymorphic calls, composition and callbacks, the context declares
$\alpha<:\tytop\to\tytop$ and $A$ ranges over
$\alpha$, $\alpha\to\alpha$ and $\All{\beta}{\tytop}{\beta\to\beta}$;
the polymorphic call bound is $B=\tytop$.
Bounded calls instead fix $A=\alpha$ and vary its declared bound $B$ over
$\tytop\to\tytop$, $\tytop\to(\tytop\to\tytop)$ and
$\All{\beta}{\tytop}{\beta\to\beta}$.
The target witness is $A$ in each case. A checking spine follows
abstraction bodies and function children; each application argument starts
a new spine. Iteration increases the number of spines while each spine has
length at most three.

\paragraph{Typed random.}
Typed random uses eight fixed seeds per $k$, ordered contexts of one to four
bindings, and type-generation fuel five. Witness-preserving choices cover
abstractions, environment-headed applications/instantiations, and term/type
redexes; unavailable environment choices fall back to redexes. The generator
is \texttt{xorshift64star-v1} with a recorded seed matrix; $k$ is the
recursive expansion budget.

\paragraph{Stress.}
Long spines apply $k$ consecutive type arguments to a function with $k$
nested quantifiers with bound $\tytop$ and innermost body $\alpha_k\to\alpha_k$, then
apply the result to $x:\tytop\to\tytop$. Every type argument is
$\tytop\to\tytop$; the longest stress spine has length 34.
Bound chains fix $C=\tytop\to(\tytop\to\tytop)$ and declare
$\alpha_1<:C$ and $\alpha_{i+1}<:\alpha_i$; with $x:\alpha_k$ and
$f:C\to C$, they check $f\,x:C$.
Type staircases grow quantified and arrow depth. Define
\[
 \begin{aligned}
 T_0&=\tytop,& T_{i+1}&=\All{\gamma}{\tytop}{T_i},\\
 S_0(A)&=\tytop,& S_{i+1}(A)&=\All{\gamma}{\tytop}{A\to S_i(A)},
 \end{aligned}
\]
with fresh binders. In
$\Delta=(\beta<:T_k,\alpha<:T_k)$ and $\Gamma=(x:S_k(\alpha))$,
the staircase query checks
$(\tabs{\gamma}{\tytop}{x})\{S_k(\beta)\}$ at $S_k(\alpha)$.
Its term has one type application while the surrounding types grow with $k$.

\paragraph{Nonminimum typing.}
In the context $\beta<:\tytop,\delta<:\tytop$, let
\[
 g_\beta=\tabs{\alpha}{\beta}{\abs{x}{\alpha}{x}},\qquad
 C_\beta=\All{\alpha}{\beta}{\alpha\to\alpha},\qquad
 P_\beta=\All{\alpha}{\beta}{\alpha\to\beta}.
\]
The canonical type is $C_\beta$, yet $g_\beta$ also has type $P_\beta$;
the All-Top rule prevents $C_\beta\subin{\Delta}P_\beta$.
Replacing the result $\beta$ in $P_\beta$ by the unrelated $\delta$
gives a rejected target with the same constructor skeleton.
Three variants per $k$ preserve this positive/negative pair.
Introductions wrap $g_\beta$ in $k$ term abstractions with $\tytop$ annotations,
and wrap its targets in matching arrows. Instantiations add $k$ nested type
abstractions, each bounded by the preceding variable (the outermost by
$\beta$), then instantiate all of them with $\beta$.
Mixed redexes additionally wrap the instantiated term in $k$ applications
$(\abs{z}{\tytop}{s})\,\tmconst$ with unused $z$.
These 15 programs test the failure of canonical subtype checking through
introduction and both elimination forms.

\paragraph{Targets and coverage.}
Programs contribute a witness, a distinct non-$\tytop$ structural supertype
when available, a local perturbation and an outer-shape change. For the
original families, a deterministic leaf search also retains the first
rejected target preserving the witness's exposed arrow or quantified head,
when one exists; it visits codomains and bodies first. The nonminimum family
instead supplies its explicit positive/negative pair.
Duplicate targets within a program are merged, retaining all construction
tags. The reference labels the resulting judgments before measurement;
terms retain redexes. A separate ordinary-subtyping diagnostic records
whether the canonical result is below each target. The corpus contains
\EvalSameShapeRejected{} same-head rejections and \EvalBeyondCanonical{}
accepted targets outside the canonical upper cone, including one from typed
random generation. All generated programs are typable; rejected judgments
test target incompatibility.

Separate Rocq regression proofs exercise nonminimum argument checking and
rejection at $\tytop$ of an abstraction with an untypable body. The OCaml regressions cover
malformed inputs, failed type-argument bounds, and nested type abstractions
and instantiations. These checks are outside the timed corpus.

\subsection{Measurement protocol and records}

The recorded run uses native OCaml 5.4.1 with \texttt{-O2} on an Apple M4
with 32\,GiB RAM.
Seven uninstrumented repetitions run in fresh native workers with empty
caches, sequentially with shuffled query and method order. Process CPU time
covers the complete check, including validation, synthesis and enabled GC.
Generation, target construction, startup, input registration, empty-engine
construction and the preceding full collection are excluded.
Input registration creates the common syntax pool. Empty-engine construction
also creates method-specific empty hash tables and their initial bucket
arrays: the reference's five tables and the backward engine's additional
synthesis/checking tables.
Spine construction, dense comparison arrays, on-demand shared entries and
their subsequent initialization are performed inside the timed check.

A separate counter run must agree with the timed runs. A ten-second
parent-enforced deadline includes untimed setup; each syntax or checker-node
pool is limited to 250,000 nodes. Timeout, node-limit, mismatch, allocation,
stack and worker failures remain explicit records. Medians require all seven
repetitions and the counter run to succeed.
The first and third quartiles linearly interpolate the sorted CPU samples
at positions $0.25(r-1)$ and $0.75(r-1)$ for $r=7$.
Table~\ref{tab:evaluation-outcomes} separates accepted and rejected queries
and reports the median relative interquartile range (IQR).

Check-call allocation uses \texttt{Gc.allocated\_bytes}, including probes;
peak RSS covers the whole worker. Logical comparison counts include unused
dense cells and on-demand shared entries.

The run archive contains the corpus, raw samples, counters and summary,
with environment, compiler profile, limits, plotting versions and SHA-256
hashes of the corpus, sources and executable. The reporter
checks corpus hashes and rejects duplicate rows, incomplete matrices and
label mismatches. The README gives measurement and export commands.

\subsection{Detailed measurements}

\begin{table}[htbp]
\centering\small
\caption{Synthetic OCaml workloads. Times are medians across completed queries of their median CPU times, in $\mu$s. Input lengths include both contexts and the target. Counts include accepted and rejected queries.}
\label{tab:implementation-workloads}
\begin{tabular}{lrrrrr}
\toprule
Workload & Queries & Input bits & Reference & Dense & Shared \\
\midrule
Polymorphic calls & 65 & 64--822 & 8.0 & 108.0 & 112.0 \\
Bounded calls & 60 & 68--541 & 8.0 & 106.0 & 111.0 \\
Composition & 65 & 86--700 & 8.0 & 98.0 & 100.0 \\
Callbacks & 65 & 86--700 & 7.0 & 64.0 & 69.0 \\
Typed random & 178 & 81--333 & 8.0 & 24.0 & 23.0 \\
Long spines & 20 & 63--462 & 11.5 & 87.5 & 89.5 \\
Bound chains & 25 & 63--196 & 8.0 & 20.0 & 20.0 \\
Type staircases & 25 & 94--1805 & 12.0 & 88.0 & 94.0 \\
Nonminimum typing & 79 & 40--775 & 44.0 & 40.0 & 41.0 \\
\bottomrule
\end{tabular}
\end{table}

\begin{table}[htbp]
\centering\small
\caption{CPU medians across accepted and rejected queries, in $\mu$s. Dispersion is the median per-query IQR divided by that query's median over 7 repetitions.}
\label{tab:evaluation-outcomes}
\begin{tabular}{lrrr}
\toprule
Method & Accepted & Rejected & Dispersion \\
\midrule
Finite-family reference & 9.0 & 8.0 & 19.0\% \\
Dense backward & 45.5 & 42.0 & 13.0\% \\
Shared backward & 45.0 & 41.5 & 13.6\% \\
\bottomrule
\end{tabular}
\end{table}

The reference has a lower CPU median on \EvalReferenceDenseWins{} of
\EvalQueries{} queries against dense and \EvalReferenceSharedWins{} against
shared. The largest per-query median is \EvalMaxMedianMs{}\,ms.
For the five targets of the depth-32 dependent-instantiation program,
the median of the five per-query medians is 119.2\,ms for reference and
1.5\,ms for each backward variant.

On the depth-32 staircase witness, both variants intern 1,250 family nodes,
while sharing reduces comparison entries from 28,569 to 2,081. Median CPU
time increases from 10.5 to 11.9\,ms and estimated allocation volume from
10.2 to 11.4\,MB. Fewer logical comparison entries therefore need not
reduce time or total allocation volume.

\begin{figure}[p]
  \centering
  \includegraphics[width=\linewidth]{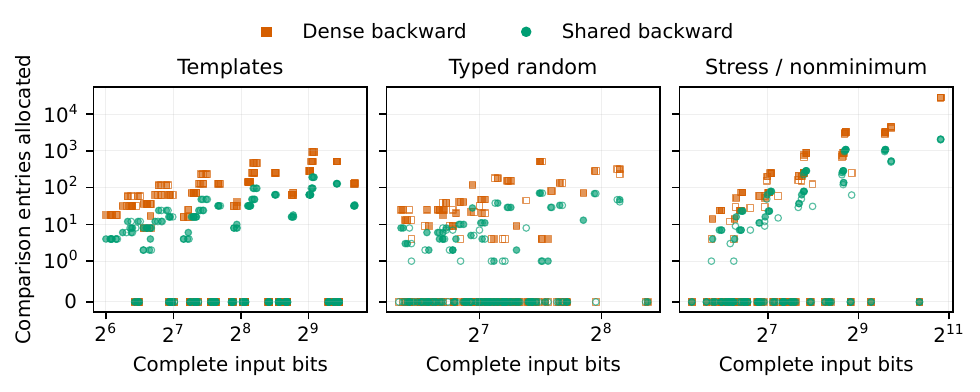}
  \caption{Logical comparison entries, including scoped ordinary and cohort
  queries. Dense counts include unused initialized cells; shared counts include
  on-demand scoped and variable-free entries. The symmetric logarithmic scale
  retains zeros. Filled/hollow markers denote accepted/rejected targets.}
  \Description{Three scatter plots of allocated comparison-cache entries
  against complete input length, with template, typed-random and combined
  stress/nonminimum panels. Shared backward usually allocates fewer entries than dense
  backward. The symmetric logarithmic vertical scale includes zero.}
  \label{fig:implementation-comparisons}
\end{figure}

\begin{figure}[p]
  \centering
  \includegraphics[width=\linewidth]{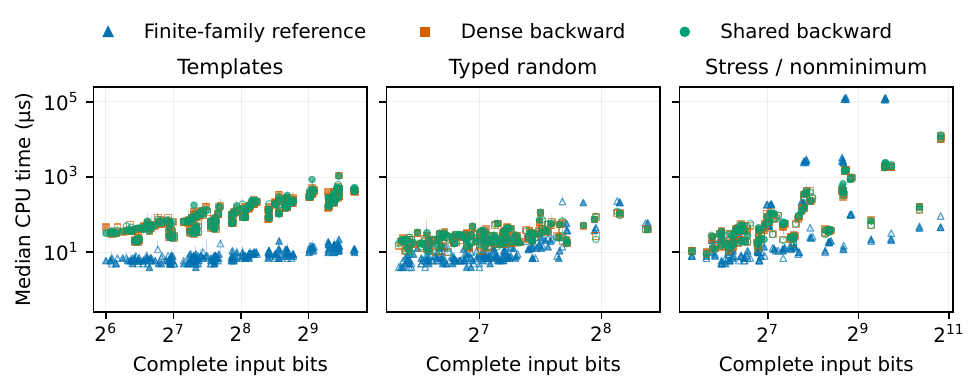}
  \caption{Median CPU time per judgment over seven uninstrumented checks
  in fresh workers; whiskers span the first and third quartiles.
  Filled/hollow markers denote accepted/rejected judgments.}
  \Description{Three scatter plots for template, typed-random and combined
  stress/nonminimum inputs. Horizontal axes show complete encoded input
  lengths; the shared vertical axis shows median CPU time on a logarithmic
  scale. The finite-family reference is below the backward implementations
  on most queries. Vertical whiskers show the interquartile range over seven
  samples. Filled and hollow markers distinguish accepted and rejected judgments.}
  \label{fig:implementation-time}
\end{figure}

\FloatBarrier

\section{Quartic Cost Bound}
\label{app:cost}
\label{sec:polynomial}

We analyze the checking equations of Section~\ref{sec:backward-equations}
with the single-hole inverse of Section~\ref{app:single-hole-refinement}.

\subsection{Execution and charges}
\label{sec:machine-model}
\label{sec:concrete-execution}

For $I=(\Delta,\Gamma,s,T)$, use the encoded length $n$ of
Section~\ref{sec:evaluation-input} and put $m=n+1$.
Words have $O(\log(n+2))$ bits; primitive word operations cost one.
Intern immutable descriptions in an ordered dictionary at logarithmic
cost; subsequent accesses use handles.

A checking spine follows abstraction bodies and function children;
application arguments start new spines. For spine lengths $\ell_i$ and
term-node count $\termnodes{s}$,
\begin{equation}
 \sum_i\ell_i=\termnodes{s}\le n,\qquad
 \spinesquares{s}=\sum_i\ell_i^2\le\termnodes{s}^2.
 \label{eq:spine-partition}
\end{equation}
Below, $\ell$ denotes one spine's length; $m$ always refers to the original
input, including in recursive argument checks.

\paragraph{Execution state.}
Each spine stores its current term occurrence, lexical context, saved
elimination domains, and a pointer to the target family graph of
Section~\ref{sec:symbolic-implementation}. A history $F$ is a mutable family
record with a stable identity, sign, scope cursor, fixed head, full constructor
children and root alternatives. Constructor frames retain the actual domain
and body pointers. A cursor identifies a scope prefix; binder tokens and
cursors keep their identities when insertion changes their numerical levels.
Ordinary type endpoints retain their captured tokens. Imported histories start
from ordinary types and attached producers from canonical types; these initial
types are their \emph{seeds}. Birth allocates a history and appends its slot
to its source group. Retirement never renumbers these slots.

\paragraph{Driver and spine operations.}
Validate all parsed syntax, including both children and empty context tails.
On success, build source indices and produce spine candidates with delayed
substitutions, saving domains and ordered scoped obligations. Reject if
candidate production fails. Otherwise discharge term arguments recursively
at their saved domains and test type arguments against
their saved bounds, in obligation order, stopping at the first failure.
Only then initialize the spine: allocate its lexical tokens, reserve deeper
scopes up to the target's maximum binder depth plus $4m$ using fresh tokens
with bound $\tytop$, and clone the target as a negative history.
Its query cache and hole-event list start empty.

The current term and target root determine the next action. Acceptance when
$\tytop\in\genlang{F}$ takes priority, as in Section~\ref{sec:backward-equations}.
Otherwise an application or matching introduction first reserves $4m$ further
comparison cursors and tokens with bound $\tytop$. Prepare the source directories,
child-map profiles and query tables for this snapshot, then perform the
corresponding operation:
\begin{enumerate}
\item At $f\,u$, attach the saved canonical domain $A$ as a positive family
and make an arrow frame with this domain and the complete current family $F$
as its body. Continue at $f$.
\item At $f\{R\}$, freeze the inverse guards in the current context, including
ordinary seed guards for imports needed in this pass. Insert the fresh
token $\alpha$ with bound $\tytop$, update histories and imports in
child-before-parent order, and append the alternatives selected by the frozen guards. Attach the
saved canonical bound $A$ and form the universal frame with the complete
inverse body $\invlo{[R/\alpha]}{F}$. Continue at $f$.
\item At $\abs{x}{A}{s}$, query the matching frame's positive domain against
$A$ in the current outer context. On success consume the frame, retain its
full body family, extend $\Gamma$ by $x:A$, and continue at $s$.
\item At $\tabs{\alpha}{A}{s}$, make the same domain query in the outer
context. On success consume the universal frame, activate its token by changing
its bound from $\tytop$ to $A$ at most once, and continue at $s$ with its full body
family in $\Delta,\alpha<:A$. Retire material that requires the consumed binder's
bound to remain $\tytop$.
\end{enumerate}
A failed domain guard terminates with rejection, reusing that query's result;
its reservation, preparation and query work have already occurred. The other
terminal cases prepare their query tables without a further scope reservation:
if $\tytop\in\genlang{F}$, accept; otherwise a variable $x$ returns
$\qry{-}{\Delta}{F}{\Gamma(x)}$, and a constant or an introduction without a
matching frame rejects. Every successful update consumes one spine node.

Preparation scans the retained group and view arrays, including retired slots,
to rebuild directories and select \emph{active rows}, those containing a live
history. It compiles profiles only for these rows, at shallow and deep cursors
alike, scanning their parent slots and child threshold ranges a constant number
of times. Import passes start from missing bound promotions of live histories
and close under ordinary-source child and bound dependencies.
Preparation also computes the core results needed by the operation's queries,
using Section~\ref{sec:compiled-cutoffs}; these are available for prepared-result
reuse. Queries read a fixed snapshot and evaluate dependencies before their
parent, with domains before bodies. A miss evaluates all requests of its compiled
equation in their stored order and caches the answer before the next request;
a hit reuses it. Before publishing a successful update, seal all new birth
records and, in the pre-update query snapshot, perform the seed comparisons
used to certify reusable cutoffs.
Publish the new root and eligible cache entries, then retire invalidated
regions. Compiled profiles and snapshot-local query tables are rebuilt at the
next preparation; deep comparison entries persist only under
Lemma~\ref{lem:deep-cache-preservation}.

A \emph{completed snapshot} consists of the live graph and ordered bound
assignments reached by this driver after initialization, comparison
reservation, or a successful update, with birth records sealed and retirement
finished. Reservation only
adds deeper tokens with bound $\tytop$; it leaves existing histories and bounds
unchanged. At cursor $c$, write $\Theta_c$ for this snapshot's comparison
context. All histories in a query are interpreted in their current versions
in that one snapshot, including the current bound assignments. Inverse guards
use the snapshot before insertion; queries on the resulting families use the
completed successor.

\begin{lemma}[Spine production]
\label{lem:spine-production}
Production and discharge of the saved obligations succeed exactly when
canonical synthesis succeeds at the spine root. On success, each saved
domain comes from the canonical exposure at its original application
occurrence, in that occurrence's lexical context. Target propagation reuses these
premises. Each term argument off the spine generates one recursive checking
obligation, with its original context and the saved domain.
\end{lemma}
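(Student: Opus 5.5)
The plan is to prove the lemma by induction on the spine length $\ell$, following the spine from its root through abstraction bodies and function children. Lexical contexts are generalized, and recursive argument checks are treated as black boxes. The invariant is this: production at a spine node $s$ in lexical context $(\Delta;\Gamma)$ yields a candidate head type with delayed substitutions, together with an ordered obligation list. After all delayed substitutions are applied, that head equals $\canon{\Delta}{\Gamma}{s}$ whenever every saved obligation holds. Conversely, $\canrel{\Delta}{\Gamma}{s}{C}$ holds exactly when production succeeds and every obligation holds. The rules of $\vdash_{\mathrm c}$ in Section~\ref{sec:canonical-domains} are syntax directed, so each production step mirrors exactly one canonical rule.

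The cases follow those rules, and the leaf cases are immediate. At a variable, production succeeds iff lookup does. $\tmconst$ always succeeds. At an abstraction, production records the annotation and recurses into the body in the extended lexical context; the saved head is the annotated constructor around the body's head, matching the introduction rules. At $f\,u$, production first produces the spine child $f$. It then computes $\Expose{\Delta}{C}$ of the produced head and rejects unless the head is an arrow $A\to B$. In that case it saves $A$ with the current lexical context and appends the obligation $u:A$. At $f\{R\}$, production likewise exposes the head, requires a universal $\All{\alpha}{A}{B}$, and saves the bound $A$. It appends the obligation $R\subin{\Delta}A$ and records the delayed substitution $[R/\alpha]$ for the result $B[R/\alpha]$.

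Exposure depends only on the head constructor, which is fixed by the lexical context. Therefore computing $\Expose{\Delta}{\cdot}$ on a head carrying delayed substitutions agrees with exposure of the substituted type. Here one uses that exposure commutes with substitution until a variable in the image is reached, and the delayed form is forced when it is reached. This shows that each saved domain is the canonical exposure at its own occurrence and in that occurrence's context. Obligations are recorded in occurrence order, so discharging them in that order and stopping at the first failure decides the conjunction of the canonical premises. The discharge of $u:A$ calls the checker on $\{A\}$, and Theorem~\ref{thm:backward} makes this equivalent to the declarative guard $\typing{\Delta}{\Gamma}{u}{A}$. Since the checker is applied to the smaller term $u$, the argument fits the strong induction on term size already used for that theorem. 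This also yields the last clause: exactly one recursive obligation per off-spine argument, with its original context and saved domain. The claim that target propagation reuses these premises then follows, because the driver's steps (1)--(2) read the saved $A$ instead of recomputing synthesis.

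I expect the main obstacle to be the interaction between delayed substitution and exposure. A head's exposed shape may come from a replacement $R$ introduced by an earlier type application on the same spine, as in the example $i\{R\}\,u$. Production must therefore expose through pending substitutions without materializing them. I would first prove a small auxiliary fact: for a typed substitution $\sigma$ built from the spine's saved instantiations, $\Expose{\Delta}{(B[\sigma])}$ is $\Expose{\Delta}{(\sigma(\beta))}$ when $B$ exposes to a substituted variable $\beta$, and otherwise the substituted exposure of $B$. Once that fact is in place, the induction over the canonical rules is routine.
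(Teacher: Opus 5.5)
Your proposal is correct and follows essentially the paper's argument. Production mirrors the syntax-directed canonical rules with the argument and type-argument premises left open, and argument checks are settled by the term-size induction hypothesis. The saved domains are the canonical exposures at their own occurrences, and your explicit fact about exposing through delayed substitutions spells out a step the paper leaves implicit.

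Two small adjustments are needed. First, the recursive discharge is a run of the spine driver itself, not the abstract checker. Its equivalence with $\typing{\Delta}{\Gamma}{u}{A}$ should therefore come from the joint term-size induction hypothesis on the driver, as in the proof of Theorem~\ref{thm:quartic}, rather than from Theorem~\ref{thm:backward} directly. Second, the reuse clause rests on checking revisiting exactly the occurrences and lexical contexts that production visited; this also covers the typability test before accepting when $\tytop\in\genlang{F}$.
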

\begin{proof}
Induct on term size. Production follows abstraction bodies and function
children, computing candidates on return. Lookup and introduction reproduce
the canonical rules. At $f\,u$,
expose the candidate for $f$, save its domain $A$, and append
$(\Delta,\Gamma,u,A)$ to the obligations; the candidate is its codomain.
At $f\{R\}$, save the exposed bound $A$ and the obligation
$R\subin{\Delta}A$, and apply the delayed substitution to the body.
Thus production is canonical synthesis with precisely these premises left
open. The induction hypothesis for the smaller argument terms makes their
full checks at $\{A\}$ equivalent to the saved typing premises. Failure of
production or an obligation therefore excludes canonical synthesis; success establishes
it at every position on the spine.

Subsequent checking visits the same occurrences and lexical contexts:
an introduction extends them by its written annotation, and an elimination
moves to its function child. Only the target family changes. Hence the saved
premises replace every repeated synthesis call in
Section~\ref{sec:backward-equations}, including the typability test before
accepting when $\tytop\in\genlang{F}$. Production emits each off-spine argument
once and target propagation emits none. The executed spines are consequently a subset of
the syntactic partition in~\eqref{eq:spine-partition}.
\end{proof}

\begin{definition}[Concrete run and charged work]
\label{def:costed-run}
$\namedop{run}(I,a,W)$ holds when the driver above terminates with answer $a$;
$W$ sums validation visits, setup and the charges below along its executed
operations, including recursive obligation checks.
Caches start empty. Charge unused initialization, discarded allocations and
work before rejection; invalid inputs incur only validation. Allocate one store
per spine initialization or type-argument check, numbered from zero in driver
order at three per store.
Spines and type-argument checks cost one plus their reached operations;
obligation traversal costs one per visited list node, including the empty tail.
\end{definition}

Query keys are (row, cursor, sign, endpoint, mode); misses evaluate the equations
of Section~\ref{sec:compiled-cutoffs}, children first. Hits, prepared-result reuse
and core insertion cost $1,2,4$, plus compilation and recursive work.
Shallow banks and shared columns retain their snapshot and bound versions.
Column keys are (cursor, sign, endpoint); hits and output reads cost $3,4$. A miss queries rows in
order, merges failures, traverses the forest and writes outputs for $11$
plus primitive-query work and counts of rows, aggregation/indexing events
and new cells.

Charge all retained slots visited during directory construction, and all
parent slots, child thresholds and token/cap entries read or initialized while
compiling active-row profiles. Directory capacity is
$1+\max(\#\text{groups},\#\text{variable views})$.
Reserving $4m$ comparison cursors and tokens with bound $\tytop$ costs
$8m+4m(2t+d)+6m(4m-1)$ for $t$ old tokens and $d$ deeper cursors.
Each store initializes $16m^2+1$ headers at two per header.
Publication reads and inserts new cells, then retires regions;
new-cell reads, writes, links and named-cell deletions cost $4,3,4,8$.

The mechanized model takes parsed syntax with abstract natural-number fields.
For an input of $b$ bits, a decoder scans the encoding once, using $b+1$ as
an overflow sentinel and rejecting oversized fields and malformed encodings.
Valid indices and context lengths are at most $b$, so this rejects no
well-formed input. Successful complete decoding preserves $n=b$ and leaves
fields that fit $O(\log(b+2))$-bit words. This linear decoding and numeric
representation argument is on paper.

\subsection{Coverage and source order}
\label{sec:histories}
\label{app:histories}
\label{sec:ordered-sources}
\label{sec:group-count}

$\histcore{F}$ keeps the fixed head and full children;
positive $\histbase{F}$ also keeps promotions. Lower inversion from a singleton
retains at most one structural root, so a matching introduction consumes one
frame and retains its full body after the domain test.

Application token $\alpha$ has saved guard $R_\alpha<:A_\alpha$ and positive
owning domain $\domainfam{\alpha}$ initially containing $A_\alpha$.
Its complete evolving domain, weakened along the live owner's scope-prefix
inclusion to observer $c$ in $\Theta_c$, is $\widehat D_{\alpha,c}$.
Pending added tokens have bound $\tytop$; nonself excludes $F$'s fixed variable head.

\begin{lemma}[Pending-owner preservation and root projection]
\label{lem:pending-owner-preservation}
In every completed snapshot reached after successful argument obligations,
each pending nonself root $\alpha$ in a live positive $F$ at $c$ has a live
owner and, for well-formed endpoints
in $\Theta_c$, satisfies
\begin{align}
 &\forall Q\text{ nonvariable}.\quad
 \qry{+}{\Theta_c}{\widehat D_{\alpha,c}}{Q}
 \Longrightarrow\qry{+}{\Theta_c}{\histbase{F}}{Q},
 \label{eq:pending-nv-certificate}\\
 &\forall Q.\quad
 \qry{+}{\Theta_c}{\widehat D_{\alpha,c}}{Q}
 \Longrightarrow\qry{+}{\Theta_c}{F}{Q}.
 \label{eq:pending-full-certificate}
\end{align}
These hold also inside live positive owning domains and survive birth, common
inverse, activation after a successful domain test, and retirement.
For nonvariable $Q$, live positive histories satisfy
$\qry{+}{\Theta_c}{F}{Q}\iff\qry{+}{\Theta_c}{\histbase{F}}{Q}$,
and negative histories satisfy
$\qry{-}{\Theta_c}{F}{Q}\iff\qry{-}{\Theta_c}{\histcore{F}}{Q}$.
\end{lemma}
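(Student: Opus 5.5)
I would prove the lemma by induction on the sequence of completed snapshots reached by the driver. The invariant is the conjunction of three things: the existence of a live owner; the two certificates \eqref{eq:pending-nv-certificate} and \eqref{eq:pending-full-certificate}, for every pending nonself root of every live positive history, including those nested inside live positive owning domains; and the two root-projection equivalences. The base case is spine initialization. There the target is cloned as a negative history and no application token is pending, so the certificates hold vacuously. For the negative equivalence I would use that lower inversion from a singleton retains at most one structural root, as noted before the statement. Hence $\histcore{F}$ and $F$ have the same nonvariable supertype witnesses, since a variable member can lie above a nonvariable $Q$ only if $Q$ is a variable. Reservation adds only deeper tokens with bound $\tytop$ and changes neither histories nor bounds. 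By weakening \eqref{eq:relative-contexts}, every query answer at an existing cursor is preserved, so reservation steps are immediate.

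The substantive cases are the four operations: birth at $f\,u$, the common inverse at $f\{R\}$, activation at $\tabs{\alpha}{A}{s}$, and retirement.

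\emph{Birth.} The new token $\alpha$ gets the positive owning domain $\domainfam{\alpha}=\{A_\alpha\}$, and the owner is the frame just created. I would establish the certificates from Lemma~\ref{lem:spine-production}. It gives the saved domain as the canonical exposure, and the successful argument obligation supplies $R_\alpha\subin{}A_\alpha$. A positive query succeeding on $\widehat D_{\alpha,c}$ then transfers to the positive base, because the base keeps promotions and $A_\alpha$ is retained among them.

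\emph{Common inverse.} I would use Theorem~\ref{thm:inverse} in its single-hole form from Section~\ref{app:single-hole-refinement}. Old target members are retained in the extended scope, and substitution retracts the weakening $\weaksubst$. A pending root and its owner are therefore transported together. Each certificate is then re-derived from the specification \eqref{eq:inverse-family} applied to both the domain and the history. Since the guards are frozen in the pre-insertion snapshot, the certificates must be checked against the successor, and weakening again covers this.

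\emph{Activation.} This is the main obstacle. Changing a bound from $\tytop$ to $A$ can only add comparisons, by \eqref{eq:relative-contexts}. A new success on $\widehat D_{\alpha,c}$ must therefore be matched by a new success on $\histbase{F}$. My plan is to use the successful domain test together with the fact that activation happens only on a consumed frame. I would restrict attention to pending roots, whose bound is still $\tytop$, so the activated token is not pending. I would then argue that any derivation using the new bound factors through the domain test and the base promotions, via algorithmic inversion with \ruleref{A-Promote}. Material that still relies on bound $\tytop$ is retired, which removes the remaining counterexamples.

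\emph{Retirement.} This only removes histories. A live owner could fail to survive, so I would show that retirement removes an owner only together with its dependent pending roots, using child-before-parent ordering.

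Finally, the positive projection equivalence follows from \eqref{eq:pending-nv-certificate}. A positive witness of $F$ for nonvariable $Q$ is one of three things: a core member, which is retained in the base; a promotion, also in the base; or a pending root $\alpha$. In the last case algorithmic inversion forces the witness through the owning domain, and the nonvariable certificate places it in $\histbase{F}$. The converse direction holds because $\histbase{F}$ is contained in $F$ up to sound promotions.
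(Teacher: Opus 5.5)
Your proposal has a genuine gap at birth, and the step that actually creates pending roots is covered by neither your birth case nor your common-inverse case. Application tokens are not born at $f\,u$; there the driver only attaches a positive domain to an arrow frame. They are inserted at $f\{R\}$. A pending root $\alpha$ enters an existing positive history $F$ only through the fresh-hole gate $\invgate{+}{F}$ of the single-hole inverse. That gate fires exactly when $\qry{+}{\Theta}{F}{R_\alpha}$ holds, that is, when some $U\in\genlang{F}$ satisfies $U\subin{\Theta}R_\alpha$.

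This inequality is the only connection between $F$ and the owning domain, and the paper's birth step rests on it. From $U<:R_\alpha<:A_\alpha$, every $Q$ above the domain member $A_\alpha$ also lies above $U$. Upper-inverse completeness then gives a structural or promotion witness in the new family for nonvariable $Q$, since the fresh hole has bound $\tytop$ and lies only below itself and $\tytop$. Inhabitation covers $Q=\tytop$.

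Your justification, that $A_\alpha$ is retained among the promotions kept by $\histbase{F}$, does not hold. Those promotions are bound promotions of $F$'s own fixed variable head. $A_\alpha$ is the canonical bound saved at an unrelated type application, and Lemma~\ref{lem:spine-production} provides nothing linking the two. Your common-inverse case only transports roots that already exist. So as written, the certificates are never established for a newly admitted root.

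A related confusion affects your projection step. While $\alpha$ is pending its bound is $\tytop$, so algorithmic inversion of $\alpha\subin{\Theta_c}Q$ for nonvariable $Q$ simply forces $Q=\tytop$. That case is handled by inhabitation of $\histbase{F}$, for example by \eqref{eq:pending-nv-certificate} at $\tytop$. Inversion does not route through the owning domain, which is not the token's bound.

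The domain matters only for the bound the token will later receive. At activation the bound becomes the written annotation $A$, and the successful domain test supplies $V\in\genlang{\domainfam{\alpha}}$ with $V<:A$. Then $A<:Q$ yields $V<:Q$, and the certificates apply by composition, with weakening at other observers and inside other owning domains. Your activation plan should route through this $V$ rather than through base promotions.

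Finally, owning domains can themselves contain pending roots of strictly outward tokens, so the certificate updates must be ordered. The paper closes these owner dependencies transitively by depth before publication, which your induction leaves implicit.
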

\begin{proof}
Induct jointly on histories and owners. At birth, $U<:R_\alpha<:A_\alpha$
and upper-inverse completeness give structural or promotion witnesses for
nonvariable $Q$, and inhabitation for $Q=\tytop$. Inverse exactness and weakening
by declarations with bound $\tytop$ preserve them. Close strictly outward owner
dependencies transitively by depth before publication. At activation,
$V\in\genlang{\domainfam{\alpha}}$ and
$V<:A$ preserve the implications by weakening and composition, both at
observers and inside domains. Retain fixed literal and promotion branches
and original target roots; inner pending bounds stay $\tytop$. Activated tokens
leave the invariant; retirement deletes dependent regions. The implications
and subtype inversion give the projections.
\end{proof}
Source cohorts key imports by descriptor, cursor, captured tokens and sign,
and producers additionally by occurrence, role and exposure generation.
Annotation and declaration roles differ. Labels include inserted arguments;
exposure starts at the original bound. $\histcap{\mathcal C}$ counts allocated
slots, including retired births.
Here an occurrence is a position in the original type syntax or a
term-introduction wrapper. Substitution keeps surviving constructor labels
and labels inserted syntax by its explicit argument occurrence. Projection
selects a labelled child; exposure starts a new generation at the original
bound's root. Each attachment records the actual children, captured tokens
and owning wrapper. Its domain records retire when that wrapper is consumed;
the separate body continues.

\begin{lemma}[Source order]
\label{lem:source-order}
Fix a completed snapshot and a cursor with comparison context $\Theta$.
Let $F_i$ be the current history at birth slot $i$ in one source cohort or
attached producer group $\mathcal C$. All live $F_i$ use this same snapshot's
bound assignments. Their successes form a prefix at every well-formed $Q$
in $\Theta$:
\begin{equation}
 \qry{p}{\Theta}{F_i}{Q}\iff i<\kappa_{\mathcal C}^p(Q)
 \qquad\text{for live }i.
 \label{eq:cutoff}
\end{equation}
This holds for whole families and cores.
\end{lemma}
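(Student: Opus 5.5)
The plan is to reduce the prefix property \eqref{eq:cutoff} to a monotonicity statement between consecutive live slots. I would prove that for live births $i<j$ in one cohort or producer group $\mathcal C$, and every well-formed $Q$ in $\Theta$,
\[
 \qry{p}{\Theta}{F_j}{Q}\Longrightarrow\qry{p}{\Theta}{F_i}{Q}.
\]
In words, an older history retains every witness of a later one at the same endpoint. Given this, the set of live slots succeeding at $Q$ is downward closed in birth order. I then set $\kappa_{\mathcal C}^p(Q)$ to the first failing live slot, or to the all-success marker $\histcap{\mathcal C}$. Retired slots keep their birth numbers and are simply skipped, so the cutoff is well defined even though slots are never renumbered.

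For the monotonicity I would induct on the sequence of completed snapshots, tracking how members of one group evolve. A group fixes descriptor, cursor, captured tokens and sign, and for producers also occurrence, role and exposure generation. Hence two members $F_i,F_j$ have the same seed and differ only in which inverse steps and imports have acted on them since their births. I would maintain the invariant that the current $F_j$ is obtained from a copy of $F_i$'s state at $j$'s birth by later operations that only remove alternatives or apply the same transformation to both. The operations to check are these:
\begin{enumerate}
\item Birth: the new slot is a fresh clone of the seed, whose language is contained in the older histories' retained alternatives via their retained original roots.
\item Common inverse at $f\{R\}$: both histories are updated by the single-hole construction $\invshort{-}{\cdot}$ of Section~\ref{app:single-hole-refinement}. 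That construction is monotone in its input family with respect to signed queries, by Theorem~\ref{thm:inverse}. The older history additionally holds hole alternatives $\alpha$ from earlier guards, as in the $F_0=\{R,\alpha,\gamma\}$ versus $F_1=\{R,\gamma\}$ example.
\item Activation after a successful domain test: this changes a bound from $\tytop$ to $A$ for all histories at once, because all live members are read in the same snapshot. It preserves subtype witnesses by bound restoration \eqref{eq:relative-contexts}.
\item Retirement: this deletes whole regions and never a branch of a surviving older history.
\end{enumerate}
Inverse exactness, together with weakening by declarations bounded by $\tytop$, transports each witness of $F_j$ to one of $F_i$.

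For the core version, I would repeat the argument with $\histcore{F}$. I would invoke Lemma~\ref{lem:pending-owner-preservation} to handle pending positive roots: for nonvariable $Q$ the queries on $F$ and on $\histbase{F}$ (respectively $\histcore{F}$) agree. For variable endpoints I would use the full certificate \eqref{eq:pending-full-certificate}, which shows that pending owner domains contribute nothing new beyond what older members already cover.

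The main obstacle I expect is the positive case with variable endpoints. There, promotions through evolving owner domains $\widehat D_{\alpha,c}$ might let a younger history succeed through a token that the older one sees differently. My first attempt would be to show that such tokens are shared by every member of the group, since captured tokens are part of the key. The owner-domain implications then apply uniformly to both members, so any success of $F_j$ routed through $\widehat D_{\alpha,c}$ factors through $F_i$ as well.
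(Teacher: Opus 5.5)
Your reduction of \eqref{eq:cutoff} to downward closure in birth order matches the paper. So does your treatment of import cohorts: a newborn is a copy of a seed that every older member retains, the common exact inverse preserves the implication, and retirement never renumbers slots.

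The gap is in attached producer groups. You infer from the group key that any two members ``have the same seed''. That holds for imports but is false for producers. A producer group collects attachments of one original occurrence and exposure generation made at different checkpoints of the spine. Their seeds differ by the application tokens restored in between. For $b<b'$ the paper uses $S_{b'}\sigma_{b,b'}=S_b$, where $\sigma_{b,b'}$ erases the intervening restored slots by their saved replacements. This is also why a variable child can change from $y$ to $z$ with $\namedop{parent}(z)=y$ (Lemma~\ref{lem:row-replay-compatibility}). Your birth step therefore fails for producers: the newborn is not a clone of any root the older history retains.

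The missing idea is to replay the result-contributing synthesis path and combine this identity with inverse completeness. A witness for the newborn at $Q$ is pushed through $\sigma_{b,b'}$ to the older seed. Exactness of the intervening inverses then returns a witness in the older history. This must be done at each actual constructor child, reversing the sign on domains and lifting the substitution under universal bodies. That childwise form yields the statement for cores, and whole-family coverage follows from it.

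Because producer coverage passes through $\sigma_{b,b'}$, you also need this erasure to stay typed after an intervening token is activated with a written bound $A$. Bound restoration does not supply this. The paper obtains it from the saved argument guard $R_\alpha<:A_\alpha$, the successful domain test, and Lemma~\ref{lem:pending-owner-preservation} for pending owners. It then reflects a new answer back to the checkpoint by substitution and carries it forward by narrowing.

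Your substitutes do not close the gap:
\begin{itemize}
\item The projections of Lemma~\ref{lem:pending-owner-preservation} hold only at nonvariable endpoints. For the positive sign they relate $F$ to $\histbase{F}$, not $\histcore{F}$, so they cannot give the core prefix at every endpoint.
\item \eqref{eq:pending-full-certificate} bounds an owner domain by the same history, not by an older slot.
\item Producer members do not share their tokens: the later seed contains restored tokens where the earlier one has their replacements. So ``captured tokens are part of the key'' gives no cross-slot coverage in your main-obstacle case.
\end{itemize}
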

\begin{proof}
Imports retain their named seeds. A new copy of a seed is therefore covered
by an older matching history; applying the same exact inverse to both
preserves this implication. For producers, replay the result-contributing
synthesis path with its saved substitutions. Two live checkpoints of one
occurrence and generation satisfy $S_{b'}\sigma_{b,b'}=S_b$ for $b<b'$,
where $\sigma_{b,b'}$ erases the intervening restored slots. This identity
holds separately for each actual constructor child, with opposite sign on
domains and a lifted substitution in universal bodies. Inverse completeness
at each child supplies a witness in the older core, and hence whole-family
coverage as well. Labels and captured placements select the actual
substitution sequence.

While an owning wrapper is live, its seed binders have not been crossed.
Erasing later application tokens by their saved replacements remains typed
at activation: the saved argument guard and the successful domain test
give the substituted bound condition, also in pending owners by
Lemma~\ref{lem:pending-owner-preservation}. Substitution reflects an answer
back to the checkpoint, and narrowing preserves the converse. Fixed-head
comparisons use the same bound version on both sides. Retirement removes
positions without renumbering them, preserving the prefix on live slots.
\end{proof}
A newly computed cutoff is the first live failure or current capacity;
only $\infty$ certifies success at all future births. Deep reuse follows
Lemma~\ref{lem:deep-cache-preservation}.

\begin{lemma}[Canonical representation and producer grouping]
\label{lem:producer-groups}
Canonical and exposed types have $O(m)$ height and absolute scope, with scoped
DAGs of $O(m^2)$ nodes, $O(m)$ per level. At each snapshot, attached records
form $O(m)$ ordered producer groups per cursor and $O(m^2)$ in total,
with $O(m^2\ell)$ records cumulatively per spine.
\end{lemma}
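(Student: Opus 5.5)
We first prove the size claims for canonical and exposed types by induction on canonical derivations, following Lemma~\ref{lem:spine-production}. Each canonical type is assembled from three kinds of pieces: annotation subtrees of the input, declared types from $\Gamma$, and bounds from $\Delta$. These are combined by introduction wrappers (one per term node) and by delayed substitutions $B[R/\alpha]$ at type applications, where $R$ is an explicit type argument. Exposure only follows bounds, which are themselves input subtrees.

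The height bound rests on one observation: a type variable is replaced only by an input subtree, whose height is at most $m$, and such a subtree is inserted at a leaf. Hence the height of any canonical type is at most the height of the surrounding constructor skeleton plus $m$. The skeleton consists of wrapper constructors, one per term node, together with constructors of a single input subtree, so it has height $O(m)$.

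For the DAG bound, we represent types by interning with absolute scope: variables are named by binder tokens, not de Bruijn offsets, so substitution never rebuilds shared subtrees. Two kinds of node are then generated at each height level.
\begin{itemize}
\item Input-occurrence nodes: there are at most $n$ of these in total.
\item Nodes created along one substitution or wrapper path: there are $O(m)$ of these per level, one per term node or explicit argument occurrence.
\end{itemize}
With $O(m)$ levels, this gives $O(m^2)$ nodes and $O(m)$ per level.

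The main obstacle is iterated substitution. A type argument $R$ may itself contain variables that a later instantiation replaces. I would handle this by labelling nodes as the paper's occurrence discipline does: each node is labelled by (original occurrence, exposure generation, level). Surviving constructors keep their labels, and inserted syntax is labelled by its explicit argument occurrence. One then has to show that this labelling is injective per level up to interning, so that composition of delayed substitutions reuses nodes instead of copying.

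For producer groups, the plan is to key groups by (occurrence, role, exposure generation) at each cursor. Occurrences number $O(m)$. An exposure generation starts only at the root of an original bound, so for a fixed cursor the generations relevant to an occurrence are determined by the bound chain depth. Bounding the number of generations per cursor by $O(1)$ per occurrence is a step I expect to need care with: it should follow because at a fixed cursor the live tokens' bounds are fixed input subtrees. This gives $O(m)$ groups per cursor. Cursors at which groups are attached are bounded by the reserved depth, and summing their occurrence contributions gives $O(m^2)$ groups in total.

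Finally, records per spine: each of the $\ell$ operations on the spine attaches at most the canonical DAG nodes it touches, which is $O(m^2)$ by the DAG bound above. Summing over the $\ell$ operations gives $O(m^2\ell)$ records cumulatively per spine.
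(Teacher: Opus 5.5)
Your outline follows the paper's skeleton: height from the segment after the last exposure reset, nodes counted per source occurrence and level, groups keyed by occurrence, role and generation, and $O(m^2)$ records per attachment. However, the two facts that carry the producer-group bounds are missing, and the reasons you suggest for them do not work.

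The constant bound on live generations of one occurrence/role at a signed cursor does not follow from bounds at a fixed cursor being fixed input subtrees. That fact says nothing about how often the \emph{same} original bound is re-exposed along one result path. The paper instead proves that repeated generations land at distinct cursors. Bounds point strictly outward, so projections and bound lookups alone never return to an occurrence. Suppose $e$ type eliminations and $i$ introductions separate two reloads. Returning to a variable root consumes all $i$ introduced universals plus at least one more, so $e\ge i+1$ and the restored root level moves by $e-i>0$. If the lexical boundary crosses an earlier placement, that placement's domain records have already been retired.

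Separately, bounding attachment cursors by the reserved depth is too weak. Each update reserves $4m$ further cursors, so that depth is $O(m(j+1))$ after $j$ updates. This yields $O(m^2(j+1))$ groups, not $O(m^2)$. The paper confines attachments to $O(m)$ cursors using the $O(m)$ canonical height plus at most $\ell\le n$ subsequent insertions.

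For the DAG bound, absolute scope does not mean that substitution never rebuilds shared subtrees. Every node above a substituted variable is rebuilt. An inserted argument also gets a different node at each insertion level, because its local tokens depend on the placement. So input-occurrence nodes are not at most $n$ in total.

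The step you defer is the substance of the proof. An inserted occurrence is keyed by its insertion level: its free tokens are fixed by the argument's lexical capture and its local tokens by the placement. All copies then undergo the same remaining substitutions along the result path, so the key determines the final node. At level $j$, an occurrence of intrinsic depth $d$ therefore contributes only the node inserted at $j-d$.

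Exposure generation is unnecessary in this label, since only the segment after the last reset contributes to one candidate. Including it would require a separate bound on generations per level.

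Likewise, for height, chained substitution stacks the heights of distinct explicit arguments. The bound is therefore the skeleton plus the total size of annotations, replacements and wrappers on that segment, which is at most $n$. It does not follow from each insertion being a single subtree of height at most $m$ placed at a leaf.
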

\begin{proof}
For one candidate, follow its result path through bodies and function
children. Introduction adds a wrapper; substitution increases height by at
most the explicit argument's height. An argument term's synthesized type
is never inserted. Exposure resets the root to an original lexical bound,
so only the segment after the last reset contributes. Written annotations,
replacements and wrappers on that segment have total size at most $n$.
Height and lexical depth are therefore $O(m)$.

Memoization is local to this forward computation. During one substitution,
an old scoped node has one output. An inserted source occurrence is keyed
by its insertion level: its free tokens are fixed by the argument's lexical
capture, its local tokens by that placement. All copies undergo the same
remaining substitutions on the result path, so the key determines the
resulting scoped node. At final level $j$, an occurrence
of intrinsic depth $d$ was inserted at $j-d$; it contributes at most one
node there. Summing original occurrences gives $O(m)$ nodes per level and
$O(m^2)$ per candidate. Distinct archived candidates retain their own
records.

To count generations, consider two reloads of the same original bound on
this path. Bounds point strictly outward, so projections and bound lookups
alone cannot return to that occurrence. Let $e,i$ count the intervening
type eliminations and introductions. Returning to a variable root consumes
all $i$ introduced universal wrappers, and a return through a type source
consumes at least one more universal. Hence $e\ge i+1$. The restored root
level changes by $e-i>0$, and an occurrence at depth $d$ is restored at that
level plus $d$. Repeated live generations therefore occupy distinct cursors.
If the lexical boundary crosses an earlier placement, it has consumed its
owning wrapper and retired that domain's records. The unexposed declaration
copy adds only a separate constant role.

There are $O(m)$ occurrence/role pairs and at most a constant number of
live generations for each pair at a signed cursor. Canonical height and at
most $\ell\le n$ subsequent insertions confine attachments to $O(m)$
cursors. Each of at most $\ell$ attachments contributes $O(m^2)$ records,
giving the stated live and cumulative bounds.
\end{proof}

\subsection{Compiled queries and termination}
\label{sec:compiled-cutoffs}

For parent group $\mathcal C$ and a partial child-position map $\pi$ on
live parent slots, define
\[
 \namedop{pull}_\pi(k)=
 \min\bigl(\{i\in\domof{\pi}:\pi(i)\ge k\}
              \cup\{\histcap{\mathcal C}\}\bigr).
\]
Compile thresholds in one parent-plus-child scan. Import maps $\pi_d,\pi_b$
are nondecreasing between fixed cohorts and defined on all live parents,
so $i<\namedop{pull}_\pi(k)$ iff $\pi(i)<k$ for live $i$. Constructor cutoffs are
\begin{equation}
 \kappa_{\rm core}=\min(k_d,k_b),\qquad
 k_s=\namedop{pull}_{\pi_s}(\kappa_s)\quad\text{for imports}.
 \label{eq:cutoff-pull}
\end{equation}
Reverse the domain sign. Producer variable children may span several rows.

\paragraph{Shared variable columns.}
A prepared row $\eta$ of variable producers has snapshot-local key
(occurrence, role, exposure generation, cursor, sign);
referring parent births share its birth array
and fixed token $z$. For column $(c,p,Q)$ in comparison context $\Theta$,
$\mathcal R_z$ contains all such rows at $c$ with sign $p$ and head $z$,
even when their occurrence labels differ.
View $(\mathcal C,s)$ stores actual children $H_s(i)$ and birth
positions, masking nonvariable children and retired parents.

Number driver updates by their birth stamp $j$; initialization has stamp
zero and reservation does not advance the stamp.
The logical birth phase is $b(H)=3j$ for ownerless old-side
imports, $3j+2$ otherwise (including clones and owner-tagged imports);
token insertion is $t_z=3j+1$. Same-pass seed guards precede insertion
even if their imports are physically allocated later: they query the
ordinary seed in the pre-insertion context and the import receives that
pass's inverse. Clones are born from the restored source after the event.
Prepared whole-family cutoffs give
\[
 \tau(z)=\min\{b(H):\eta\in\mathcal R_z,\ H\text{ live in }\eta,\
                         \neg\qry{p}{\Theta}{H}{Q}\}.
\]
Empty minima use $\constructor{None}$, above current phases without certifying
future births. Saved variable replacements form a forest of earlier-token
parents; in token order compute
\[
 \vartheta(z)=
 \begin{cases}
   \vartheta(y),&\text{if }\namedop{parent}(z)=y\text{ and }\vartheta(y)<t_z,\\
   \tau(z),&\text{otherwise}.
 \end{cases}
\]
Views map phases to the first live parent whose variable child reaches them;
$v_s(Q)$ reads the view at $\vartheta(z_*)$ for last live path token $z_*$,
returning capacity if none qualifies. Construction costs
$O(\#\text{tokens}+\#\text{rows}+\#\text{views})$ numeric work plus primitive
queries; prepared lookups cost $O(1)$.

\begin{lemma}[Row and replay compatibility]
\label{lem:row-replay-compatibility}
Fix one completed snapshot and column $(c,p,Q)$, with $Q$ well formed
in $\Theta$. Every live $H$ in $\mathcal R_z$ satisfies $t_z\le b(H)$.
For live histories in these rows,
\begin{equation}
 b(H)\le b(H')\ \land\ \qry{p}{\Theta}{H'}{Q}
       \quad\Longrightarrow\quad\qry{p}{\Theta}{H}{Q}.
 \label{eq:row-phase-order}
\end{equation}
Each live variable child in a producer view is an entry of exactly one
prepared row. Its token lies on the replacement path ending at the view's
last live token. Its phase lies at or after its token's insertion and
strictly before the next insertion on this path, if any.
Moreover, if $\namedop{parent}(z)=y$, every live $y$-history in the prepared
rows born before $t_z$ covers every live history in the prepared rows for
$z$ or a descendant of $z$,
in the sense of the implication in~\eqref{eq:row-phase-order}.
\end{lemma}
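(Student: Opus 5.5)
We prove the four claims separately, all within the fixed completed snapshot. Each one reduces to invariants already in hand: the birth-phase bookkeeping of the driver, Lemma~\ref{lem:source-order} for histories that share a source, and Lemma~\ref{lem:pending-owner-preservation} for pending tokens.

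\emph{Step 1: the phase lower bound.} We start with $t_z\le b(H)$. A row in $\mathcal R_z$ has fixed token $z$ as its variable head. A variable producer with head $z$ can be born only once $z$ is in scope. Either it is created by a clone from the restored source after the insertion event of stamp $j$, or by a later update. Ownerless old-side imports with phase $3j$ cannot carry a head $z$ inserted at stamp $j$, because their seed guards were frozen in the pre-insertion context. So every live $H$ in such a row has $b(H)\ge 3j+2>t_z$, or it comes from a later stamp, which gives $b(H)\ge 3(j+1)>t_z$.

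\emph{Step 2: the phase order \eqref{eq:row-phase-order}.} Take $b(H)\le b(H')$. If $H$ and $H'$ lie in the same row, they share one birth array, so phase order coincides with slot order and \eqref{eq:cutoff} gives the implication directly. If they lie in different rows with the same head $z$, we replay as in the producer case of Lemma~\ref{lem:source-order}. The later checkpoint maps to the earlier one by the typed erasure $\sigma_{b,b'}$ of the intervening application tokens. Typedness at activation follows from the saved argument guard and the successful domain test, together with Lemma~\ref{lem:pending-owner-preservation} for pending owners. Inverse completeness at each actual child then puts a witness in the older core. The fixed head $z$ is unchanged by the erasure and is compared under the same bound version on both sides. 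This is exactly why the occurrence labels may differ while coverage still transfers.

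\emph{Step 3: uniqueness and placement of variable children.} Each live variable child in a view $(\mathcal C,s)$ is an actual child $H_s(i)$ recorded at attachment. Its row key (occurrence, role, exposure generation, cursor, sign) is determined by its own birth record. Lemma~\ref{lem:producer-groups} says that at a signed cursor distinct live generations of one occurrence/role pair occupy distinct cursors, so the key singles out exactly one prepared row. Its token lies on the replacement path ending at the view's last live token $z_*$, by induction along the saved-replacement forest. Substituting a token by its saved replacement either keeps the child's head or replaces it with a parent token's variable, which is never a descendant. The phase window then follows. By Step~1 the child is born at or after its token's insertion. If a child were born at or after the next insertion $t_{z'}$ on the path, it would already have been transported through that substitution and would carry head $z'$.

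\emph{Step 4: parent coverage, the main obstacle.} We claim that live $y$-histories born before $t_z$ cover every live history for $z$ or its descendants. Here we compose the erasure of $z$, whose saved replacement is $y$, with the erasures of its descendants, which stay typed by the argument in Step~2. We then apply the transport of Step~2 across heads. The delicate point is the pre-insertion snapshot. Before $t_z$ the $y$-history was guarded in the old context, and the seed-guard ordering rule (same-pass guards precede insertion) makes its answer agree with the substituted query. We would first try to establish an exact retraction $(H[\weaksubst])[y/z]$-style identity on cores, in the manner of the single-hole specialization. We would then close under the view maps $\pi$ by the nondecreasing property of the pulls. If the retraction fails for promotion branches, we fall back on \eqref{eq:pending-nv-certificate} to route nonvariable $Q$ through $\histbase{F}$.
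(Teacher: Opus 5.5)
Your Steps~1 and~3 broadly match the paper. The two coverage claims do not: the cross-row case of \eqref{eq:row-phase-order} and your Step~4 both rest on a fact you never state, and Step~4 is a list of contingencies rather than a proof. The paper argues by induction on updates, generalizing over endpoints. At birth, a variable producer with head $z$ has the singleton seed $\{z\}$, so its complete family, promotions included, is exactly the signed cone of $z$. Every older live history with fixed head $z$ still retains the literal $z$. Transitivity therefore gives coverage regardless of occurrence labels, and equal phases give equal cones. After birth, common inverse preserves the coverage by exactness, activation preserves it by the typed checkpoint erasure of Lemma~\ref{lem:source-order}, and retirement only removes observations.

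Your Step~2 instead transplants the replay identity $S_{b'}\sigma_{b,b'}=S_b$. Lemma~\ref{lem:source-order} provides that identity only for two checkpoints of one occurrence and generation. You then appeal to inverse completeness at each actual child, but a variable-headed producer has no constructor children. Your argument also never reduces the later history's whole-family successes to its seed; in the positive sign those successes include promotions. Your remark that the erasure fixes $z$ points the right way. It becomes an argument only once you use that both seeds are the singleton $\{z\}$ and that the older history retains $z$.

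For the final assertion, you correctly locate the delicate point at the pre-insertion seed guard. Composing the erasure $z\mapsto y$ with those of the descendants also has the right shape. The decisive step is again missing, and it goes as follows.
\begin{itemize}
\item An older $y$-history succeeds at the literal endpoint $y$.
\item At the insertion of $z$ with saved replacement $y$, inverse exactness turns this into success at $z$. This uses either the retained pre-event history or its frozen ordinary seed guard, whose phase $3j<t_z$ holds even when the import is materialized later.
\item A new variable producer at $z$ has the singleton cone of $z$, so transitivity gives coverage.
\item Iterating along saved-variable edges reaches descendants, and later inverses and legal activations preserve the result.
\end{itemize}
The fallbacks you list would not close the gap:
\begin{itemize}
\item the single-hole retraction $(T[\weaksubst])[R/\alpha]=T$ concerns old target members;
\item the pull maps $\pi$ govern nonvariable children;
\item \eqref{eq:pending-nv-certificate} concerns pending roots and only nonvariable endpoints.
\end{itemize}
The lemma needs the implication at every well-formed $Q$, including variable endpoints.
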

\begin{proof}
Induct on updates, generalizing over well-formed endpoints.
At birth a variable producer has singleton seed $\{z\}$; its complete
family, including promotions, has that seed's signed cone.
Every older live history with fixed head $z$ retains $z$.
Hence it covers the newborn, regardless of their
occurrence labels. Two births in the same phase have the same singleton
cone. Common inverse preserves this coverage by exactness; activation
preserves it by the typed checkpoint erasure used in
Lemma~\ref{lem:source-order}. Induction on updates proves
\eqref{eq:row-phase-order}, including agreement at equal phases.

For the path assertion, restrict the saved-substitution replay to the
actual child of a fixed parent occurrence. A surviving constructor keeps
its label until replay restores a variable that supplied the whole child.
Thereafter a change from $y$ to $z$ can only undo an application whose
saved replacement is $y$; this is precisely the edge
$\namedop{parent}(z)=y$. The replay segment consists of the actual updates
between those attachment births, so this insertion lies strictly after
the earlier birth and before the later clone. Further undoing fixes old
tokens or restores a fresh one, and cannot return to an ancestor.
Thus the observations follow one path, possibly skipping tokens with no
live observation, and their intervals do not interleave. A child cannot
contain $z$ before its admission, giving the lower endpoint of each interval.
The child's source key and permanent birth slot identify its row entry;
different parents referring to it reuse this entry.

For the final assertion, an older fixed $y$-history succeeds at the literal
endpoint $y$. At the insertion with replacement $y$, inverse exactness
makes it succeed at $z$. This uses either the retained pre-event history
or its frozen ordinary seed guard; the latter has phase $3j<t_z$ even
when materialized later. A new variable producer at $z$ has the singleton
cone of $z$, so transitivity gives coverage. Iterate this argument along
saved-variable edges. Subsequent inverse and legal activation preserve it;
retirement merely removes observations. Post-event clones have phase
$3j+2>t_z$ and are not used as pre-event witnesses.
\end{proof}

\begin{lemma}[Producer route decomposition]
\label{lem:producer-route-decomposition}
Fix a reachable producer group $\mathcal C$, constructor side $s$, child
cursor, sign $p_s$ and endpoint $Q$. Each live parent $i$ has live actual
child $H_s(i)$; reverse the domain sign and observe universal bodies at
the shared cursor where the new binder has bound $\tytop$. Partition pointers
by variable/nonvariable fixed heads, preserving parent positions. The nonvariable route has one
source group $\mathcal D_s$ and map $\pi_s$ defined on those live parent slots.
The first variable-child failure is $v_s(Q)$, giving
\begin{equation}
 k_s(Q)=\min\bigl(\namedop{pull}_{\pi_s}(\kappa_{\mathcal D_s}^{p_s}(Q)),v_s(Q)\bigr).
 \label{eq:producer-route-decomposition}
\end{equation}
Missing failures contribute $\histcap{\mathcal C}$. After preparation, each
side uses at most one structural request, one profile lookup and one column read.
\end{lemma}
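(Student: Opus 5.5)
The plan is to prove \eqref{eq:producer-route-decomposition} by splitting the live parents of $\mathcal C$ according to the fixed head of their actual child $H_s(i)$. For a live parent $i$, the constructor core succeeds at side $s$ exactly when the child query $\qry{p_s}{\Theta'}{H_s(i)}{Q}$ succeeds. Here $\Theta'$ is the child cursor's context, the sign is reversed on domains, and universal bodies are read at the shared cursor whose new binder has bound $\tytop$. This reduction is constructor inversion, as in the product rules of Section~\ref{sec:symbolic-implementation}. Since the partition into variable and nonvariable heads keeps parent positions, the first failing live parent on side $s$ is the minimum of the first failure in each route. Missing failures contribute $\histcap{\mathcal C}$ in both routes, so the outer $\min$ is exact.

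\emph{Nonvariable route.} First I would show that all nonvariable live children on side $s$ lie in a single source group $\mathcal D_s$. Under Lemma~\ref{lem:producer-groups}, a nonvariable child is determined by its occurrence label, role, exposure generation and cursor. Two live parents in $\mathcal C$ share these data, because producers are keyed by occurrence and generation and children inherit the parent's labelled constructor child. The map $\pi_s$ sends each such parent slot to its child's birth slot. I would check that $\pi_s$ is nondecreasing on its domain: older parents have older children, since births are appended in driver order and retirement never renumbers. Then $i<\namedop{pull}_{\pi_s}(k)$ holds iff $\pi_s(i)<k$ for live $i$, exactly as in the import case of \eqref{eq:cutoff-pull}. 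Applying Lemma~\ref{lem:source-order} to $\mathcal D_s$ gives prefix successes with cutoff $\kappa^{p_s}_{\mathcal D_s}(Q)$. Composing the two facts yields the first nonvariable failure $\namedop{pull}_{\pi_s}(\kappa_{\mathcal D_s}^{p_s}(Q))$.

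\emph{Variable route.} This is the step I expect to be the main obstacle, because variable children may span several rows with different occurrence labels. I would invoke Lemma~\ref{lem:row-replay-compatibility}. It places each live variable child in exactly one prepared row, on the replacement path ending at the view's last live token $z_*$, within the phase interval of its token. By \eqref{eq:row-phase-order} together with the parent-coverage clause, failures along this path are monotone in phase. So the first failing phase is computed by $\vartheta(z_*)$: the recursion over $\namedop{parent}$ inherits an earlier failure precisely when that failure precedes $t_z$, and otherwise takes $\tau(z)$. Reading the view at this phase returns the first live parent whose variable child reaches it. That parent is the first variable-child failure, or capacity if there is none, and this value is $v_s(Q)$.

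Finally, the cost clause follows from preparation. The structural request is the single cached query for $\kappa_{\mathcal D_s}^{p_s}(Q)$, the $\namedop{pull}$ threshold is one compiled profile lookup, and $v_s(Q)$ is one shared-column read. No other requests are needed.
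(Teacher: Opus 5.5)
Your overall structure matches the paper's: split live parents by the fixed head of the actual side-$s$ child, obtain the nonvariable contribution as $\namedop{pull}_{\pi_s}$ of the source cutoff, obtain the variable contribution from Lemma~\ref{lem:row-replay-compatibility} and the $\vartheta$ recursion, and take the minimum. Your variable route is the paper's argument in compressed form.

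\textbf{The gap is in the nonvariable route.} You need all nonvariable live children to lie in one source group $\mathcal D_s$. You justify this by saying that parents in $\mathcal C$ share occurrence, role and generation, and that ``children inherit the parent's labelled constructor child''. That inheritance does not hold in general. The side-$s$ component of the parent occurrence may be a variable. At earlier checkpoints it is replaced by inserted syntax from a saved argument, labelled by that argument's occurrence. At later checkpoints it is restored. Sharing the parent's key therefore does not by itself determine the provenance of the actual child.

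The missing idea is the replay identity from the proof of Lemma~\ref{lem:source-order}. For live checkpoints $b<b'$ of one occurrence and generation, $S_{b'}\sigma_{b,b'}=S_b$, and substitution keeps surviving constructor labels. Consequences:
\begin{itemize}
\item A nonvariable child at a later parent maps to the earlier parent's child, with the same provenance.
\item Backward replay changes provenance only by restoring the variable that supplied the entire child. That moves the parent into the variable route.
\item Exposure and projection act only at the synthesis root, so they cannot introduce another nonvariable source below a surviving parent.
\end{itemize}
Together these give the single group $\mathcal D_s$ and the map $\pi_s$.

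\textbf{The monotonicity claim is unsupported and unnecessary.} You assert that $\pi_s$ is monotone (``older parents have older children'') without proof, and the paper claims this only for import maps. You do not need it. By definition, $\namedop{pull}_{\pi_s}(\kappa)$ is the first live parent whose child position reaches $\kappa$. By the prefix property of $\mathcal D_s$, that is exactly the first failing nonvariable child. Prefixness of the combined $k_s(Q)$ then comes from Lemma~\ref{lem:source-order} applied to $\mathcal C$, as in the paper's final step.

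\textbf{Make the variable route's hidden facts explicit.} Your ``monotone in phase'' sentence relies on the following:
\begin{itemize}
\item Every live $H\in\mathcal R_z$ satisfies $t_z\le b(H)$, so a finite $\tau(z)$ is at least $t_z$.
\item Earlier path observations have phase below $t_z$.
\item The case $\vartheta(y)<t_z$ is handled by the blocking clause.
\item The order $3j<t_z<3j+2$ separates frozen seed guards from post-event clones.
\end{itemize}
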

\begin{proof}
Backward replay preserves a nonvariable child's provenance until it restores
the variable that supplied that entire child. Exposure and projection occur
at the synthesis root, so cannot introduce another nonvariable source below
this surviving parent. This gives $\mathcal D_s$ and $\pi_s$.

By Lemma~\ref{lem:row-replay-compatibility}, merging row failures gives
the first failing phase across $\mathcal R_z$: a live $z$-history succeeds
exactly when its phase is below $\tau(z)$. In particular, a finite minimum
has a live failing witness; $\constructor{None}$ means all current entries
succeed. Induct along a view's replacement path. If
$\vartheta(y)<t_z$, its failing witness is at $y$ or an ancestor whose
failure already blocked the intervening edge. The lemma's blocking
implication makes all observations below $z$ fail.
Retain that threshold. Otherwise all preceding path observations succeed,
since their phases are below $t_z$; the new observations use $\tau(z)$,
whose finite value is at least $t_z$. This proves
\begin{equation}
 \begin{aligned}
  &\eta_i\in\mathcal R_{z_i},\quad
    \eta_i[a_i]=H_s(i),\quad b_i=b(H_s(i)),\\
  &\qry{p_s}{\Theta}{H_s(i)}{Q}
       \iff b_i<\vartheta(z_*).
 \end{aligned}
 \label{eq:variable-row-view}
\end{equation}
Here $\eta_i$ and $a_i$ are the actual child's row and permanent birth slot.
The strict order $3j<t_z<3j+2$ fixes the same-pass boundary. Every actual pointer
occurs in one mask, so component and side minima give the first failing
constructor, a prefix cutoff by Lemma~\ref{lem:source-order}.
\end{proof}
Whole-family unions take cutoff maxima. Each positive variable head retains
at most two promotion witnesses, one per bound version; compile their profiles
from actual references and applicable versions.

Seal $\birthrecord{F}{q}=\qry{+}{\Theta}{F}{R_q}$ at application-token
insertion, including same-pass guards; later or absent records are false.
Inverse exactness, retraction and \eqref{eq:pending-full-certificate}
preserve nonself answers. Nonself roots are application tokens, so
histories with arrow, universal or $\tytop$ heads use their record;
fixed-variable histories use
\[
 \qry{+}{\Theta}{F_z}{q}\iff
 \birthrecord{F_z}{q}\lor z\subin{\Theta}q.
\]
Negative $\rootholes{F}$ excludes the fixed head and retains old membership.
$\ancestors{\Theta}{q}$ contains $q$ and its variable-bound ancestors,
stopping at a nonvariable bound. Inversion gives
\[
 \qry{-}{\Theta}{F}{q}\iff\qry{-}{\Theta}{\histcore{F}}{q}\lor
 (\ancestors{\Theta}{q}\cap\rootholes{F}\ne\varnothing).
\]
Using $y\in\rootholes{F_i}\iff i<\mu_{\mathcal C}(y)$ on live births,
the hole cutoff is $\max_{y\in\ancestors{\Theta}{q}}\mu_{\mathcal C}(y)$
(missing entries zero). Only the activated token's row changes, since
outer bounds cannot mention it and inner pending bounds are $\tytop$;
each row has at most two versions.

For attached group $\mathcal C$, let $a_{\mathcal C}(q)$ count births before
$q$'s insertion; transport $R_q$ and $\Theta(q)$ to the query scope.
With $\chi_i(Q)=\qry{+}{\Theta}{F_i}{Q}$ and
$\psi_i(Q)=\qry{-}{\Theta}{F_i}{Q}$, use
\begin{align*}
 \chi_i(q)&\iff\namedop{seedvar}_i(q)\lor
       (i<a_{\mathcal C}(q)\land\chi_i(R_q)),\\
 \psi_i(q)&\iff\namedop{ownvar}_i(q)\lor\psi_i(\Theta(q))\lor
       (i<a_{\mathcal C}(q)\land\psi_i(R_q)).
\end{align*}
Here $\namedop{seedvar}$ tests $z<:q$ for fixed head $z$, and
$\namedop{ownvar}$ tests equality; both are false for nonvariable heads.
Omit absent replacements and zero caps. Inverse exactness at the live owner's
captured placement gives only earlier histories the replacement branch;
negative inversion gives the bound branch. Compile caps by minimum and
literal tests by greatest successful position plus one; the combined result
is a prefix by source order.

Order tokens by admission, declarations before pending tokens. Bounds
and replacements capture earlier tokens, so endpoint expansion along
arrow children, universal domains, bounds, replacements and retained $\tytop$
witnesses is acyclic when stopped at universal bodies. Let $e_c(Q)$ be
its height and $H_*$ the maximum level, including prepared sources.
Set $j=0,1,2$ for imports, variable views and producer or checking rows;
$h_{\rm src}(\eta)=e_c(S_\eta)$ for imported seed $S_\eta$, zero otherwise;
and $a=2,1,0$ for whole, fixed-variable helper and core modes. Use rank
\[
 \mathfrak r(\eta,Q)=
 (1,H_*-\namedop{level}(c),e_c(Q),j(\eta),h_{\rm src}(\eta),a)
\]
and $(0,0,\tyweight{\Theta}{S}+\tyweight{\Theta}{T},0,0,0)$ for ordinary
subtyping, with the weights of Section~\ref{sec:termination}.
Body calls decrease $H_*-\namedop{level}(c)$, endpoint expansions $e_c$,
dispatch $j$, imported promotion $h_{\rm src}$, and whole-to-core calls $a$.
Ordinary calls decrease the leading tag and then weight. Earlier coordinates stay fixed.
Rank induction proves termination and query exactness.

\subsection{Cumulative work and the quartic bound}
\label{sec:cost}

The \emph{source catalogue} has an instruction for $\tytop$ at rank zero and one
instruction per original type occurrence, including annotations, explicit
arguments, context declarations and the target. Compile older declarations
first and each type in postorder; rank is the instruction's catalogue index.
Arrow and universal instructions point
to their children; a variable records its written index, lexical depth and
the earlier instruction for its bound. A variable opened for
\ruleref{All-Top} points to rank zero. Thus every dependency has smaller
rank, and the catalogue size $b_{\rm cat}$ is at most $m$.
An imported instance retains its instruction, cursor, captured tokens and
sign; the capture interprets free variables and the cursor supplies local
binders. One pass memoizes this complete key.

\begin{lemma}[Source inventory and scope advance]
\label{lem:source-inventory}
The catalogue is closed under child and bound requests; saved replacements
refer to its original argument instructions. A pass starting at maximum
live level $h$ imports $O(m(h+m))$ heads, with $O(m)$ per level. The $4m$
comparison cursors reserved by the driver suffice for its construction
and queries. After $j$ spine updates, maximum live level is $O(m(j+1))$.
\end{lemma}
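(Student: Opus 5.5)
The plan is to prove the four assertions of Lemma~\ref{lem:source-inventory} in order, closure first, since the counting bounds rest on it. For closure, I inspect how the catalogue is built. Every original type occurrence is compiled in postorder, so its arrow and universal children are earlier instructions. A variable instruction records the instruction for its declared bound, and that bound is itself an original occurrence in an older declaration. Variables opened for \ruleref{All-Top} point to rank zero, the $\tytop$ instruction. Hence child and bound requests never leave the catalogue. Saved replacements are the explicit type arguments recorded by spine production (Lemma~\ref{lem:spine-production}), and these are original occurrences with their own instructions. No substituted or synthesized type is ever catalogued, since Lemma~\ref{lem:producer-groups} shows synthesized argument types are never inserted.

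For the import count, I use the memoization key (instruction, cursor, captured tokens, sign). Within one pass, the captured tokens of an import are determined by its instruction and the cursor at which the parent was imported, because the capture is fixed by the lexical placement of the importing history. So the number of distinct heads per cursor is at most $2b_{\rm cat}=O(m)$. The cursors used lie between the live levels already present and the deepest level the pass can reach. A pass starts at level at most $h$. Child requests into universal bodies descend one cursor per binder, and a chain of such requests follows one original type, so it has length at most its height, which is $O(m)$. Bound requests point outward. Thus the pass touches $O(h+m)$ cursors with $O(m)$ heads each, giving $O(m(h+m))$.

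For sufficiency of the $4m$ reserved cursors, I bound the extra depth a single operation needs beyond the current maximum level. Import closure descends at most the height of one catalogue type, which is at most $m$. The query rank of \S\ref{sec:compiled-cutoffs} decreases $H_*-\namedop{level}(c)$ only on body calls. Those calls are bounded by endpoint and target binder depth, each $O(m)$ by Lemma~\ref{lem:producer-groups}. Inverse insertion adds one token. Summing these with explicit constants $m+m+m+1\le 4m$ gives the claim. I expect this to be \emph{the main obstacle}, because I must check that the rank's body-call descents from imports, producers and checking rows do not compound. My approach would be to argue that each descent is charged to a binder of a single scoped type of height $O(m)$.

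Finally, I prove the level bound by induction on $j$. Initialization gives level at most the target's binder depth plus $4m$, which is $O(m)$. Each update reserves at most $4m$ new cursors and inserts at most one token, so the level grows by $O(m)$ per update. Retirement and activation never increase it. This gives $O(m(j+1))$.
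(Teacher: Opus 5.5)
Your outline follows the paper's: closure from the catalogue's construction, $O(m)$ keys per cursor over $O(h+m)$ cursors, an explicit depth budget of $4m$, and induction on updates. The depth argument, however, has a genuine gap, and it sits exactly at the obstacle you flag. An import chain does not stay inside one original type. A universal-body request advances the cursor. A variable in that body may refer to a context declaration, and the bound lookup then jumps, at the same cursor, to that declaration's bound. Body requests inside that bound advance the cursor again. So ``a chain follows one original type, so its length is at most its height'' is false, and charging each descent to a binder of a single scoped type cannot work. The missing idea is the potential $\namedop{level}(c)+\namedop{rank}(r)$. Arrow children, universal domains and bound lookups keep the cursor, and a universal body advances it by one. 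Every transition strictly decreases catalogue rank, because of postorder compilation with older declarations first and \ruleref{All-Top} variables pointing to rank zero. So the potential never increases along an import dependency, which bounds imported levels by $h+b_{\rm cat}\le h+m$ however many bounds are followed. Queries compound the same way. Via \ruleref{A-Promote}, a comparison can replace a lexical variable by its declared bound and descend into that bound's universal bodies, so descent is not bounded by the endpoints' binder depth. The paper bounds it by $(n+1)+n=2m-1$: endpoint syntax plus the ordered lexical-bound catalogue, with local comparison binders bounded by $\tytop$.

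Because the reservation is exactly $4m$, you also need explicit constants, not the $O(m)$ heights of Lemma~\ref{lem:producer-groups}. Your tally $m+m+m+1$ also omits two terms: the promotions that complete the new canonical domain, and the lexical-bound part of query descent. The paper's count has two parts. Constructed levels are at most $h+n+2+b_{\rm cat}\le h+2m+1$: old-side imports add at most $b_{\rm cat}$, insertion adds one, and the canonical domain adds at most $n$ plus $b_{\rm cat}$ for promotions, taking the maximum over old and new material. Query descent then adds at most $2m-1$, for $h+4m$ in total.

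Separately, you justify one capture per instruction and cursor by ``the lexical placement of the importing history.'' That suggests the key could vary with the importing parent. What you need is independence from the importing path:
\begin{itemize}
\item free tokens come from the saved source environment, with bound lookups restoring the bound's recorded capture and replacements using their saved capture;
\item local tokens come from the cursor's prefix;
\item placements invalidated by crossing an old internal binder are discarded, and in deeper scopes no such crossing occurs.
\end{itemize}
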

\begin{proof}
At an arrow, both children keep the cursor; at a universal, only the body
advances it by one. Bound lookup keeps the comparison cursor and restores
the bound's recorded capture. Each transition decreases instruction rank.
Consequently $\namedop{level}(c)+\namedop{rank}(r)$ never increases along
an import dependency, bounding imported levels by $h+b_{\rm cat}$.
A replacement starts from an existing catalogue root at its saved capture.

At a fixed cursor, an instruction has one live captured placement in a
pass. Its free tokens come from the saved source environment and its local
tokens from that cursor's prefix. A crossing of an old internal binder
discards paths using that placement: imports are fresh, hole edges do not
point back to an older import, and an imported constructor cannot become
the retained negative checking frame. In the deeper scopes no lexical
crossing occurs, so the placement stays fixed. Children, bound lookup and
replacement imports preserve this property with the captures just
specified. Thus the $O(m)$ instructions give $O(m)$ import keys per level,
including both signs, and the stated per-pass count.

For reservation, an old-side import adds at most $b_{\rm cat}$ levels;
insertion adds one. A canonical domain has binder depth at most $n$,
and completing its promotions adds at most $b_{\rm cat}$ more.
Taking the maximum over old material and the new domain bounds the
constructed levels by $h+n+2+b_{\rm cat}\le h+2m+1$.
An ordinary query from level $c$ descends at most $2n+1=2m-1$ further
levels: its endpoint syntax contributes at most $n+1$, and the ordered
lexical-bound catalogue contributes at most $n$; local comparison binders
have bound $\tytop$. Child requests decrease the remaining depth by one
whenever they advance the cursor. Thus construction followed by query
descent fits within $h+4m$. Initially the target depth and $4m$ padding
give the same room; a terminal query uses the room left by the last
update. Each step adds only $O(m)$ levels, proving the height claim.
\end{proof}

Call the original and canonical levels, padded by all remaining insertions,
shallow. Lemma~\ref{lem:producer-groups} bounds this band by $O(m)$;
a cursor's classification is fixed at admission. Deeper cursors contain
only imports, whose source-local tokens remain beyond every future lexical
boundary. Summing Lemma~\ref{lem:source-inventory} over the spine gives
\begin{align*}
 \#\text{cursors and tokens}&=O(m\ell),&
 \#\text{heads}&=O(m^2\ell^2),\\
 \#\text{shallow heads}&=O(m^2\ell),&
 \#\text{deep cohorts}&=O(m^2\ell).
\end{align*}
The last count uses the $O(m)$ fixed instruction/capture keys at each of
$O(m\ell)$ deep cursors across all passes; further matching births append
to those cohorts.
Attach deep events $[0,k)$ at $k-1$ in a birth chain reachable from earlier
births: $g\le\ell$ events use $O(\histcap{\mathcal C}+g)$ storage.
Shallow masks use one slot per head per stage. Links target live or fresh
objects; retirement prevents resurrection. Each node and edge has one allocation
and live interval, giving $O(m^2\ell^2)$ amortized graph work by reference counting,
including discarded records.

\begin{lemma}[Deep-cache preservation]
\label{lem:deep-cache-preservation}
Let a deep imported cohort have seed $S$, current capacity $B$, and an
accurate cutoff for endpoint $Q$ in whole-family or core mode. A persistent
key fixes the cohort, cursor, sign, mode, captured endpoint and applicable
bound versions. Its source-local tokens remain beyond all future lexical
boundaries, and its free lexical tokens have final bounds. Publication tests
$S\subin{\Theta_c}Q$ for positive sign and $Q\subin{\Theta_c}S$ for negative sign.
If this seed test succeeds, the tag $\infty$ is valid for future births.
If it fails, the current finite cutoff $k\le B$ remains valid.
These tags survive reservation, common inverse, frame operations and
activation while their cursor and captures satisfy the stated conditions.
An all-success answer without a seed certificate is snapshot-local.
\end{lemma}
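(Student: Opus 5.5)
The plan is to show that the cutoff is stable under every later operation by a single cover argument for each seed-test outcome, using the source-order prefix property of Lemma~\ref{lem:source-order}. Two observations come first. The first concerns deep imported cohorts: every history in the cohort is a copy of the same seed $S$ at the same cursor with the same captures. Such a copy retains $S$ as a named root alternative, so by the sign of the cohort each member satisfies $\qry{p}{\Theta_c}{F_i}{Q}$ whenever the seed comparison holds. The second concerns the comparison context $\Theta_c$ itself. The key fixes the captured endpoint and the applicable bound versions. Its source-local tokens lie beyond every future lexical boundary, so no activation ever changes their bound $\tytop$, and its free lexical tokens already carry final bounds. Hence any comparison expressed through the key has the same truth value in every later completed snapshot that still satisfies the stated conditions.

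In the success case, take $S\subin{\Theta_c}Q$ for positive sign, or its dual for negative sign. Every history born later into this cohort is again a copy of $S$, so it succeeds at $Q$ by the first observation; this holds in whole-family mode and also in core mode, because the seed root belongs to the core. Transitivity and the second observation keep this true in every future snapshot. The tag $\infty$ is therefore valid for future births.

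In the failure case, let $k\le B$ be the accurate current cutoff. For live slots $i<k$, success persists. The operations that can intervene are reservation, common inverse, frame consumption and activation. Reservation only adds deeper tokens with bound $\tytop$ and leaves the key's context unchanged. Common inverse preserves answers at fixed endpoints by exactness, Theorem~\ref{thm:inverse}, since the same inverse is applied to every member. Activation does not touch the key's tokens by the second observation, and frame operations only retire regions. For slots $i\ge k$, failure persists for the same reasons. For new births $i\ge B\ge k$ the prefix property forbids success at $i$ while slot $k$ fails: if slot $k$ is retired, take the earliest live failure, which exists whenever the cached value is used. So the finite cutoff $k$ stays correct.

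The main obstacle is showing that common inverse preserves the answer at the fixed captured endpoint. A history changes under inverse while $Q$ does not. I would argue through the retraction $(T[\weaksubst])[R/\alpha]=T$ and the single-hole coverage of Section~\ref{app:single-hole-refinement}: old target members, here the seed copy, are retained, and the hole guard is evaluated in the frozen pre-insertion snapshot, so membership at endpoints free of the new hole is unchanged.

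Finally, an all-success answer without a seed certificate only witnesses success on the births present now. A later birth could be a copy reached through a different inverse history and fail, so that answer must stay snapshot-local.
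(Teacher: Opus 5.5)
Your success case and your handling of reservation and common inverse agree with the paper. The failure case, however, has two gaps.

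\textbf{New births are not tied to the seed test.} The only link you establish between a birth and the seed test runs one way: a copy retains $S$, so seed success implies member success. For failure you therefore rely on the prefix property of Lemma~\ref{lem:source-order} against a live failing slot, and that does not cover the statement.
\begin{itemize}
\item When $k=B$, every current live slot succeeds and there is no failing slot at all. This is exactly the case in which the publication seed test is supposed to make $B$ a persistent cutoff.
\item The lemma must also survive retirement of the first failing birth, after which no live failure need remain. Your assertion that one ``exists whenever the cached value is used'' is unjustified.
\end{itemize}
The missing idea is the converse for births after the entry was computed. Every later matching birth is a copy of $S$, or its inverse in a pass whose fresh token is absent from the already fixed $Q$. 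By inverse exactness and the retraction $Q[R/\alpha]=Q$ (childwise in core mode), its answer at $Q$ is \emph{exactly} the seed test. Seed failure alone then forces every position $i\ge B\ge k$ to fail, whichever old slots are live. Together with retained old answers, this keeps $i<k$ exact. The same fact is the real reason an uncertified all-success answer is snapshot-local: future births answer the unevaluated seed test. The issue is not a ``different inverse history''.

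\textbf{Activation is dismissed for the wrong reason.} You set activation aside because it does not touch the key's tokens. But the cohort's histories also contain later application tokens: hole alternatives added by common inverse, which occur in $\Theta_c$ without being captured by the key. Activation narrows such a token's bound from $\tytop$ to $A$. Narrowing preserves successes but can create new witnesses, so it can turn a failure at some $i\ge k$ into a success. Hence ``failure persists for the same reasons'' does not follow.

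The paper closes this step as follows. It erases these tokens by their saved replacements. The saved argument guard and the successful domain test make this erasure typed (Lemma~\ref{lem:pending-owner-preservation}). Since the erasure fixes $S$ and $Q$, substitution reflects any new witness back to the earlier query, while narrowing gives the converse. Without this argument you do not get preservation of the retained answers under activation, only of the ordinary seed test.
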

\begin{proof}
Transport $S$ and $Q$ by their stable captured tokens.
Reservation only appends deeper binders with bound $\tytop$. At inverse,
the retraction replacing the fresh token fixes these old endpoints;
inverse exactness therefore preserves the answers of retained births.
For core mode use the same identities at the actual constructor children;
a negative core at a variable endpoint also retains that endpoint's bound
version. Frame operations leave retained imported histories unchanged.
Activation cannot cross a deep source's local binders or change a captured
free bound. Later application tokens in the evolving family may activate;
erase them by their saved replacements. The successful argument and domain
guards make this erasure typed by
Lemma~\ref{lem:pending-owner-preservation}. Since it fixes $S$ and $Q$,
substitution reflects any new witness to the earlier query, while narrowing
preserves the converse. Thus activation preserves both retained answers
and the ordinary seed test.

Every new matching birth is a copy of $S$, or its inverse in a pass whose
fresh token is absent from the old $Q$. Its answer at $Q$ is exactly the
seed test, by inverse exactness (childwise for cores). If that test succeeds,
every old birth retains the seed and every future birth succeeds, justifying
$\infty$. If it fails, each new position $i\ge B\ge k$ fails. Retained old
positions keep their answers, so the same predicate $i<k$ remains exact.
The predicate continues to hold when the first failing birth retires:
slots are permanent, it concerns only live positions, and the seed
certificate still determines future births.

Success at every current live position gives cutoff $B$.
Publication must perform the seed test: success certifies $\infty$;
failure makes $B$ a persistent finite cutoff whose future births fail.
Without that test, current capacity or $\constructor{None}$ cannot be reused
after growth. Retirement removes entries whose cursor or capture is no
longer admissible; a changed bound version has a different key. Such entries
are never revived under a reused numerical level. Ordinary deep comparisons
obey the same token and bound stability, without the birth cases.
\end{proof}

\begin{lemma}[Cumulative comparison capacity]
\label{lem:comparison-capacity}
Comparison regions initialize $O(m^3\ell)$ cells per spine; after profile
and shared-column preparation, each consumer uses $O(1)$ local work and
makes $O(1)$ recursive requests.
\end{lemma}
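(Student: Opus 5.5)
We prove the two claims separately: first a bound on the cells initialized, then the $O(1)$ per-consumer cost after preparation.

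\emph{Initialization count.} Comparison regions are allocated only in three places: store initialization, the $4m$-cursor reservations before each spine update, and snapshot-local tables built during preparation. We bound each against the inventory after Lemma~\ref{lem:source-inventory}: $O(m\ell)$ cursors and tokens, $O(m^2\ell)$ shallow heads, and $O(m^2\ell)$ deep cohorts per spine.

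\begin{itemize}
 \item \emph{Stores.} Each store initializes $16m^2+1$ headers. A spine allocates one store at initialization and one per type-argument check it discharges. Type-argument checks are bounded by the spine's nodes, so stores contribute $O(m^2\ell)$.
 \item \emph{Reservations.} A reservation costs $8m+4m(2t+d)+6m(4m-1)$. With $t,d=O(m(j+1))$ after $j$ updates, this is $O(m^2(j+1))$ per update. Summing over $j<\ell$ gives $O(m^2\ell^2)$, which is $O(m^3\ell)$ because $\ell\le n<m$.
 \item \emph{Query tables.} A query key is (row, cursor, sign, endpoint, mode). Endpoints at a cursor come from the catalogue instantiated at that cursor: $O(m)$ instructions, each with one live captured placement per pass.
 \begin{itemize}
  \item Shallow rows: at each of $O(m)$ shallow cursors, Lemma~\ref{lem:producer-groups} gives $O(m)$ producer groups. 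Pairing these with $O(m)$ endpoints gives $O(m^2)$ cells per shallow cursor per stage. Over $\ell$ stages and $O(m)$ shallow cursors this is $O(m^3\ell)$.
  \item Deep cohorts: at each deep cursor there are $O(m)$ cohorts, each keyed with $O(m)$ endpoints. Over $O(m\ell)$ deep cursors this is $O(m^3\ell)$.
 \end{itemize}
\end{itemize}

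\emph{Deep entries are never reinitialized.} Lemma~\ref{lem:deep-cache-preservation} lets a deep entry persist across snapshots once its seed test is published. Each deep cell is therefore initialized once per spine rather than once per stage. Shared columns are keyed by (cursor, sign, endpoint), giving $O(m\ell)\cdot O(m)$ cells, which is within the bound.

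\emph{Per-consumer cost.} After preparation, a consumer is one compiled equation of Section~\ref{sec:compiled-cutoffs}.

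\begin{itemize}
 \item \emph{Constructor cutoffs.} By \eqref{eq:cutoff-pull}, these take a minimum of two child cutoffs, each read through one precompiled $\namedop{pull}$ profile.
 \item \emph{Producer sides.} By Lemma~\ref{lem:producer-route-decomposition}, each side uses at most one structural request, one profile lookup and one column read.
 \item \emph{Fixed-variable and hole equations.} The $\chi_i,\psi_i$ equations have a constant number of disjuncts and caps. Hole cutoffs use the row of the ancestor set. Only the activated token's row changes, and each row has at most two versions, so its precompiled maximum is an $O(1)$ lookup.
 \item \emph{Recursion and unions.} Recursive requests reach strictly smaller ranks, so they are counted at their own consumer. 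Whole-family unions take a maximum over a constant number of branches, one per head kind and promotion version.
\end{itemize}

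\emph{Main obstacle.} The delicate step is the endpoint count per cursor. We must show that shallow cursors see only $O(m)$ distinct endpoints per stage, even though imports accumulate across passes. We would argue this from the one-placement-per-instruction property in the proof of Lemma~\ref{lem:source-inventory}. Replacement endpoints reuse original argument instructions, so they add no new keys.
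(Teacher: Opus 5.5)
Your decomposition matches the paper's: shallow cells are rebuilt at each of $O(\ell)$ snapshots, deep cells are paid once via Lemma~\ref{lem:deep-cache-preservation}, and consumer work is constant by the compiled equations and Lemma~\ref{lem:producer-route-decomposition}. Your per-consumer breakdown and deep-persistence step are fine. The gap is in the inventory of cells.

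\textbf{Ordinary comparison cells are never counted.} You count only entries keyed by (row, cursor, sign, endpoint, mode), plus shared columns. But comparison regions also cache \emph{ordinary} subtype comparisons between two types at a cursor:
\begin{itemize}
  \item the seed tests $S\subin{\Theta_c}Q$ performed at publication;
  \item the literal tests $z\subin{\Theta}q$ in the fixed-variable equations;
  \item the primitive queries issued by shared-column misses.
\end{itemize}
These are not indexed by rows, so your row-times-endpoint count does not bound them. The deep ones also persist across snapshots, so they are not among your snapshot-local tables either. The paper bounds them as a separate component, ``ordinary rectangles'': $O(m)$ sources on each side at each of $O(m\ell)$ cursors, giving $O(m^3\ell)$. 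Deep ordinary entries survive under the same token and bound stability as in Lemma~\ref{lem:deep-cache-preservation}. To close this you need a fact your proposal never states: the ordinary \emph{sources} at a cursor (import seeds and fixed tokens) number $O(m)$, not just the endpoints.

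\textbf{The shallow row count is incomplete.} You use only the $O(m)$ producer groups per cursor from Lemma~\ref{lem:producer-groups}. Shallow cursors also hold import cohorts, $O(m)$ per cursor by Lemma~\ref{lem:source-inventory}. They also hold checking rows for frames and target histories, $O(m^2)$ per snapshot. The paper includes both, using the fact that negative frames have no bound promotion and so generate no further import rows. Each such row pairs with $O(m)$ endpoints at its cursor, so $O(m^3)$ per snapshot survives. You still have to argue it, and absorb the constant factors for modes and the at most two bound versions per key. Similarly, a shared column is snapshot-local and stores per-token and per-view outputs, not a single value. It still fits in $O(m^3\ell)$, but not by your $O(m\ell)\cdot O(m)$ count.

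\textbf{The endpoint bound needs more than placements.} For the bound you flag as the main obstacle, one live placement per instruction is the right start. The paper also needs closure under children, bounds and replacements to be controlled by source size, intrinsic depth and live owners. In particular, replacement endpoints arise only from live owning tokens.
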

\begin{proof}
A shallow snapshot has $O(m)$ cursors, $O(m)$ groups and import rows per cursor,
$O(m^2)$ frame and target rows and $O(m)$ endpoints per cursor. Imports have one
live captured placement; negative frames have no bound promotion. Source size,
intrinsic depth and live owners bound closure under children, bounds and
replacements. Over $O(\ell)$ snapshots this gives $O(m^3\ell)$ cells.

By Lemma~\ref{lem:deep-cache-preservation}, each eligible deep cell is
initialized and evaluated once across snapshots, even if its witnessing
birth later retires. Lemma~\ref{lem:source-inventory} gives $O(m)$ named
endpoints per deep cursor, with fixed captures. Thus the $O(m^2\ell)$
deep cohorts have $O(m)$ persistent columns each. Ordinary rectangles have
$O(m)$ sources on each side at $O(m\ell)$ cursors and obey the same bound. Constant modes,
bound versions and reserves include unused cells and retired snapshots.
Lemma~\ref{lem:producer-route-decomposition} bounds prepared work and branching;
updates pay for repeated roots and hits.
\end{proof}

Every live import is reached by promotions and constructor edges from an
initial history, checking frame or producer at level $O(m)$
(Lemma~\ref{lem:producer-groups} and the input bound). Promotion keeps the
cursor. Along imported dependencies, level plus catalogue rank cannot increase,
and rank is at most $m$ (Lemma~\ref{lem:source-inventory}). Hence live imports
also have level $O(m)$. With $O(m)$ source keys per cursor and two child views
per producer group, there are $O(m^2)$ active rows.

Closing missing promotions under these dependencies creates $O(m^2)$ imports
per memoized pass. A constant number of passes, one canonical attachment and
one frame add $O(m^2)$ records per update. After $j$ updates, the retained
array has $O(m^2(j+1))$ records, including retired ones. Each record contributes
one group slot and each parent at most two child-view slots; nonempty directories
obey the same bound. Radix grouping and directory construction scan fixed-length,
polynomial-range keys for these slots in $O(m^2(j+1))$ work per snapshot.

Each active row has $O(j+1)$ parent slots and a constant number of child
threshold ranges of that length: a pass contributes at most one birth per
cohort, an attachment one per producer group. With $O(m)$ token/cap entries per
row, rebuilding profiles costs $O(m^2(j+1)+m^3)$ per snapshot, including deep
rows. The cumulative directory and profile work is
\[
 O\!\left(\sum_{j=0}^{\ell}\bigl[m^2(j+1)+m^3\bigr]\right)
   =O(m^2\ell^2+m^3\ell).
\]
By Lemma~\ref{lem:row-replay-compatibility}, the row key and permanent slot
recover every actual variable child; referring parent births create no
additional rows. Lemma~\ref{lem:producer-groups} therefore bounds these
rows and the two views per parent group by $O(m^2)$ per snapshot, each
assigned once to a cursor and sign. Token
widths are $O(m)$ per cursor (source-depth window, lexical tokens and $\ell$
applications), $O(m^2)$ per snapshot. The $O(m)$ endpoint columns cost
$O(m^3\ell)$ numeric work; primitive requests use the counted cache cells.

Owner columns and twice-built ancestor rows across $O(m\ell)$ cursors cost
$O(m^3\ell)$; remapping, scans and requests cost
$O(m^2\ell^2)$. Interning $O(m^2\ell)$ immutable descriptions costs
$O(m^2\ell\log(m+2))$; mutable copies reuse handles.
Canonical production and type-argument setup cost $O(m^3\ell)$ per spine.
Total spine work is $O(m^3\ell+m^2\ell^2)$.

\begin{theorem}[Quartic charged work and RAM realization]
\label{thm:quartic}
For every parsed $I$, the model relation $\namedop{run}$ of
Definition~\ref{def:costed-run} has a run with unique answer and charged work,
accepting exactly when $\typing{\Delta}{\Gamma}{s}{T}$ holds.
One input-independent constant $\runconst$ satisfies
\begin{equation}
 \begin{aligned}
 \inputledger{n}{s}&=n+(n+1)^3\termnodes{s}+(n+1)^2\spinesquares{s},\\
 W(I)&\le\runconst\,\inputledger{n}{s}.
 \end{aligned}
 \label{eq:runtime-ledger}
\end{equation}
Its charged work is $O((n+1)^4)$. With the decoding and numeric representation
argument of Section~\ref{sec:machine-model}, the word-RAM realization takes
$O((b+1)^4)$ time on a bit-string input of length $b$.
\end{theorem}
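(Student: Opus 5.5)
The proof has two parts: correctness (totality, uniqueness, and agreement with typing) and the cost bound, obtained by summing per-spine work over the spine partition~\eqref{eq:spine-partition}.

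\emph{Correctness.} Proceed by strong induction on term size, following the structure of Theorem~\ref{thm:backward}. By Lemma~\ref{lem:spine-production}, production and obligation discharge succeed exactly when canonical synthesis succeeds at each spine root. The driver's four spine operations then implement the checking equations of Section~\ref{sec:backward-equations}, using the single-hole inverse. Its exactness under the query interface comes from Lemmas~\ref{lem:pending-owner-preservation}, \ref{lem:source-order}, \ref{lem:row-replay-compatibility} and~\ref{lem:producer-route-decomposition}. Lemma~\ref{lem:deep-cache-preservation} shows that persisted cache entries remain exact. Consequently every signed query answered from cutoffs equals the semantic query, and Theorem~\ref{thm:backward} gives acceptance exactly when $\typing{\Delta}{\Gamma}{s}{T}$. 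Termination follows from the rank $\mathfrak r$ of Section~\ref{sec:compiled-cutoffs} together with the decrease of spine length at each successful update. Uniqueness of the answer and of the charged work holds because the driver is deterministic: obligation order, query order and store numbering are all fixed. Hence $\namedop{run}$ is functional.

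\emph{Cost.} For one spine of length $\ell$, add the counts established above:
\begin{itemize}
  \item reservation costs $O(m\cdot m\ell)$ per update, hence $O(m^2\ell^2)$ in total;
  \item store headers cost $O(m^2)$ each;
  \item directory and profile work is $O(m^2\ell^2+m^3\ell)$;
  \item comparison cells number $O(m^3\ell)$ by Lemma~\ref{lem:comparison-capacity}, each consumer doing $O(1)$ local work;
  \item graph work is $O(m^2\ell^2)$ by reference counting;
  \item interning costs $O(m^2\ell\log m)\subseteq O(m^3\ell)$.
\end{itemize}
Canonical production and type-argument setup add $O(m^3\ell)$. Together these give spine work $O(m^3\ell+m^2\ell^2)$.

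Each off-spine argument is one recursive obligation, with the same $m$ but its own spine. By Lemma~\ref{lem:spine-production} the executed spines are a subset of the syntactic partition, so summing over spines gives $\sum_i(m^3\ell_i+m^2\ell_i^2)=m^3\termnodes{s}+m^2\spinesquares{s}$. Adding validation ($O(n)$) and the obligation-list traversal, which is linear, yields $W(I)\le\runconst\,\inputledger{n}{s}$. Since $\termnodes{s}\le n$ and $\spinesquares{s}\le n^2$, this gives $O((n+1)^4)$. For the word-RAM statement, compose with the linear decoder of Section~\ref{sec:machine-model} ($n=b$ on success, rejection otherwise). Each charged unit is a constant number of $O(\log(b+2))$-bit word operations, except the logarithmic dictionary accesses, which are already charged.

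\emph{Main obstacle.} The delicate step is the $O(m^3\ell)$ bound on comparison cells. It requires that deep cells are initialized only once across snapshots, and not once per update, which would give $m^3\ell^2$ and break the ledger's $m^3\termnodes{s}$ term. I would establish it by counting, per cohort, only the persistent keys licensed by Lemma~\ref{lem:deep-cache-preservation}. The ingredients are: $O(m^2\ell)$ deep cohorts, each with $O(m)$ endpoints fixed by Lemma~\ref{lem:source-inventory}, and shallow snapshots contributing $O(m^3)$ cells each over $O(\ell)$ snapshots. Misses must be charged to cell creation and hits to the update that issues them.
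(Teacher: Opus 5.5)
Your proposal follows the paper's proof. Correctness comes from Lemma~\ref{lem:spine-production}, exact compiled queries and Theorem~\ref{thm:backward}, with uniqueness from determinism. The bound comes from summing the per-spine work $O(m^3\ell+m^2\ell^2)$ over the spine partition~\eqref{eq:spine-partition}, with the once-per-deep-cell count of Lemma~\ref{lem:comparison-capacity} as the crux. The paper is more explicit in two places: it runs a simultaneous induction on completed updates within each spine, where query-rank induction at the current snapshot proves the guards accurate and so re-establishes the invariants of the lemmas you cite at the successor; and it uses Lemma~\ref{lem:source-inventory} to show that each $4m$ reservation supplies enough scope room. You instead cite those lemmas as standalone facts and invoke the rank only for termination.
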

\begin{proof}
Induct on term size for recursive obligations and on completed updates
within each spine. Initialization establishes the history invariants and
an empty cache. At an established snapshot, query-rank induction proves
the current guards accurate. Use those witnesses to establish pending-owner
coverage, replay compatibility and cache preservation at the successor.
Reservation preserves the old state and supplies scope room by
Lemma~\ref{lem:source-inventory}. Thus every reached query and update
terminates; fixed operations and consecutive stores give uniqueness.
Lemma~\ref{lem:spine-production} replaces repeated canonical calls, and
exact queries and frames implement Theorem~\ref{thm:backward}, giving
typing equivalence. Summing spine work using \eqref{eq:spine-partition},
with validation and setup charged once, gives \eqref{eq:runtime-ledger}
and the quartic charged-work bound. Decoding adds $O(b)$ work; on success
$n=b$, giving the stated word-RAM bound.
\end{proof}

\end{document}